\documentclass[aps,pra,reprint,groupedaddress,nofootinbib,nopreprintnumbers]{revtex4-2}
\usepackage{amsfonts}
\usepackage{amsmath}
\usepackage{amssymb}
\usepackage{amsthm}
\usepackage{mathtools}
\usepackage{braket}
\usepackage{stackrel}
\usepackage{tensor}
\usepackage{cancel}
\usepackage{enumerate}
\usepackage{esint}
\usepackage{graphicx}
\usepackage{subcaption}
\usepackage{hyperref}
\usepackage{color}

\usepackage[capitalise]{cleveref}

\newcommand{\tr}{\text{tr}}
\newcommand{\sgn}{\text{sgn}}
\newcommand{\ammaG}{\text{\reflectbox{$\Gamma$}}}
\renewcommand{\Re}{\text{Re}}
\renewcommand{\Im}{\text{Im}}

\newtheorem{prop}{Proposition}

\begin{document}

\title{Exact calculation of entanglement negativity for a 1+1D massless scalar field using phase space methods}

\author{Jason Pye}
\email{jason.pye@su.se}
\affiliation{Nordita, Stockholm University and KTH Royal Institute of Technology, Hannes Alfvéns väg 12, SE-106 91 Stockholm, Sweden}

\author{Atharva Hingane}
\email{atharvah@iisc.ac.in}
\affiliation{Department of Physical Sciences, Indian Institute of Science Education and Research (IISER) Mohali, Sector 81, S.A.S. Nagar, Manauli PO 140306, India}
\affiliation{Department of Computational and Data Sciences, Indian Institute of Science, Bangalore 560012, India}

\author{Robert H.~Jonsson}
\email{robert.jonsson@mau.se}
\affiliation{Department of Materials Science and Applied Mathematics, Malmö University, SE-205 06, Malmö, Sweden}
\affiliation{Nordita, Stockholm University and KTH Royal Institute of Technology, Hannes Alfvéns väg 12, SE-106 91 Stockholm, Sweden}

\preprint{NORDITA-2026-064}

\date{29 June 2026}

\begin{abstract}
  Quantum fields exhibit a rich entanglement structure which is still not fully understood. In this work, we study the entanglement structure of the vacuum state of a massless scalar field in (1+1)-dimensions---a paradigmatic case for both high energy and condensed matter physics. We fully characterize the entanglement negativity between two arbitrary compact spacelike-separated regions of the field by calculating the logarithmic negativity along with the modes carrying it, called negativity cores. We achieve this using a framework based on the K\"ahler structure of Gaussian states, wherein we calculate the diagonalization of the operator associated with the partially-transposed restricted linear complex structure. In doing so, we extend the methods of this framework by proposing a basis-independent definition of the transpose operation. The explicit diagonalization we perform is enabled by a reformulation of the eigenvalue problem as a boundary value problem in the complex plane. Our results also suggest extensions to higher dimensions and fermionic fields.
\end{abstract}

\maketitle

\tableofcontents

\section{Introduction}

The vacuum state in quantum field theory (QFT) exhibits a rich structure of entanglement between spacelike-separated regions.
Two central problems are to quantify this entanglement, and moreover to identify the degrees of freedom which are entangled across these subsystems in order to determine how to detect this entanglement.
An important scenario is the case of two spacelike-separated bounded regions, since it corresponds to the entanglement which can be accessed by a pair of local observers interacting with the field.

Here we analytically solve both of these problems for two disjoint spacelike-separated intervals in the vacuum state of a (1+1)-dimensional free massless real scalar field.
The massless scalar field in 1+1D serves as a basic model which is used to describe systems from the electromagnetic field in superconducting circuits to toy models of quantum fields in curved spacetimes.
It is therefore both application-minded and of fundamental theoretical interest to understand its entanglement structure.
Specifically, we compute the logarithmic negativity between arbitrary spacelike-separated intervals (and show its dependence on the geometry through the cross ratio), as well as construct a decomposition of the joint subsystem into pairs of modes carrying the entanglement, called negativity cores.
This provides a characterization of the mixed state entanglement between these two intervals in a manner analogous to the entanglement between a single subsystem of the field and its complement.

The case of a bipartition of the (pure) vacuum state of a field into a local subsystem and its complement---for which the entanglement can be quantified by the von Neumann entropy of the reduced state of the subsystem---has been well-studied in QFT~\cite{calabrese_entanglement_2004,calabrese_entanglement_2009,casini_entanglement_2009,casini_lectures_2022,bianchiEntropySubalgebraObservables2019}.
The dependence of the entanglement entropy on the geometry of a subsystem has been of particular interest, notably in the form of area laws in black hole physics~\cite{sorkin_entropy_1983,bombelli_quantum_1986,srednicki_entropy_1993,callan_geometric_1994,solodukhin_entanglement_2011} and in quantum many-body systems~\cite{eisert_area_2010,amico_entanglement_2008}.
The connection between entanglement and geometry has played a central role in modern developments in holography~\cite{ryu_holographic_2006,swingle_entanglement_2012,almheiri_entropy_2019,penington_entanglement_2020}, as well as proposals that spacetime itself may emerge from entanglement~\cite{vanraamsdonk_building_2010,jacobson_entanglement_2016,cao_space_2017}.

In addition to quantifying the entanglement, one can also identify the degrees of freedom of the field which are entangled across the bipartition.
Generally, when a pure Gaussian state (such as the vacuum) is divided into two subsystems, there exists a basis of local normal modes in both subsystems where each mode is either in a pure state or entangled with exactly one mode in the other subsystem, called its partner mode~\cite{boteroModeWiseEntanglementGaussian2003,boteroSpatialStructuresLocalization2004a,hackl_minimal_2019,agulloCorrelationEntanglementPartners2025}.
In the context of 1+1D Minkowski spacetime, this partner mode structure translates directly into the Unruh effect: when space (the real line) is partitioned into two half-lines, the normal modes on either side are the so-called Rindler modes~\cite{birrell_quantum_1982}.
Each Rindler mode is in a two-mode squeezed state with exactly one mode in the other half-line, thus the reduced state of a single mode in each pair is thermal.
The fact that Rindler modes correspond to plane waves in the natural coordinate frame of a uniformly accelerated observer has the effect that this observer perceives the Minkowski vacuum as a thermal state at the Unruh temperature $T_U = a/2\pi$.
Of course, eternal uniform acceleration is a theoretical idealization, which leads to the question to what extent observers in bounded regions of spacetime can access entanglement in the field.
Since the Unruh effect is closely related to the Hawking effect through the equivalence principle, addressing this question would also open a path towards a more refined analysis of how entanglement is distributed among localized field degrees of freedom in black hole spacetimes.

For entanglement between bounded spacelike-separated regions of a field, early work in algebraic QFT demonstrated maximal violation of Bell inequalities between adjacent regions in dilation-invariant theories (as well as in more general theories between complementary wedge-shaped regions)~\cite{summers_maximal_1987}.
Later, it was shown that the Reeh-Schlieder property of the vacuum state implies that there is distillable entanglement between any two spacelike-separated bounded regions~\cite{verch_distillability_2005}.
The Reeh-Schlieder property is a fundamental feature of the vacuum state, and this demonstrates that vacuum entanglement is intrinsic to the structure of relativistic QFT~\cite{redhead_ado_1995,witten_aps_2018}.
It has also been shown that vacuum entanglement can be extracted into spacelike-separated localized probes interacting with the field~\cite{valentiniNonlocalCorrelationsQuantum1991,reznikEntanglementVacuum2003,steegEntanglingPowerExpanding2009}, which has been called entanglement harvesting (see, e.g.,~\cite{martin-martinezEntanglementCurvedSpacetimes2014,caribeLensingVacuumEntanglement2023} and references therein).

Because the reduced state of the vacuum in two bounded regions is mixed, the von Neumann entropy of the regions (or combinations of entropies, such as the mutual information) does not measure entanglement, as it does not distinguish between classical and quantum correlations.
Instead, a commonly used measure is the logarithmic negativity, which quantifies the extent to which a state violates the positive partial transpose (PPT) criterion, and is an upper bound to the distillable entanglement~\cite{horodecki_quantum_2009,serafini_quantum_2023}.
Unlike the entanglement entropy, it does not exhibit ultraviolet divergences (except in the limit of adjacent intervals).
Logarithmic negativity has been studied in QFT using both analytical~\cite{calabrese_entanglement_2012,calabrese_entanglement_2013} and numerical methods~\cite{marcovitch_critical_2009,klco_entanglement_2021}, as well as with particular families of smearing functions and detector models~\cite{zych_entanglement_2010,agullo_multimode_2025}.
Most recently, the logarithmic negativity for a scalar chiral current in 1+1D was calculated using replica methods in~\cite{arias_entanglement_2026}.
In our work, we both corroborate this result by a direct calculation with phase space methods (where we find agreement in the leading order behavior in the regimes of large and small separation between the intervals), as well as completely characterize the structure of this entanglement by explicitly identifying pairs of field modes across the two intervals which carry the negativity.

In~\cite{klco_entanglement_2023,gao_partial_2024,gao_detecting_2025}, it was shown how phase space methods for calculating the logarithmic negativity of Gaussian states can be used to construct the pair of modes in the two intervals which are the most entangled.
Identifying these modes allows one to construct optimal detection profiles for accessing the negativity in these regions~\cite{gao_detecting_2025}.
Further, it was shown that for certain classes of mixed Gaussian states (including the scalar vacuum), one can simultaneously construct a set of pairs of modes between the intervals which carry all of the negativity, which were called the negativity cores of the system~\cite{klco_entanglement_2023,gao_partial_2024,gao_detecting_2025}.
These works performed numerical calculations of the negativity cores for lattice models.

Here we compute the negativity core decomposition analytically for a 1+1D massless scalar field, as well as construct closed-form solutions for the smearing profiles which can be used to couple to these modes.
Our analytical calculation is performed by adapting the method of~\cite{arias_entropy_2018}, which was used to compute the entropy of two disjoint intervals by formulating the calculation as a boundary value problem in the complex plane.
We cast our derivations in the framework presented in~\cite{hackl_bosonic_2021}, which is centered on the K\"ahler structure induced by Gaussian states on phase space.
This approach, which is developed in parallel for bosons and fermions, has proven fruitful from condensed matter physics and quantum optics, to statistical and black hole physics.
Our work highlights that its structure is also apt for the treatment of the infinite-dimensional phase space of quantum field theory.
Furthermore, as an extension to the framework, we propose a basis-independent characterization of the transpose operation.

While we give a generally accessible presentation of our results in the accompanying paper~\cite{pye_negativity_2026}, here we provide a complete explanation and derivation.
To facilitate our exposition, in~\cref{sec:gaussian} we include a succinct but self-contained review of the phase space formalism.
Readers already familiar with this formalism may want to briefly review this section to familiarize themselves with our notation, especially regarding the distinction between the phase space and co-phase space.
In~\cref{sec:transpose}, we provide a short proof of a basis-independent characterization of the transpose operation, which we will use for our problem.
We also review the relevant results of~\cite{klco_entanglement_2023,gao_partial_2024,gao_detecting_2025} in~\cref{sec:optimal}, which will allow us to easily identify the negativity cores from our computation performed in the framework of~\cite{hackl_bosonic_2021}.
The technical details of our main results are presented in~\cref{sec:qft_phase_space,sec:analytical}.
In~\cref{sec:qft_phase_space}, we construct the phase space of the 1+1D massless scalar field.
\cref{sec:analytical} sets up and solves the eigenvalue problem relevant for the calculation of the logarithmic negativity, as well as the construction of the smearing profiles which identify the negativity cores.
In~\cref{sec:numerical}, we develop a numerical model which we use to corroborate our analytical results.

\section{Review of phase space formalism for Gaussian states}
\label{sec:gaussian}

In this section, we provide a self-contained review of the key concepts of the phase space formalism for Gaussian states (for further details, see, for example, \cite{serafini_quantum_2023,hackl_bosonic_2021}).
We follow, in particular, the approach of~\cite{hackl_bosonic_2021} and carefully distinguish between phase space in which the state is represented through its covariance matrix, and the co-phase space in which the linear observables are represented.
This distinction will prove important in our later treatment of the quantum field case.
For simplicity, in this section we assume that the phase space is finite-dimensional, and defer a discussion of subtleties in infinite dimensions to our concrete case in \cref{sec:qft_phase_space}.

\subsection{Phase space, co-phase space, and symplectic forms}

We begin by defining the classical phase space as a $2N$-dimensional vector space $V$ over $\mathbb{R}$, with elements denoted $\xi^a \in V$.
Note that we employ a superscript as an abstract index to indicate that $\xi^a$ is a vector (it is not the index of the components of $\xi^a$ in some particular basis).
Elements in phase space can be identified with coordinates given by generalized position and momentum quadratures $(q_1,\dots,q_N,p_1,\dots,p_N)$.
We also define the co-phase space $V^\ast$ as the dual space of $V$, i.e., the set of linear functionals $V \to \mathbb{R}$.
Elements in this space are denoted by $f_a \in V^\ast$, and can be associated with linear observables $f_a \xi^a$ (where we use the summation convention).
Note that we use a subscript to denote an abstract covector index.

Quantization is achieved through a linear map $\hat{\xi}^a : V^\ast \to \mathcal{L}(\mathcal{H})$, given by $f_a \mapsto f_a \hat{\xi}^a$, which associates covectors with (unbounded) linear operators on the Hilbert space of the quantum system.
Concretely, this map can be represented by a vector of position and momentum operators, $\hat{\xi}^a \equiv (\hat{q}_1,\dots,\hat{q}_N,\hat{p}_1,\dots,\hat{p}_N)$.
The commutation relations are encoded in
\begin{align}
  i \Omega^{ab} \hat{1} := [ \hat{\xi}^a, \hat{\xi}^b ].
\end{align}
We can view the object $\Omega^{ab}$ as a bilinear form $\Omega^{ab} : V^\ast \times V^\ast \to \mathbb{R}$, which can be used to compute the commutator between any two linear observables.
Note that $\Omega^{ab}$ is antisymmetric and nondegenerate (by assumption).
The latter property implies that there is an induced isomorphism $\Omega^{ab} : V^\ast \to V$, given by $f_a \mapsto \Omega^{ab} f_b$.
We denote its inverse by $\omega_{ab} : V \to V^\ast$ (satisfying $\Omega^{ab} \omega_{bc} = \tensor{\delta}{^a_c}$ and $\omega_{ab} \Omega^{bc} = \tensor{\delta}{_a^c}$), which can also be viewed as a bilinear form $\omega_{ab} : V \times V \to \mathbb{R}$.

These bilinear forms can be represented concretely by matrices.
In a symplectic (canonical) basis, they take the form
\begin{align}
  \Omega^{ab} \equiv \Omega = \begin{bmatrix}
    0 & \mathbb{I}_N \\
    -\mathbb{I}_N & 0
  \end{bmatrix}
  \quad \text{and} \quad
  \omega_{ab} \equiv \omega = \begin{bmatrix}
    0 & -\mathbb{I}_N \\
    \mathbb{I}_N & 0
  \end{bmatrix},
  \label{eq:canonical_form}
\end{align}
where $\mathbb{I}_N$ denotes the $N \times N$ identity matrix.
For ease of notation, it is often useful to drop the indices and write index contractions in matrix language.
For example, $\xi^a \omega_{ab} \chi^b = \xi^T \omega \chi$, where $\chi$ is a column vector of the components of $\chi^b$ in some basis, and $\xi^T$ is a row vector of the components of $\xi^a$ in that basis.
Note that the transpose in $\xi^T$ does not indicate that it is a covector in $V^\ast$.
Similarly, one can write $f_a \Omega^{ab} g_b = f^T \Omega g$.

Symplectic matrices, $S \in \text{Sp}(2N)$, are a group of basis transformations which leave the canonical form \eqref{eq:canonical_form} of the matrix $\Omega$ invariant, $S \Omega S^T = \Omega$ (i.e., $S$ transforms between different symplectic bases).\footnote{Here we only consider passive symplectic transformations, which correspond to basis changes. In general, one also has active symplectic transformations which represent state changes due to Gaussian unitaries in the Hilbert space. However, we will not need these in this work. Also, note that our use of the terms ``passive'' and ``active'' is different than the sense used in quantum optics, which is related to photon number conservation.}
In this work, we will often find it helpful to represent the forms $\Omega^{ab}$ and $\omega_{ab}$ in non-symplectic bases.
Under a general change of basis represented by $M \in \text{GL}(2N)$, the components of a vector $\xi^a \in V$ transform as $\xi \mapsto M \xi$, and the components of a covector $f_a \in V^\ast$ transform as $f \mapsto M^{-T} f$ (so that $f_a \xi^a = f^T \xi$ remains invariant).
Similarly, under this basis change we have $\Omega \mapsto M \Omega M^T$ and $\omega \mapsto M^{-T} \omega M^{-1}$.

\subsection{Modes and subsystems}

A mode is a subsystem which cannot be decomposed into smaller subsystems.
It corresponds to a two-dimensional subspace of $V$ on which $\omega_{ab}$ is nondegenerate.
Often, such a subspace is identified with an element in the complexified phase space, $u^a \in V_\mathbb{C} := \mathbb{C} \otimes V$, which is normalized as $u^{\ast a} \omega_{ab} u^b = i$.
The pair $(u^a,u^{\ast a})$ are called mode functions, and the corresponding mode is the real two-dimensional subspace of $V$ spanned by their real and imaginary parts.
A basis of mode functions $\{ u_n^a, u_n^{\ast a} \}_n$ for $V_\mathbb{C}$ satisfying $u_n^{\ast a} \omega_{ab} u_{n'}^b = i \delta_{nn'}$ and $u_n^a \omega_{ab} u_{n'}^b = 0$ (symplectic orthonormality) yields a symplectic basis $\{ \sqrt{2} \, \Re(u_n^a), \sqrt{2} \, \Im(u_n^a) \}_n$ for $V$.
The elements of the dual basis for $V_\mathbb{C}^\ast$ are given by $v_{na} = i \omega_{ab} u_n^{\ast b}$, and satisfy $v_{na}^\ast \Omega^{ab} v_{n'b} = i \delta_{nn'}$ and $v_{na} \Omega^{ab} v_{n'b} = 0$.
These can be used to construct annihilation and creation operators, $\hat{a}_n := v_{na} \hat{\xi}^a$ and $\hat{a}_n^\dagger = v_{na}^\ast \hat{\xi}^a$, which satisfy $[ \hat{a}_n, \hat{a}_{n'}^\dagger ] = i v_{na} \Omega^{ab} v_{n'b}^\ast = \delta_{nn'}$.
The corresponding quadratures of these modes are $\hat{q}_n = \tfrac{1}{\sqrt{2}} ( \hat{a}_n + \hat{a}_n^\dagger )$ and $\hat{p}_n = \frac{(-i)}{\sqrt{2}} ( \hat{a}_n - \hat{a}_n^\dagger )$.
Using the resolution of identity,
\begin{align}
  \tensor{\delta}{^a_b} = \sum_n ( u_n^{\ast a} v_{nb}^\ast + u_n^a v_{nb} ),
  \label{eq:ROI}
\end{align}
we can construct a mode expansion of $\hat{\xi}^a$ in this basis as
\begin{align}
  \hat{\xi}^a = \sum_n ( \hat{a}_n^\dagger u_n^{\ast a} + \hat{a}_n u_n^a ). \label{eq:mode_expn}
\end{align}

More generally, we define a subsystem $A$ as a subspace $V_A \subset V$ on which the restricted symplectic form $(\omega_A)_{ab} : V_A \times V_A \to \mathbb{R}$ is nondegenerate.
The full phase space $V$ can be decomposed into a direct sum $V = V_A \oplus V_{A^\perp}$, where $V_{A^\perp} := \{ \chi^a \in V : \xi^a \omega_{ab} \chi^b = 0 \; \forall \xi^a \in V_A \}$ is the symplectic complement of $V_A$.
This implies $\omega = \omega_A \oplus \omega_{A^\perp}$.
When we speak of two (disjoint) subsystems $A$ and $B$ (such as spacelike-separated regions in a quantum field theory), we mean two subspaces, $V_A, V_B \subset V$, on each of which the restricted symplectic form is nondegenerate and $\omega_{AB} = \omega_A \oplus \omega_B$ on the joint subspace $V_{AB} = V_A \oplus V_B$.
The latter implies that observables in $A$ and $B$ commute with each other (since we then also have $\Omega_{AB} = \Omega_A \oplus \Omega_B$).

\subsection{Gaussian states}

A Gaussian state can be defined as a ground or thermal state of a quadratic Hamiltonian, that is 
\begin{align}
  \hat{\rho} = \frac1{Z_\beta} e^{-\beta\hat{H}}, \; \; \text{where} \; \; \hat{H} = \frac12 (\hat{\xi} - \xi_0)^a h_{ab} (\hat{\xi} - \xi_0)^b,
\end{align}
and $h_{ab} : V \times V \to \mathbb{R}$ is a bilinear, symmetric, and positive-definite form.
Pure Gaussian states correspond to the limiting case $\hat{\rho} = \lim_{\beta \to \infty} Z_{\beta}^{-1} e^{-\beta \hat{H}}$, and for mixed states we usually absorb $\beta$ into $\hat H$.
$\hat{H}$ is sometimes called the modular or entanglement Hamiltonian of the state \cite{casini_entanglement_2009,casini_lectures_2022}, which may be different from the Hamiltonian which governs the evolution of the system.

Gaussian states have the notable feature that they are completely determined by the mean\footnote{Note that the mean $\xi^a_0$ is exactly given by the linear shift in $\hat H$.} and the covariance matrix of the quadratures,
\begin{align}
  \xi_0^a &= \tr ( \hat{\xi}^a \hat{\rho} ), \\
  G^{ab} &= \tr ( \{ \hat{\xi}^a, \hat{\xi}^b \} \hat{\rho} ) - \xi_0^a \xi_0^b.
\end{align}
The covariance matrix, $G^{ab} : V^\ast \times V^\ast \to \mathbb{R}$, is a bilinear form on the co-phase space which is symmetric and positive-definite.
Therefore, $G^{ab}$ is an inner product on $V^\ast$.
Similarly, its inverse, $g_{ab} : V \times V \to \mathbb{R}$ (satisfying $G^{ab} g_{bc} = \tensor{\delta}{^a_c}$ and $g_{ab} G^{bc} = \tensor{\delta}{_a^c}$), is an inner product on $V$.
The positivity of the density operator $\hat{\rho} \geq 0$ implies the uncertainty relation
\begin{align}
  G^{ab} + i \Omega^{ab} \geq 0.
\end{align}

A simple example of a Gaussian state is a thermal state of a single harmonic oscillator with $\hat{H} = \frac{\lambda}{2} ( \hat{p}^2 + \hat{q}^2 ) = \lambda ( \hat{a}^\dagger \hat{a} + \frac12 )$, where $\hat{a} = \frac{1}{\sqrt{2}}( \hat{q} + i \hat{p} )$ and $\lambda$ will be referred to as the mode frequency.
The density operator is
\begin{align}
  \hat{\rho} = \frac{1}{Z} e^{-\lambda \hat{a}^\dagger \hat{a}}, \label{eq:single_mode_thermal}
\end{align}
with partition function $Z = (1 - e^{-\lambda})^{-1}$.
For this state, we have $\xi_0^a = 0$ and the covariance matrix is determined from the mean occupation number, $G^{ab} = (2 \overline{n} + 1) \delta^{ab}$, where $\overline{n} := \tr(\hat{a}^\dagger \hat{a} \hat{\rho}) = (e^{\lambda}-1)^{-1}$.

The specification of the density operator $\hat{\rho}$ from $\xi_0^a$ and $G^{ab}$ can be seen explicitly by expanding in terms of the orthonormal basis\footnote{I.e., orthonormal with respect to the Hilbert-Schmidt inner product.} of displacement operators, $\hat{D}_\xi = e^{-i \xi^a \omega_{ab} \hat{\xi}^b}$. 
An expansion of a general bounded operator, $\hat{O}$, in terms of displacement operators is given by the Fourier-Weyl relation~\cite{serafini_quantum_2023},
\begin{align}
  \hat{O} = \int \frac{d^{2N}\xi}{(2\pi)^N} \, \tr( \hat{D}_\xi^\dagger \hat{O} ) \, \hat{D}_\xi.
\end{align}
When applied to a density operator $\hat{\rho}$, the set of coefficients $\chi(\xi) := \tr( \hat{D}_\xi^\dagger \hat{\rho} )$ is called the characteristic function of the state.
For a Gaussian state, one can show that it takes the form
\begin{align}
  \chi(\xi) = e^{-\frac14 (\omega \xi)_a G^{ab} (\omega \xi)_b + i \xi^a \omega_{ab} \xi_0^b},
\end{align}
which we see depends only on $G^{ab}$ and $\xi_0^a$.
For example, for the state in \cref{eq:single_mode_thermal} we have $\chi(\xi) = e^{-\frac14 (2 \overline{n} + 1) \| \xi \|_2^2} = e^{-\frac14 (2 \overline{n} + 1) (q^2 + p^2)}$.

From the Fourier-Weyl relation, it is straightforward to obtain the reduced state of $\hat{\rho}$ on a subsystem $A$ by splitting $\omega_{ab}$ and $\xi^a$ according to $V = V_A \oplus V_{A^\perp}$.
By applying the partial trace over $A^\perp$ to the Fourier-Weyl expansion of $\hat{\rho}$, one can show that the reduced density operator on $A$ has a characteristic function which is also Gaussian.
Further, the reduced covariance matrix is simply the restriction of the full covariance matrix to the quadratures associated with modes of subsystem $A$.

In the following, we will assume that $\xi_0^a = 0$, which is sufficient for our case.

\subsection{Symplectic diagonalization}

A key decomposition of Gaussian states arises from the symplectic diagonalization of $G^{ab}$, which can be used to express any Gaussian state as a product of (thermal or pure) single mode states.
This decomposition is guaranteed by Williamson's theorem on the symplectic diagonalization of positive-definite matrices~\cite{serafini_quantum_2023}.
It is constructed through the diagonalization of the operator
\begin{align}
  \tensor{J}{^a_b} := - G^{ac} \omega_{cb} : V \to V.
\end{align}
$\tensor{J}{^a_b}$ is real and antisymmetric with respect to the inner product $g_{ab}$.
Hence, it can be diagonalized over the complexified phase space $V_\mathbb{C} := \mathbb{C} \otimes V$, with eigenvalues and eigenvectors coming in pairs $\tensor{J}{^a_b} u_n^{\ast b} = i \nu_n u_n^{\ast b}$ and $\tensor{J}{^a_b} u_n^b = - i \nu_n u_n^b$.
Further, the eigenvectors of $\tensor{J}{^a_b}$ can be normalized to form a symplectically-orthonormal basis of mode functions, with each mode defined by a pair of eigenvectors $(u_n^a,u_n^{\ast a})$ ~\cite{serafini_quantum_2023}.
Note that the corresponding dual basis vectors, $v_{na} = i \omega_{ab} u_n^{\ast b} \in V_\mathbb{C}^\ast$, are also left eigenvectors of $\tensor{J}{^a_b}$ (i.e., eigenvectors of $\tensor{(J^T)}{_a^b}$).
We can expand $\tensor{J}{^a_b}$ in its eigenbasis as
\begin{align}
  \tensor{J}{^a_b} = \sum_n i \nu_n ( u_n^{\ast a} v_{nb}^\ast - u_n^a v_{nb} ).
\end{align}
Similarly, since $G^{ab} = - \tensor{J}{^a_c} \Omega^{cb}$, we can write
\begin{align}
  G^{ab} = \sum_n \nu_n ( u_n^{\ast a} u_n^b + u_n^a u_n^{\ast b} ).
\end{align}
The set $\{ \nu_n \}_n$ are the symplectic eigenvalues of $G^{ab}$.
The uncertainty principle implies $\nu_n \geq 1$.

The significance of the particular modes which diagonalize $\tensor{J}{^a_b}$ is that they are uncorrelated, i.e., the covariance matrix splits into a direct sum over the subspaces of these modes.
Therefore, the characteristic function, hence $\hat{\rho}$ (through the Fourier-Weyl relation), splits into a product state of these modes.
The state of the individual modes can be shown to be a thermal state of the form \cref{eq:single_mode_thermal}, with mode frequencies $2 \, \text{arccoth}(\nu_n)$.
From this, one can easily calculate, e.g., the entropy of the state from the symplectic eigenvalues.

By comparing with the original form of the state $\hat{\rho} = Z^{-1} e^{-\hat{H}}$, we can also relate this to a decomposition of the modular Hamiltonian.
In general, the explicit relationship between $G^{ab}$ and $h_{ab}$ can be summarized by
\begin{align}
  i \tensor{K}{^a_b} = 2 \, \text{arccoth}(i \tensor{J}{^a_b}),
\end{align}
where $\tensor{K}{^a_b} := \Omega^{ac} h_{cb}$ \cite{hackl_bosonic_2021}.
The right-hand side should be understood as taking a function of the eigenvalues of $\tensor{J}{^a_b}$.
The operator $\tensor{K}{^a_b}$ is the generator of the modular flow, as the Heisenberg-picture evolution of $\hat{\xi}$ generated by $\hat{H}$ is $\frac{d}{d\tau} \hat{\xi}^a(\tau) = \tensor{K}{^a_b} \hat{\xi}^b(\tau)$.

\subsection{Logarithmic negativity}

In this work, we are interested in the entanglement between two subsystems, $A$ and $B$, when the joint state of $AB$ is mixed.
One simple necessary condition for the separability of a state is the positive partial transpose (PPT) criterion.
The idea is to apply the transpose operator to the state of one of the subsystems, $\hat{\rho}_{AB} \mapsto \hat{\rho}_{AB}^\Gamma := (1 \otimes \, ^T) \hat{\rho}_{AB}$.
Since transposition is positive, but not completely positive, $\hat{\rho}_{AB}^\Gamma$ may have negative eigenvalues (hence may not be a physical density operator).
However, this cannot happen if the state is separable (since $\sum_n p_n \, \hat{\rho}_{A,n} \otimes \hat{\rho}_{B,n}$ and $\sum_n \, p_n \hat{\rho}_{A,n} \otimes \hat{\rho}_{B,n}^T$ have the same spectrum).
Thus, $\hat{\rho}_{AB}$ being separable implies it has a PPT, which means that a violation of the PPT criterion is sufficient for $\hat{\rho}_{AB}$ to be entangled.
The converse is not true in general, but for certain systems PPT is known to be both sufficient and necessary for entanglement (e.g., qubit-qubit, qubit-qutrit, $1$ vs. $n$ mode Gaussian states, etc. \cite{horodecki_quantum_2009,serafini_quantum_2023}).

One of the primary tasks of this paper is to compute the logarithmic negativity, which measures the extent to which a partially-transposed (PT) density operator, $\hat{\rho}_{AB}^\Gamma$, violates the PPT criterion.
It is an entanglement monotone (i.e., cannot increase under local operations and classical communication), and is an upper bound for the distillable entanglement \cite{horodecki_quantum_2009,serafini_quantum_2023}.
The logarithmic negativity is given by
\begin{align}
  E_\mathcal{N} = \log_2 \| \hat{\rho}^\Gamma \|_1,
\end{align}
where $\| \hat{\rho}^\Gamma \|_1$ is the sum of the absolute value of the eigenvalues of $\hat{\rho}^\Gamma$.
If $\hat{\rho}^\Gamma$ is PPT, then the eigenvalues are all positive and sum to $1$ (since partial transpose does not change the trace of an operator), hence $E_\mathcal{N} = 0$.
Therefore, $E_\mathcal{N} > 0$ implies the state $\hat{\rho}_{AB}$ is entangled.

Computing the logarithmic negativity of a Gaussian state can be done using a decomposition of $\hat{\rho}_{AB}^\Gamma$ similar to that we reviewed above.
We can apply the partial transpose to $\hat{\rho}_{AB}$ in a basis consisting of the tensor product of Fock bases for $A$ and $B$.
One can then show that this simply modifies the characteristic function by mapping
\begin{align}\label{eq:G_pt}
  G^{ab} \mapsto (G^\Gamma)^{ab} := \tensor{\Gamma}{^a_c} G^{cd} \tensor{(\Gamma^T)}{_d^b},
\end{align}
(also $\xi_0^a \mapsto \tensor{\Gamma}{^a_b} \xi_0^b$),
where $\tensor{\Gamma}{^a_b} : V \to V$ is a linear map which acts by negating the momentum quadratures in $B$ (i.e., $p_{B,n} \mapsto -p_{B,n}$) \cite{serafini_quantum_2023}.
The PT covariance matrix $(G^\Gamma)^{ab} : V^\ast \times V^\ast \to \mathbb{R}$ remains positive-definite, hence can be symplectically-diagonalized by diagonalizing the operator,
\begin{align}
  \tensor{(J^\Gamma)}{^a_b} := - (G^\Gamma)^{ac} \omega_{cb} : V \to V.
  \label{eq:JPT_defn}
\end{align}
The mode decompositions of these operators are of the same general form as the non-PT case, with PT symplectic eigenvalues $\{ \tilde{\nu}_n \}_n$, symplectically-orthonormal eigenvectors $\{ \tilde{u}_n^a, \tilde{u}_n^{\ast a} \}_n$, and dual basis $\{ \tilde{v}_{na}, \tilde{v}_{na}^\ast \}_n$.
The difference with the non-PT case is that one can have PT symplectic eigenvalues with $\tilde{\nu}_n < 1$, due to the fact that the uncertainty principle does not generally hold for $(G^\Gamma)^{ab}$ (since $\hat{\rho}_{AB}^\Gamma$ may not be positive).
It can be shown that the number of PT symplectic eigenvalues with $\tilde{\nu}_n < 1$ cannot be greater than the number of modes in the smaller of the two subsystems $A$ and $B$ \cite{serafini_quantum_2023}.

The modes which violate the uncertainty principle are exactly those which contribute to the negativity.
Using the Fourier-Weyl relation, one can similarly show $\hat{\rho}_{AB}^\Gamma$ factorizes into a product state over these modes.
The states of the individual modes with $\tilde{\nu}_n \geq 1$ are physical thermal states, as before.
Those with $\tilde{\nu}_n < 1$ have the form of a thermal state, but with oscillator frequencies $2 \, \text{arccoth}(\tilde{\nu}_n)$, which have an imaginary component $\pm i \pi$ and yield negative eigenvalues for the state.
From this, it is straightforward to show that
\begin{align}
  E_\mathcal{N} = \sum_{\tilde{\nu}_n < 1} ( - \log_2 \tilde{\nu}_n ).
\end{align}
Therefore, computing the logarithmic negativity reduces to the problem of finding the eigenvalues of $J^\Gamma$ with $\tilde{\nu}_n < 1$.
In \cref{sec:optimal}, we also review how one can use the eigenvectors of $J^\Gamma$ to construct pairs of modes in the subsystems $A$ and $B$ which contain the logarithmic negativity associated with each of the symplectic eigenvalues $\tilde{\nu}_n < 1$.

From the decomposition of $\hat{\rho}_{AB}^\Gamma$, one can also conclude that it can be written in the form $\hat{\rho}_{AB}^\Gamma = \tilde{Z}^{-1} e^{-\hat{H}^\Gamma}$ with $\hat{H}^\Gamma = \tfrac12 \hat{\xi}^a h_{ab}^\Gamma \hat{\xi}^b$, where $\hat{H}^\Gamma$ is called the negativity Hamiltonian \cite{murciano_negativity_2022}.
It is similarly related to the PT covariance matrix through
\begin{align}
  i \tensor{(K^\Gamma)}{^a_b} = 2 \, \text{arccoth}( i \tensor{(J^\Gamma)}{^a_b} ),
  \label{eq:negativity_hamiltonian}
\end{align}
where $\tensor{(K^\Gamma)}{^a_b} := \Omega^{ac} h_{cb}^\Gamma$.

\section{Basis-independent transpose operation}
\label{sec:transpose}

Let $A$ and $B$ be two subsystems described by two subspaces of a symplectic vector space with $\Omega_{AB} = \Omega_A \oplus \Omega_B$.
A partial transpose is a map on phase space of the form $\Gamma = 1_A \oplus T_B$, where $T_B$ is a transpose operator on subsystem $B$.
Typically, this is defined in a canonical basis by negating the momentum variables $p_B \to -p_B$.
Note that there is not a unique definition of (partial) transpose since we could have used any canonical basis.

As we will often be using non-canonical bases in this paper, we will find it convenient to develop a characterization of the transpose operation which does not require to first transform to a canonical basis.
We will show that the above definition of a transpose is equivalent to $T^2 = 1$ and $T \Omega T^T = -\Omega$ (or equivalently $T^T \omega T = -\omega$, where $\omega = \Omega^{-1}$).

\begin{prop}
Let $V$ be a $2N$-dimensional real vector space, and $\omega : V \times V \to \mathbb{R}$ a nondegenerate, antisymmetric, bilinear form.
A linear map $T : V \to V$ satisfies $T^2 = 1$ and $T^T \omega T = -\omega$ if and only if there is a basis for $V$ in which
\begin{align}
  T \equiv \begin{bmatrix}
    \mathbb{I}_N & 0 \\
    0 & -\mathbb{I}_N
  \end{bmatrix}
  \quad \text{and} \quad
  \omega \equiv \begin{bmatrix}
    0 & -\mathbb{I}_N \\
    \mathbb{I}_N & 0
  \end{bmatrix},
\end{align}
where $\mathbb{I}_N$ in these block matrices denotes the $N \times N$ identity operator.
\end{prop}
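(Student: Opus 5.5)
The reverse direction is a direct matrix computation. If $T=\mathrm{diag}(\mathbb{I}_N,-\mathbb{I}_N)$ and $\omega$ has the canonical form, then $T^2=1$ is immediate. Writing $\omega$ in $N\times N$ blocks, conjugation by $T$ leaves the diagonal blocks unchanged and negates the off-diagonal blocks. The diagonal blocks of $\omega$ vanish, so $T^T\omega T=-\omega$. Both conditions are invariant under change of basis, since $T\mapsto MTM^{-1}$ and $\omega\mapsto M^{-T}\omega M^{-1}$ preserve them. It therefore suffices to verify them in the stated basis.

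For the forward direction, I would use the eigenspace decomposition of $T$. Since $T^2=1$, the polynomial $x^2-1$ has distinct roots, so $T$ is diagonalizable over $\mathbb{R}$ with $V=V_+\oplus V_-$, where $V_\pm=\ker(T\mp1)$. The condition $T^T\omega T=-\omega$ says $\omega(T\xi,T\chi)=-\omega(\xi,\chi)$ for all $\xi,\chi$.

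First, $V_+$ is isotropic: for $\xi,\chi\in V_+$ we get $\omega(\xi,\chi)=-\omega(\xi,\chi)$, so $\omega(\xi,\chi)=0$. The same argument shows $V_-$ is isotropic. Second, the dimensions must match. An isotropic subspace of a nondegenerate $2N$-dimensional symplectic space has dimension at most $N$, because $W\subseteq W^\perp$ and $\dim W^\perp = 2N-\dim W$. Since $\dim V_++\dim V_-=2N$, both $V_+$ and $V_-$ are Lagrangian of dimension $N$.

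Next, I would use the pairing induced by $\omega$. Because $V_+$ and $V_-$ are both isotropic and $\omega$ is nondegenerate on $V$, the map $V_-\to V_+^\ast$, $\chi\mapsto\omega(\cdot,\chi)|_{V_+}$, is injective. Indeed, if $\omega(\cdot,\chi)$ vanished on $V_+$, it would vanish on all of $V$ because it also vanishes on $V_-$, forcing $\chi=0$. Since the two spaces have equal dimension $N$, this map is an isomorphism.

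Now choose any basis $e_1,\dots,e_N$ of $V_+$. Let $f_1,\dots,f_N\in V_-$ be the unique vectors with $\omega(e_i,f_j)=-\delta_{ij}$, which is the sign convention of the stated canonical form. In the basis $(e_1,\dots,e_N,f_1,\dots,f_N)$, the matrix of $T$ is $\mathrm{diag}(\mathbb{I}_N,-\mathbb{I}_N)$. The diagonal blocks of $\omega$ vanish by isotropy. The upper-right block is $-\mathbb{I}_N$ by construction, and the lower-left block is $\mathbb{I}_N$ by antisymmetry. This yields exactly the claimed forms.

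The only step needing care is the dimension count showing $\dim V_+=\dim V_-=N$. Without it, the isotropy conditions alone would not force a canonical pairing. The isotropic-dimension bound handles it cleanly, and the rest is routine bookkeeping of sign conventions.
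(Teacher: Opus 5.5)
Your proof is correct and follows essentially the same route as the paper: you diagonalize $T$ using $T^2=1$, show both eigenspaces are isotropic from $T^T\omega T=-\omega$, use nondegeneracy to get $\dim V_+=\dim V_-=N$, and then normalize the pairing between $V_+$ and $V_-$. The difference is only in phrasing. You obtain the dimension count and the normalization invariantly, via the isotropic-dimension bound and a dual basis of $V_-$, whereas the paper argues that the off-diagonal block $\omega_{+-}$ is invertible and hence square, and then applies the explicit change of basis $\mathrm{diag}(\mathbb{I}_N,-\omega_{+-})$.
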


\begin{proof}
The backward direction is trivial.
For the forward direction, note that $T^2 = 1$ implies that $T$ is diagonalizable with eigenvalues $\pm 1$.
Let $u_\pm$ and $v_\pm$ be any eigenvectors of $T$ with eigenvalue $\pm 1$.
Then
\begin{align}
  u_\pm^T \omega v_\pm = u_\pm^T T^T \omega T v_\pm = - u_\pm^T \omega v_\pm \; \implies \; u_\pm^T \omega v_\pm = 0.
\end{align}
Therefore, in a basis where $T$ is diagonal, we can write
\begin{align}
  T \equiv \begin{bmatrix}
    \mathbb{I}_N & 0 \\
    0 & -\mathbb{I}_N
  \end{bmatrix}
  \quad \text{and} \quad
  \omega \equiv \begin{bmatrix}
    0 &  \omega_{+-} \\
    -\omega_{+-}^T & 0
  \end{bmatrix},
\end{align}
where $\omega_{+-}$ is a matrix representing $\omega$ where the first input is restricted to the positive eigenspace of $T$, and the second to the negative eigenspace.
We also have $\omega_{-+} = -\omega_{+-}^T$ since the matrix representing $\omega$ must be antisymmetric.
Further, since $\omega$ is assumed nondegenerate, then we also have $\omega_{+-}$ invertible.
Hence, $\omega_{+-}$ is also a square matrix, which means that there are an equal number of positive and negative eigenvalues of $T$.
Now we apply the following change of basis to $T$ and $\omega$,
\begin{align}
  M \equiv \begin{bmatrix}
    \mathbb{I}_N & 0 \\
    0 & -\omega_{+-}
  \end{bmatrix}.
\end{align}
The linear map $T$ is invariant under this transformation, because it transforms as $T \mapsto M T M^{-1}$, which is equal to $M T M^{-1} = T$.
The symplectic form transforms as $\omega \mapsto M^{-T} \omega M^{-1}$, hence in this basis we have
\begin{align}
  \omega \equiv \begin{bmatrix}
    0 & -\mathbb{I}_N \\
    \mathbb{I}_N & 0
  \end{bmatrix}.
\end{align}
\end{proof}

\section{Local mode pairs from $J^\Gamma$ eigenfunctions}
\label{sec:optimal}

The eigenvalues of the operator $J^\Gamma$ (defined in \cref{eq:JPT_defn}) can be used to compute the logarithmic negativity between two subsystems $A$ and $B$.
The question as to how much of this negativity can be accessed 
locally was answered in \cite{klco_entanglement_2023,gao_partial_2024,gao_detecting_2025}.
In these works, it was demonstrated how the eigenvectors of $J^\Gamma$ can be used to identify a set of pairs of local modes in $A$ and $B$, such that the logarithmic negativity of each pair of modes corresponds to one of the eigenvalues of $J^\Gamma$.
Thus, in principle one could extract all of the negativity of the $AB$ system by successively swapping out each of these modes.

Here we will review and rederive some of the results of \cite{klco_entanglement_2023,gao_partial_2024,gao_detecting_2025} in the formalism discussed in \cref{sec:gaussian}.
This will allow us to easily identify these modes from our diagonalization of $J^\Gamma$ for the massless scalar field in \cref{sec:analytical}.
First, we show that the logarithmic negativity contained in any single pair of modes in $A$ and $B$ cannot exceed that of the smallest symplectic eigenvalue of $G^\Gamma$.
We then show how to construct a pair of modes in $A$ and $B$ from the corresponding eigenvector of $J^\Gamma$ which has this maximal negativity.
Finally, we demonstrate that, under a certain assumption, one can use the set of eigenvectors of $J^\Gamma$ to identify a simultaneous set of $AB$ mode pairs whose negativities one-by-one correspond to the eigenvalues of $J^\Gamma$.
In \cite{gao_partial_2024}, it was argued that the particular assumption required for this latter construction holds for the scalar field vacuum, and we verify this explicitly for our case in \cref{sec:smearings}.

\subsection{Optimality}

In \cref{sec:gaussian}, we associated subsystems $A$ and $B$ with subspaces $V_A, V_B \subset V$ on each of which $\omega$ is nondegenerate and decomposes as $\omega_A \oplus \omega_B$ on $V_A \oplus V_B$.
Let $G$ be the covariance matrix of a Gaussian state on the joint system $AB$, and let $\Gamma$ represent a partial transpose, which can be written as $\Gamma = 1_A \oplus T_B$.
Recall that $G^\Gamma := \Gamma G \Gamma^T$, can be expanded in terms of its symplectic eigenvectors as
\begin{align}
  G^\Gamma = \sum_n \tilde{\nu}_n ( \tilde{u}_n^\ast \tilde{u}_n^T + \tilde{u}_n \tilde{u}_n^{\ast T} ),
\end{align}
where the $\tilde{u}_n$'s form a set of mode functions with $\tilde{u}_n^{\ast T} \omega \tilde{u}_{n'} = i \delta_{nn'}$ and $\tilde{u}_n^T \omega \tilde{u}_{n'} = 0$.
We can then reconstruct the covariance matrix using $G = \Gamma G^\Gamma \Gamma^T$, since $\Gamma^2 = 1$.

A single mode corresponds to a two-dimensional subspace on which $\omega$ is nondegenerate.
Given any choice of single mode in $A$ and single mode in $B$, the reduced covariance matrix of this four-dimensional subsystem, $S$, can be obtained using a matrix of the form $P_S = P_{S,A} \oplus P_{S,B}$, which extracts the $4 \times 4$ submatrix of correlations in the $AB$ subsystem as $G_S = P_S G P_S^T$.
To compute the logarithmic negativity between $A$ and $B$ in this reduced system, we introduce a partial transpose $\Gamma_S = 1_{S,A} \oplus T_{S,B}$ satisfying $\Gamma_S^2 = 1_S$ (where $1_S$ is the $4 \times 4$ identity matrix) and $T_{S,B} P_{S,B} \Omega P_{S,B}^T T_{S,B}^T = - P_{S,B} \Omega P_{S,B}^T$.
The logarithmic negativity is then obtained from the two symplectic eigenvalues $\tilde{\nu}_{S,1}$ and $\tilde{\nu}_{S,2}$ of $G_S^{\Gamma_S} = \Gamma_S G_S \Gamma_S^T$.
Recall that since $G_S$ is the covariance matrix of two modes, then only one of the symplectic eigenvalues will contribute to the logarithmic negativity.
Let us assume this eigenvalue is $\tilde{\nu}_{S,1}$.
We will demonstrate that $\tilde{\nu}_{S,1} \geq \tilde{\nu}_1$, where $\tilde{\nu}_1$ smallest symplectic eigenvalue of the full PT covariance matrix $G^\Gamma$.
This implies that the logarithmic negativity of any pair of modes in $A$ and $B$ cannot exceed $-\log_2 \tilde{\nu}_1$.

After the symplectic diagonalization of $G_S^{\Gamma_S}$, we can expand it in terms of its normalized mode functions as
\begin{align}
  G_S^{\Gamma_S} = \sum_{n=1}^2 \tilde{\nu}_{S,n} ( \tilde{u}_{S,n}^\ast \tilde{u}_{S,n}^T + \tilde{u}_{S,n} \tilde{u}_{S,n}^{\ast T}).
\end{align}
With the corresponding dual vectors $\tilde{v}_{S,n}$, satisfying $\tilde{v}_{S,n}^{\ast T} \Omega \tilde{v}_{S,n'} = i \delta_{nn'}$, $\tilde{v}_{S,n}^T \tilde{u}_{S,n'} = \delta_{nn'}$, and $\tilde{v}_{S,n}^{\ast T} \tilde{u}_{S,n'} = 0$, we can isolate the smallest symplectic eigenvalue of $G_S^{\Gamma_S}$ by
\begin{align}
  \tilde{\nu}_{S,1} &= \tilde{v}_{S,1}^{\ast T} G_S^{\Gamma_S} \tilde{v}_{S,1} = \tilde{v}_{S,1}^{\ast T} (\Gamma_S P_S \Gamma) G^\Gamma (\Gamma_S P_S \Gamma)^T \tilde{v}_{S,1}.
\end{align}
We note that $(\Gamma_S P_S \Gamma) \Omega (\Gamma_S P_S \Gamma)^T = P_S \Omega P_S^T$, which can be shown by decomposing $\Gamma_S P_S \Gamma = P_{S,A} \oplus T_{S,B} P_{S,B} T_B$ and using $T_{S,B} P_{S,B} \Omega P_{S,B}^T T_{S,B}^T = - P_{S,B} \Omega P_{S,B}^T$ and $T_B \Omega_B T_B^T = -\Omega_B$.
This implies that $\tilde{v}_{S,1}' := (\Gamma_S P_S \Gamma)^T \tilde{v}_{S,1}$ is normalized with respect to $\Omega$, i.e., $\tilde{v}_{S,1}'^{\ast T} \Omega \tilde{v}_{S,1}' = i$.
From the completeness of the set of symplectic eigenvectors of the full PT covariance matrix $G^\Gamma$, we have the resolution of identity
\begin{align}
  1_V = \sum_n ( \tilde{u}_n^\ast \tilde{v}_n^{\ast T} + \tilde{u}_n \tilde{v}_n^T ).
\end{align}
By multiplying on the right by $\Omega$, we also have the expansion
\begin{align}
  \Omega = \sum_n i ( \tilde{u}_n^\ast \tilde{u}_n^T - \tilde{u}_n \tilde{u}_n^{\ast T} ).
\end{align}
From $\tilde{v}_{S,1}'^{\ast T} \Omega \tilde{v}_{S,1}' = i$, we then have
\begin{align}
  \sum_n | \tilde{v}_{S,1}'^T \tilde{u}_n |^2 = 1 + \sum_n | \tilde{v}_{S,1}'^T \tilde{u}_n^\ast |^2.
\end{align}
Therefore,
\begin{align}
  \tilde{\nu}_{S,1} &= \tilde{v}_{S,1}^{\ast T} G_S^{\Gamma_S} \tilde{v}_{S,1} \nonumber\\
  &= \tilde{v}_{S,1}'^{\ast T} G^\Gamma \tilde{v}_{S,1}' \nonumber\\
  &= \sum_n \tilde{\nu}_n ( | \tilde{v}_{S,1}'^T \tilde{u}_n |^2 + | \tilde{v}_{S,1}'^T \tilde{u}_n^\ast |^2 ) \nonumber\\
  &\geq \tilde{\nu}_1 \sum_n ( | \tilde{v}_{S,1}'^T \tilde{u}_n |^2 + | \tilde{v}_{S,1}'^T \tilde{u}_n^\ast |^2 ) \nonumber\\
  &= \tilde{\nu}_1 ( 1 + 2 \sum_n | \tilde{v}_{S,1}'^T \tilde{u}_n^\ast |^2 ) \nonumber\\
  &\geq \tilde{\nu}_1,
\end{align}
which completes the claim.
Note that one can also extend this argument to arbitrary subsystems of $A$ and $B$, i.e., the smallest PT symplectic eigenvalue of any subsystem cannot be smaller than that of the full PT covariance matrix.

\subsection{Achievability}

We saw that any single pair of modes in $A$ and $B$ cannot have a PT symplectic eigenvalue smaller than that of the full $AB$ system, but can this lower bound be saturated?
Here we will demonstrate how a pair of modes can be constructed from the symplectic eigenvectors of $G^\Gamma$ which achieves this bound.

From the expansion $G^\Gamma = \sum_n \tilde{\nu}_n ( \tilde{u}_n^\ast \tilde{u}_n^T + \tilde{u}_n \tilde{u}_n^{\ast T} )$, we see that we also obtain an expansion for the covariance matrix $G = \Gamma G^\Gamma \Gamma^T = \sum_n \tilde{\nu}_n ( u_n^\ast u_n^T + u_n u_n^{\ast T} )$, where $u_n := \Gamma \tilde{u}_n$.
Note that $u_n$ does not define a normalized mode because $\Gamma$ is not symplectic.
We denote by $\tilde{\nu}_1$ the minimum of the set $\{ \tilde{\nu}_n \}_n$.
The expansion of $G$ suggests that one should attempt to find a pair of modes in $A$ and $B$ which isolates the $n=1$ term in the sum.
Therefore, we need to find (a set of) normalized modes which span a subspace containing $(\tilde{u}_1, \tilde{u}_1^\ast)$ after application of the partial transpose.

The choice of pair of local modes is unique (up to single-mode transformations), and can be constructed as follows.
Recall, we assume the phase space of a joint system $AB$ can be decomposed as $V_{AB} = V_A \oplus V_B$, thus we can write $\tilde{u}_1 = \tilde{u}_{1A} + \tilde{u}_{1B}$, where $\tilde{u}_{1A} \in V_A$ and $\tilde{u}_{1B} \in V_B$.
Since $u_1 = \Gamma \tilde{u}_1 = \tilde{u}_{1A} + T \tilde{u}_{1B}$, let us consider the subspace spanned by $(\tilde{u}_{1A},\tilde{u}_{1A}^\ast)$ and $(T\tilde{u}_{1B},T\tilde{u}_{1B}^\ast)$.
Assuming that
\begin{align}
  i \alpha_1 := \tilde{u}_{1A}^{\ast T} \omega \tilde{u}_{1A} \neq 0 \quad \text{and} \quad i \beta_1 := \tilde{u}_{1B}^{\ast T} \omega \tilde{u}_{1B} \neq 0,
\end{align}
(which we verify in \cref{sec:smearings} for our case) we can normalize these to obtain a pair of modes in $A$ and $B$.
Note that because $i = \tilde{u}_1^{\ast T} \omega \tilde{u}_1$, then $\alpha_1 + \beta_1 = 1$.
More explicitly, if $\alpha_1 > 0$ we define $u_{1A} := \frac{1}{\sqrt{\alpha_1}} \tilde{u}_{1A}$, and if $\alpha_1 < 0$ we define $u_{1A} := \frac{1}{\sqrt{-\alpha_1}} \tilde{u}_{1A}^\ast$.
Similarly, if $\beta_1 > 0$ we write $u_{1B} := \frac{1}{\sqrt{\beta_1}} T \tilde{u}_{1B}^\ast$, and if $\beta_1 < 0$ we write $u_{1B} := \frac{1}{\sqrt{-\beta_1}} T \tilde{u}_{1B}$.
One can verify that these give a pair of normalized modes $(u_{1A},u_{1A}^\ast)$ in $A$ and $(u_{1B},u_{1B}^\ast)$ in $B$.
This choice of mode pair is unique (up to single-mode symplectic transformations), since any other two-dimensional subspace corresponding to a mode in $A$ would need to contain $(\tilde{u}_{1A},\tilde{u}_{1A}^\ast)$, and similar for $B$. 

Using the corresponding dual vectors $v_{1A} = i \omega u_{1A}^\ast$ and $v_{1B} = i \omega u_{1B}^\ast$, we construct the projection
\begin{align}
  P_1 = u_{1A} v_{1A}^T + u_{1A}^\ast v_{1A}^{\ast T} + u_{1B} v_{1B}^T + u_{1B}^\ast v_{1B}^{\ast T}.
\end{align}
The reduced covariance matrix on this pair of modes is then $G_1 = P_1 G P_1^T$.
After applying partial transposition, we have $G_1^\Gamma = \Gamma G_1 \Gamma^T = P_1^\Gamma G^\Gamma P_1^\Gamma$, where $P_1^\Gamma := \Gamma P_1 \Gamma$ is also a projection.
One can verify that
\begin{align}
  P_1^\Gamma = \tilde{u}_{1A} \tilde{v}_{1A}^T + \tilde{u}_{1A}^\ast \tilde{v}_{1A}^{\ast T} + \tilde{u}_{1B} \tilde{v}_{1B}^T + \tilde{u}_{1B}^\ast \tilde{v}_{1B}^{\ast T},
\end{align}
where $\tilde{v}_{1A} := \frac{i}{\alpha_1} \omega \tilde{u}_{1A}^\ast$ and $\tilde{v}_{1B} := \frac{i}{\beta_1} \omega \tilde{u}_{1B}^\ast$.
We see that this achieves the aim of finding a pair of modes which contains $\tilde{u}_1$ in their span after applying the partial transpose, since we have $P_1^\Gamma \tilde{u}_1 = \tilde{u}_1$.

Let us verify that $\tilde{\nu}_1$ remains a symplectic eigenvalue of $G_1^\Gamma$.
In the subspace spanned by $\{ \tilde{u}_{1A}, \tilde{u}_{1A}^\ast, \tilde{u}_{1B}, \tilde{u}_{1B}^\ast \}$, we have the mode corresponding to $\tilde{u}_1 = \tilde{u}_{1A} + \tilde{u}_{1B}$.
We can find a second mode in this subspace as follows.
The dual vector of $\tilde{u}_1$ is $\tilde{v}_1 = i \omega \tilde{u}_1^\ast = \alpha_1 \tilde{v}_{1A} + \beta_1 \tilde{v}_{1B}$.
We want to find a second normalized mode, $\tilde{u}_1'$, in this four-dimensional subspace such that $\tilde{v}_1^T \tilde{u}_1' = 0$ and $\tilde{v}_1^{\ast T} \tilde{u}_1' = 0$.
If $\alpha_1 \beta_1 > 0$ we use $\tilde{u}_1' := \frac{1}{\sqrt{\alpha_1 \beta_1}} ( \beta_1 \tilde{u}_{1A} - \alpha_1 \tilde{u}_{1B} )$, and if $\alpha_1 \beta_1 < 0$ we use $\tilde{u}_1' := \frac{1}{\sqrt{-\alpha_1 \beta_1}} ( \beta_1 \tilde{u}_{1A}^\ast - \alpha_1 \tilde{u}_{1B}^\ast )$.
We also have the corresponding dual vector, $\tilde{v}_1' = i \omega \tilde{u}_1'^\ast$, for which we have $\tilde{v}_1'^T \tilde{u}_1 = \tilde{v}_1'^T \tilde{u}_1^\ast = 0$.
By construction, we then have a pair of modes spanning this subspace which can be used to express $P_1^\Gamma$ as
\begin{align}
  P_1^\Gamma = \tilde{u}_1 \tilde{v}_1^T + \tilde{u}_1^\ast \tilde{v}_1^{\ast T} + \tilde{u}_1' \tilde{v}_1'^T + \tilde{u}_1'^\ast \tilde{v}_1'^{\ast T}.
\end{align}
Due to the symplectic orthonormality of the symplectic eigenvectors of the full PT covariance matrix $G^\Gamma$, and $\tilde{v}_1'^T \tilde{u}_1 = \tilde{v}_1'^T \tilde{u}_1^\ast = 0$, we then see that when applying $P_1^\Gamma$ to the expansion for $G^\Gamma$, we have
\begin{align}
  G_1^\Gamma &= \tilde{\nu}_1 ( \tilde{u}_1^\ast \tilde{u}_1^T + \tilde{u}_1 \tilde{u}_1^{\ast T} ) \nonumber \\
  &\qquad + (\tilde{u}_1' \tilde{v}_1'^T + \tilde{u}_1'^\ast \tilde{v}_1'^{\ast T}) G^\Gamma (\tilde{v}_1' \tilde{u}_1'^T + \tilde{v}_1'^\ast \tilde{u}_1'^{\ast T}).
\end{align}
Hence, we see that $G_1^\Gamma$ splits into a direct sum over the two modes $(\tilde{u}_1,\tilde{u}_1^\ast)$ and $(\tilde{u}_1',\tilde{u}_1'^\ast)$, i.e., they are uncorrelated.
Therefore, $\tilde{\nu}_1$ is indeed a symplectic eigenvalue of $G_1^\Gamma$.
Note that, since $G_1$ is the covariance matrix for a single pair of modes, the second symplectic eigenvalue from the mode $(\tilde{u}_1',\tilde{u}_1'^\ast)$ must be larger than one, hence does not contribute to the logarithmic negativity.

To summarize, we have shown that the pair of modes $(u_{1A},u_{1A}^\ast)$ in $A$ and $(u_{1B},u_{1B}^\ast)$ in $B$, which are constructed by normalizing the components of $u_1 = \Gamma \tilde{u}_1 = \tilde{u}_{1A} + T \tilde{u}_{1B}$, achieve the optimal bound for the logarithmic negativity from the previous section.
Further, this pair of modes is unique, up to single-mode transformations.

\subsection{Accessing the total negativity}

The total logarithmic negativity between $A$ and $B$ is determined from the collection of PT modes $(\tilde{u}_n,\tilde{u}_n^\ast)$ with $\tilde{\nu}_n < 1$.
Here we will show that (under the assumption of local symplectic orthogonality) we can repeat the above procedure to construct a set of pairs of modes in $A$ and $B$, each pair with logarithmic negativity $-\log_2 \tilde{\nu}_n$.

Similar to before, for each $n$ with $\tilde{\nu}_n < 1$, we decompose $\tilde{u}_n = \tilde{u}_{nA} + \tilde{u}_{nB}$.
This suggests restricting to the subspaces $\text{span} \{ \tilde{u}_{nA}, \tilde{u}_{nA}^\ast \}_n$ in $A$ and $\text{span} \{ T \tilde{u}_{nB}, T \tilde{u}_{nB}^\ast \}_n$ in $B$.
Note that it is not guaranteed that we have $\tilde{u}_{nA}^{\ast T} \omega \tilde{u}_{n'A} \sim i \delta_{nn'}$, $\tilde{u}_{nB}^{\ast T} \omega \tilde{u}_{n'B} \sim i \delta_{nn'}$, etc., from  $\tilde{u}_n^{\ast T} \omega \tilde{u}_{n'} = i \delta_{nn'}$ alone.
\footnote{In the Hilbert space, such a symplectic non-orthogonality would correspond to operators which do not commute, hence they would not be associated with separate modes of the subsystems $A$ and $B$.}
However, it was argued in \cite{klco_entanglement_2023,gao_partial_2024} that the vacuum of the scalar field has special structure which guarantees that one does indeed have local symplectic orthogonality, i.e.,
\begin{align}
  \tilde{u}_{nA}^{\ast T} \omega \tilde{u}_{n'A} = i \alpha_n \delta_{nn'}, \quad & \quad \tilde{u}_{nB}^{\ast T} \omega \tilde{u}_{n'B} = i \beta_n \delta_{nn'}, \nonumber\\
  \tilde{u}_{nA}^{T} \omega \tilde{u}_{n'A} = 0, \qquad \; & \; \qquad \tilde{u}_{nB}^{T} \omega \tilde{u}_{n'B} = 0,
\end{align}
for some $\alpha_n, \beta_n \neq 0$ (with $\alpha_n + \beta_n = 1$).
(We verify this explicitly in our case in \cref{sec:smearings}, where we find $\alpha_n = \beta_n = \tfrac12$ $\forall n$.)
This implies that we can construct a set of pairs of modes in $A$ and $B$, each of which corresponds to one of the $\tilde{\nu}_n < 1$.
In \cite{klco_entanglement_2023,gao_partial_2024}, these pairs were referred to as the ``negativity cores'' of the system $AB$.
One can decompose the entire subsystem $A$ (similarly $B$) into these core modes along with the symplectic complement of the core subspace in $A$, i.e., the subspace $\{ u \in V_A : \tilde{u}_{nA}^T \omega u = \tilde{u}_{nA}^{\ast T} \omega u = 0 \, \forall \, n \}$.\footnote{One can construct a basis for this subspace using a symplectic analogue of the Gram-Schmidt process (e.g., as done in \cite{klco_entanglement_2023}).}
In \cite{klco_entanglement_2023,gao_partial_2024}, these complementary subspaces of $A$ and $B$ were referred to as the ``halo.''
This structure implies that the state space factorizes into a tensor product of Hilbert spaces associated with each of the core modes, along with the halo.

We can follow a similar procedure as the previous section to explicitly construct the core modes, as well as show that the restriction to these core modes has the same logarithmic negativity as the full $AB$ system.
First we write the normalized modes as
\begin{align}
  u_{nA}& := \begin{cases}
    \frac{1}{\sqrt{\alpha_n}} \tilde{u}_{nA}, & \text{if $\alpha_n > 0$} \\
    \frac{1}{\sqrt{-\alpha_n}} \tilde{u}_{nA}^\ast, & \text{if $\alpha_n < 0$}
  \end{cases}\,,
  \nonumber\\
  u_{nB} &:= \begin{cases}
    \frac{1}{\sqrt{\beta_n}} T \tilde{u}_{nB}^\ast, & \text{if $\beta_n > 0$} \\
    \frac{1}{\sqrt{-\beta_n}} T \tilde{u}_{nB}, & \text{if $\beta_n < 0$}
  \end{cases}\,,
\end{align}
as well as the corresponding covectors $v_{nA} = i \omega u_{nA}^\ast$ and $v_{nB} = i \omega u_{nB}^\ast$.
Using these, we can construct the projection onto the core subspace as $P_C = \sum_n ( u_{nA} v_{nA}^T + u_{nB} v_{nB}^T + \text{c.c.} )$.
The reduced covariance matrix on this subspace is then $G_C = P_C G P_C^T$, and after partial transposition, we have $G_C^\Gamma = P_C^\Gamma G^\Gamma (P_C^\Gamma)^T$, with
\begin{align}
  P_C^\Gamma = \sum_n ( \tilde{u}_{nA} \tilde{v}_{nA}^T + \tilde{u}_{nB} \tilde{v}_{nB}^T + \text{c.c.} ),
\end{align}
where $\tilde{v}_{nA} = \frac{i}{\alpha_n} \omega \tilde{u}_{nA}$ and $\tilde{v}_{nB} = \frac{i}{\beta_n} \omega \tilde{u}_{nB}$.
In each of the subspaces indexed by $n$, we can change bases to the pair of normalized modes $(\tilde{u}_n,\tilde{u}_n^\ast)$ and $(\tilde{u}_n',\tilde{u}_n'^\ast)$, where
\begin{align}
  \tilde{u}_n' := \begin{cases}
    \frac{1}{\sqrt{\alpha_n \beta_n}} ( \beta_n \tilde{u}_{nA} - \alpha_n \tilde{u}_{nB} ), & \text{if $\alpha_n \beta_n > 0$} \\
    \frac{1}{\sqrt{-\alpha_n \beta_n}} ( \beta_n \tilde{u}_{nA}^\ast - \alpha_n \tilde{u}_{nB}^\ast ), & \text{if $\alpha_n \beta_n < 0$}.
  \end{cases}
\end{align}
Along with $\tilde{v}_n' = i \omega \tilde{u}_n'^\ast$, we have
\begin{align}
  P_C^\Gamma = \sum_{\tilde{\nu}_n < 1} ( \tilde{u}_n \tilde{v}_n^T + \tilde{u}_n' \tilde{v}_n'^T + \text{c.c.} ).
\end{align}
Now let us examine the decomposition of the full PT covariance matrix,
\begin{align}
  G^\Gamma &= \sum_{\tilde{\nu}_n < 1} \tilde{\nu}_n ( \tilde{u}_n^\ast \tilde{u}_n^T + \tilde{u}_n \tilde{u}_n^{\ast T} ) + \sum_{\tilde{\nu}_n \geq 1} \tilde{\nu}_n ( \tilde{u}_n^\ast \tilde{u}_n^T + \tilde{u}_n \tilde{u}_n^{\ast T} ) \nonumber\\
  &=: G^\Gamma_\mathcal{N} + G^\Gamma_{\cancel{\mathcal{N}}}.
\end{align}
If we denote part of the above projection as $P' := \sum_{\tilde{\nu}_n < 1} ( \tilde{u}_n' \tilde{v}_n'^T + \text{c.c.} )$, then by our construction we have
\begin{align}
  G_C^\Gamma = P_C^\Gamma G^\Gamma P_C^{\Gamma T} = G^\Gamma_\mathcal{N} \oplus P' G^\Gamma_{\cancel{\mathcal{N}}} P'^T.
\end{align}
Therefore, the symplectic eigenvalues of $G_C^\Gamma$ consists of those of $G^\Gamma_\mathcal{N}$, which are those of $G^\Gamma$ that contribute to the negativity, as well as those of $P' G^\Gamma_{\cancel{\mathcal{N}}} P'^T$.
One can show that the symplectic eigenvalues of $P' G^\Gamma_{\cancel{\mathcal{N}}} P'^T$ are all bounded below by one, using a similar argument to our previous optimality proof.
Namely, if we let $\tilde{\nu}''$ be the smallest symplectic eigenvalue of $P' G^\Gamma_{\cancel{\mathcal{N}}} P'^T$, and $\tilde{v}''$ the normalized dual vector of the corresponding symplectic eigenvector, then $\tilde{\nu}'' = \tilde{v}''^{\ast T} G^\Gamma_{\cancel{\mathcal{N}}} \tilde{v}'' = \tilde{v}''^{\ast T} G^\Gamma \tilde{v}''$ (note that $P'^T \tilde{v}'' = \tilde{v}''$ since it is in the subspace spanned by $\{ \tilde{v}_n' \}_{\tilde{\nu}_n < 1}$).
Since $\tilde{v}''$ is normalized as $\tilde{v}''^{\ast T} \Omega \tilde{v}'' = i$,
\begin{align}
  1 &= \sum_n ( |\tilde{v}''^T \tilde{u}_n|^2 - |\tilde{v}''^T \tilde{u}_n^\ast |^2 ) \nonumber\\
  &= \sum_{\tilde{\nu}_n \geq 1} ( |\tilde{v}''^T \tilde{u}_n|^2 - |\tilde{v}''^T \tilde{u}_n^\ast |^2 ).
\end{align}
Therefore,
\begin{align}
  \tilde{\nu}'' &= \tilde{v}''^{\ast T} G^\Gamma \tilde{v}''\nonumber \\
  &= \sum_{\tilde{\nu}_n \geq 1} \tilde{\nu}_n ( |\tilde{v}''^T \tilde{u}_n |^2 + | \tilde{v}''^T \tilde{u}_n^\ast |^2 ) \nonumber\\
  &\geq \sum_{\tilde{\nu}_n \geq 1} ( |\tilde{v}''^T \tilde{u}_n |^2 + | \tilde{v}''^T \tilde{u}_n^\ast |^2 )\nonumber \\
  &= 1 + 2 \sum_{\tilde{\nu}_n \geq 1} | \tilde{v}''^T \tilde{u}_n^\ast |^2 \geq 1.
\end{align}

\section{Phase space of the 1+1d massless scalar field}
\label{sec:qft_phase_space}

We will now apply the framework in the previous sections to the (1+1)-dimensional massless scalar field.
In this section, we will develop a precise definition of the phase space and co-phase space, as well as the kernels representing the bilinear forms in the right-left mover representation.
We will then use these in~\cref{sec:analytical} to calculate the logarithmic negativity and core modes for two intervals $A$ and $B$.
Throughout, we will use the Fourier convention
\begin{align}
  \phi(x) = \int \frac{dk}{2\pi} \, \tilde{\phi}(k) \, e^{ikx}.
\end{align}

\subsection{Phase space of canonical fields}

First, we will need to precisely specify the phase space and the space of linear observables (co-phase space) in terms of the canonical $\phi$ and $\pi$ fields.
Let us begin with the space of observables, which correspond to smearing functions in the quantum field theory.
We will assume these to be contained within the space of smooth functions with compact support, $(V^\ast)_{\phi\pi} \subset C_0^\infty(\mathbb{R}) \oplus C_0^\infty(\mathbb{R})$, with element $(f,h)_{\phi\pi}$ representing the observable $(f,h)_{\phi\pi}^T (\phi,\pi)_{\phi\pi} = \int dx \, [ f(x) \phi(x) + h(x) \pi(x) ]$.
Note that the $\phi\pi$ subscripts are used to indicate that we are using the representation in terms of the canonical fields $\phi$ and $\pi$.

The canonical commutation relations $[ \phi(x), \pi(x') ] = i \delta(x-x')$ and $[ \phi(x), \phi(x') ] = [ \pi(x), \pi(x') ] = 0$ are encoded in $\Omega : V^\ast \times V^\ast \to \mathbb{R}$ as
\begin{align}
  &\Omega((f_1,h_1)_{\phi\pi},(f_2,h_2)_{\phi\pi}) := \nonumber\\
  &\qquad \qquad \qquad \int dx \, [ f_1(x) h_2(x) - f_2(x) h_1(x) ].
\end{align}
The equal-time vacuum correlation functions of the quantum field theory are
\begin{align}
  G_\phi(x,x') := \bra{0} \{ \phi(x), \phi(x') \} \ket{0} &= \int \frac{dk}{2\pi} \frac{1}{|k|} e^{ik(x-x')}, \\
  G_\pi(x,x') := \bra{0} \{ \pi(x), \pi(x') \} \ket{0} &= \int \frac{dk}{2\pi} |k| e^{ik(x-x')},
\end{align}
and $\bra{0} \{ \phi(x), \pi(x') \} \ket{0} = 0$.
Therefore, the bilinear form $G : V^\ast \times V^\ast \to \mathbb{R}$ is represented by $G_\phi \oplus G_\pi$, i.e.,
\begin{align}
  &G((f_1,h_1)_{\phi\pi},(f_2,h_2)_{\phi\pi}) := 
  \nonumber\\
  &\int dx dx' \, [ f_1(x) G_\phi(x,x') f_2(x') + h_1(x) G_\pi(x,x') h_2(x') ].
\end{align}
Because $G_\phi$ has an infrared divergence at $k = 0$, in order for this quantity to be finite, we need to restrict its inputs to some subspace of $C_0^\infty(\mathbb{R})$.
We see that the input functions to $G_\phi$ must at least vanish at $k = 0$, which implies that they have vanishing integrals, $\int dx \, f(x) = 0$.
Combined with the fact that these functions have compact support, hence are analytic in Fourier space, means that they behave as $\mathcal{O}(k)$ as $k \to 0$.
Hence, the vanishing integral condition is also sufficient for the action of $G_\phi$ to be finite.
Therefore, our co-phase space will be
\begin{align}
  (V^\ast)_{\phi\pi} = \mathcal{F}(\mathbb{R}) \oplus C_0^\infty(\mathbb{R}),
\end{align}
where
\begin{align}
  \mathcal{F}(\mathbb{R}) := \{ f \in C_0^\infty(\mathbb{R}) : \int dx \, f(x) = 0 \}.
\end{align}
The vanishing integral condition for $f \in \mathcal{F}(\mathbb{R})$ is equivalent to $f(x) = F'(x)$ for some $F \in C_0^\infty(\mathbb{R})$.\footnote{Let $f \in C_0^\infty(\mathbb{R})$ with $\text{supp} (f) \subset (a,b)$. Then $F(x) := \int_a^x dx' \, f(x')$ is smooth and has support in $(a,b)$ since the integral of $f$ vanishes. Hence $f(x) = F'(x)$ with $F \in C_0^\infty(\mathbb{R})$.}
Thus, $\int dx \, f(x) \phi(x) = - \int dx \, F(x) \phi'(x)$, which can be interpreted physically as the field $\phi'(x)$ being observable, rather than $\phi(x)$.

The classical phase space of the massless scalar field in 1+1 dimensions can be identified with a pair of functions $\phi(x)$ and $\pi(x)$ used as initial data at $t=0$.
Here we will assume that these are both contained within the space of smooth functions with compact support, $(V)_{\phi\pi} \subset C_0^\infty(\mathbb{R}) \oplus C_0^\infty(\mathbb{R})$, with elements denoted $(\phi,\pi)_{\phi\pi}$.
The symplectic form $\omega = \Omega^{-1}$ is given by
\begin{align}
  &\omega((\phi_1,\pi_1)_{\phi\pi},(\phi_2,\pi_2)_{\phi\pi}) \nonumber\\
  &= - \int dx \, [ \phi_1(x) \pi_2(x) - \phi_2(x) \pi_1(x) ].
\end{align}
To determine $g = G^{-1}$, notice that as integral kernels we formally have $G_\pi = G_\phi^{-1}$, so $g = G_\pi \oplus G_\phi$,
\begin{align}
 & g((\phi_1,\pi_1)_{\phi\pi},(\phi_2,\pi_2)_{\phi\pi}) := 
  \nonumber\\
  &\int dx dx' \, \left[ \phi_1(x) G_\pi(x,x') \phi_2(x') 
  + \pi_1(x) G_\phi(x,x') \pi_2(x') \right].
\end{align}
We see that for this quantity to be well defined, we need to impose the vanishing integral condition on the $\pi$ fields.
Therefore, our phase space is
\begin{align}
  (V)_{\phi\pi} = C_0^\infty(\mathbb{R}) \oplus \mathcal{F}(\mathbb{R}).
\end{align}

\subsection{Right- and left-moving representation}

We will find it convenient to rewrite these quantities in terms of right- and left-moving fields, which corresponds to a change of basis in the phase space.
Solutions to the wave equation in 1+1 dimensions can be written as $\phi(t,x) = \phi_R(t-x) + \phi_L(t+x)$.
Therefore, at $t=0$ we have
\begin{align}
  \phi(x) &= \phi_R(x) + \phi_L(x), \\
  \pi(x) &= -\phi_R'(x) + \phi_L'(x).
\end{align}
The fields $\phi_{R/L}(x)$ are determined nonuniquely by
\begin{align}
  \phi_R(x) &= \frac12 \phi(x) - \frac12 \int_{-\infty}^x dx' \, \pi(x') + C, \\
  \phi_L(x) &= \frac12 \phi(x) + \frac12 \int_{-\infty}^x dx' \, \pi(x') - C,
\end{align}
where $C$ is an arbitrary constant.
Since $C$ is not observable, we are free to fix $C = 0$, which makes the decomposition unique.
It further implies that the fields $\phi_{R/L}$ will be compactly supported.
We can thus equivalently represent elements in phase space by $(\phi,\pi)_{\phi\pi} = (\phi_R,\phi_L)_{RL}$, where the $RL$ subscript denotes the basis of $\phi_{R/L}$ fields, which are related to $\phi$ and $\pi$ by the above transformations.
The phase space in this representation is
\begin{align}
  (V)_{RL} = C_0^\infty(\mathbb{R}) \oplus C_0^\infty(\mathbb{R}).
\end{align}

We can similarly write linear observables in the $RL$ representation.
To show the corresponding transformation, we write
\begin{align}
  (f,0)_{\phi\pi}^T (\phi,\pi)_{\phi\pi} &= \int dx \, f(x) \phi(x) \nonumber\\
  &= \int dx \, f(x) \phi_R(x) + \int dx \, f(x) \phi_L(x) 
  \nonumber\\  &
  = (f,f)_{RL}^T \, (\phi_R,\phi_L)_{RL},
\end{align}
and
\begin{align}
  (0,h)_{\phi\pi}^T (\phi,\pi)_{\phi\pi} &= \int dx \, h(x) \pi(x) \nonumber\\
  &= -\int dx \, h(x) \phi_R'(x) + \int dx \, h(x) \phi_L'(x) \nonumber\\
  &= \int dx \, h'(x) \phi_R(x) - \int dx \, h'(x) \phi_L(x) \nonumber\\
  &= (h',-h')_{RL}^T \, (\phi_R,\phi_L)_{RL}.
\end{align}
Extending these by linearity, we get
\begin{align}
  (f,h)_{\phi\pi} = (f + h', f - h')_{RL}.
\end{align}
Hence, the co-phase space in this representation is
\begin{align}
  (V^\ast)_{RL} = \mathcal{F}(\mathbb{R}) \oplus \mathcal{F}(\mathbb{R}).
\end{align}
The inverse of the above transformation is
\begin{align}
  (f_R,f_L)_{RL} = ( \tfrac12 (f_R + f_L), \tfrac12 (F_R - F_L) )_{\phi\pi},
\end{align}
where $F_{R/L}(x) = \int_{-\infty}^x dx' \, f_{R/L}(x')$.
Note that $F_{R/L}$ is compactly supported due to the vanishing integrals of $f_{R/L}$.

Using the above transformations, we can easily compute the $RL$ representations of $\Omega$ and $G$.
First, note that we can write the transformation from $(\phi,\pi)_{\phi\pi}$ to $(\phi_R,\phi_L)_{RL}$ in Fourier space as
\begin{align}
  \tilde{\phi}_{R/L}(k) = \tfrac12 [ \tilde{\phi}(k) \mp \tfrac{1}{ik} \tilde{\pi}(k) ].
\end{align}
We then find $(\Omega)_{RL} = \Omega_R \oplus \Omega_L$ and $(G)_{RL} = G_R \oplus G_L$, with
\begin{align}
  \Omega_R(x,x') = -\Omega_L(x,x') &= \int \frac{dk}{2\pi} \frac{1}{2ik} e^{ik(x-x')} \nonumber \\
  &= \frac14 \sgn(x-x') \\[2ex]
  G_R(x,x') = G_L(x,x') &= \int \frac{dk}{2\pi} \frac{1}{2|k|} e^{ik(x-x')}.
\end{align}
Hence the right- and left-moving fields correspond to two subsystems which are uncorrelated.
Note that the vanishing integral condition for smearing functions implies that two right- (or left-)moving observables supported in spacelike-separated regions will commute, even though the kernel $\Omega_{R/L}$ is nonlocal.
Similarly, we have $(\omega)_{RL} = \omega_R \oplus \omega_L$ and $(g)_{RL} = g_R \oplus g_L$, with
\begin{align}
  \omega_R(x,x') = -\omega_L(x,x') &= \int \frac{dk}{2\pi} (2ik) e^{ik(x-x')} \nonumber \\
  &= 2 \frac{d}{dx} \delta(x-x'), \\[2ex]
  g_R(x-x') = g_L(x-x') &= \int \frac{dk}{2\pi} (2|k|) e^{ik(x-x')}.
\end{align}
Therefore, $(J)_{RL} = J_R \oplus J_L$ with $J_R = - G_R \omega_R = G_L \omega_L = - J_L$, and
\begin{align}
  J_R(x,x') = -J_L(x,x') &= -i \int \frac{dk}{2\pi} \sgn(k) e^{ik(x-x')} \nonumber\\
  &= \frac{1}{\pi} \text{P.V.} \frac{1}{x-x'}.
\end{align}
We recognize $J_R$ as the Hilbert transform.

\subsection{Subsystems of (multiple) intervals}

Now we will discuss the restriction to local subsystems.
A local observer in the quantum field theory only has access to some region of space, for example an interval $A = (a_1,a_2)$, and thus their linear observables should be supported only in $A$.
In the case of a single interval, we identify the subspaces of $V$ and $V^\ast$ corresponding to the subsystem $A$ by $(V_A)_{RL} = C_0^\infty(A) \oplus C_0^\infty(A)$ and $(V^\ast_A)_{RL} = \mathcal{F}(A) \oplus \mathcal{F}(A)$, respectively, where
\begin{align}
  \mathcal{F}(A) := \{ f \in C_0^\infty(A) : \int_{a_1}^{a_2} dx \, f(x) = 0 \}.
\end{align}

In this paper, we are also interested in subsystems of two intervals $A = (a_1,a_2)$ and $B = (b_1,b_2)$, with $a_1 < a_2 < b_1 < b_2$.
In this case, if we were to define $\mathcal{F}(A \cup B)$ by merely imposing that the total integral over $A \cup B$ vanishes, then for $f \in \mathcal{F}(A \cup B)$ we would have $F(x) = \int_{a_1}^x dx' \, f(x')$ compactly supported on $(a_1,b_2)$, but it would be allowed to have a nonzero constant value between the two intervals.
This would imply, for example, that we would allow for observables $\int dx \, f(x) \phi(x) = - \int dx \, F(x) \phi'(x)$ where $F(x)$ may be nonzero between $A$ and $B$.
It would also imply that generally $\phi_{R/L}$ would be supported on $(a_1,b_2)$ rather than $A \cup B$, even if $\phi$ and $\pi$ are supported on the latter.

In our physical setup, we consider $A \cup B$ to correspond to the joint system of two observers, each with access to one of the intervals, rather than a single nonlocal observer with access to the whole of $A \cup B$.
Therefore, we conclude that the joint system should be described by the direct sum of these two subsystems.
We then define
\begin{align}
  \mathcal{F}(A \cup B) &:= \mathcal{F}(A) \oplus \mathcal{F}(B) \nonumber \\[1ex]
  &= \{ f \in C_0^\infty(A \cup B) : \nonumber \\
  &\qquad \quad \int_{a_1}^{a_2} dx \, f(x) = \int_{b_1}^{b_2} dx \, f(x) = 0 \},
\end{align}
along with $(V_{AB})_{RL} = C_0^\infty(A \cup B) \oplus C_0^\infty(A \cup B)$ and $(V^\ast_{AB})_{RL} = \mathcal{F}(A \cup B) \oplus \mathcal{F}(A \cup B)$.
One could extend the definition analogously to multiple intervals.

As a consistency check, one can follow a similar procedure to that in \cref{sec:analytical} to construct eigenfunctions of $J^T$ (i.e., without partial transposition, so that the eigenfunctions correspond to the normal modes of the reduced state of the field in $A \cup B$).
It is possible to construct eigenfunctions which have a vanishing total integral over $A \cup B$ and symplectic eigenvalues with $|\nu| < 1$ (which violates the uncertainty principle).
These unphysical modes can be removed by imposing the stronger condition above, where the integral must vanish over the individual intervals, since one can show it cannot be satisfied by any of the eigenfunctions with $|\nu| < 1$.
This demonstrates the necessity of this stronger condition.\footnote{These conditions are also imposed in \cite{arias_entropy_2018} in calculating the eigenfunctions of $J^T$. Here we show how they arise from the phase space structure of the field theory.}

\subsection{Phase space formalism in infinite dimensions}

To conclude this section, we highlight
some of the subtleties which arise in the phase space formalism in this infinite-dimensional setting.
While this is not essential reading to follow the later calculations, we include this discussion since we have modified some of the formal structure presented in \cref{sec:gaussian}.
For simplicity, we  focus only on the right-moving sector.

First, unlike the finite-dimensional setting, the co-phase space $(V^\ast)_R = \mathcal{F}(\mathbb{R})$ is only a proper subspace of the (continuous) dual space of the global phase space $(V)_R = C_0^\infty(\mathbb{R})$, while the full dual space consists of a space of distributions acting on the latter.
Recall that we made these restrictions to obtain appropriate domains for the bilinear forms $\Omega$, $G$, $\omega$, and $g$.
In finite dimensions, we also had that $\Omega$ induced a bijection $V^\ast\to V$.
Despite the fact that $V^\ast$ is no longer the dual space of $V$, by the construction of $V$ and $V^\ast$, $\Omega$ is still a bijection with $\omega$ acting as its inverse.

The main subtlety we address in this subsection is how to define the restriction of $J$ (and the bilinear forms $G$, $\omega$, etc.) to a subsystem, for example, corresponding to an interval $A = (a_1,a_2)$.
In finite dimensions, one can simply use a projection $P_A$ onto the subspace $V_A$ to obtain compressions of the form $G_A = P_A G P_A^T$ or $J_A = P_A J P_A$.
In the quantum field theory, we can formally project the kernel $J(x,x')$ to the interval $A$ using a characteristic function $\chi_A(x)$ by $\chi_A(x) J(x,x') \chi_A(x')$.
However, suppose it then acts on a function $u(x) \in C_0^\infty(A)$.
The first $\chi_A(x')$ simply acts as the identity, followed by $J(x,x')$ which multiplies by $(-i) \, \sgn(k)$ in Fourier space.
Functions with compact support, such as $u(x)$, must be an entire function in Fourier space (by a Paley-Wiener theorem \cite{rudin_real_1987}).
However, after $J$ acts on $u$, the resulting function is no longer analytic in Fourier space, hence $(Ju)(x)$ cannot have compact support (although it is still smooth).
Multiplying this function by another $\chi_A(x)$ will create discontinuities at the boundary of $A$.
Therefore, although the resulting function will be supported in $A$, it is not an element of $(V_A)_R = C_0^\infty(A)$ (nor even the full space $(V)_R = C_0^\infty(\mathbb{R})$).
Similarly, suppose we reduce $G$ by $\chi_A(x) G(x,x') \chi_A(x')$.
If we then consider a general $u(x) \in \mathcal{F}(\mathbb{R})$, the Fourier transform of $\chi_A(x) u(x)$ may not vanish at $k=0$, hence inserting this into $G$ will cause it to diverge.

The basic problem with this way of reducing these quantities to $A$ is that the subspace $V_A$ is not the image of the projection $\chi_A(x)$ on the global phase space $V$.
Rather, the physical motivation behind our definition of $V_A^\ast$, for example, is an observer in $A$ who has access to a limited set of linear observables from $V^\ast$ (i.e., those supported in $A$).
Hence, we should identify the bilinear forms reduced to $A$ by restricting the domains of the global forms.
For example, we restrict $G$ by $G_A := G |_A : \mathcal{F}(A) \times \mathcal{F}(A) \to \mathbb{R}$, and similar for the other bilinear forms.
As for the induced map $\Omega : \mathcal{F}(\mathbb{R}) \to C_0^\infty(\mathbb{R})$, note that the manner in which we defined $\mathcal{F}(A)$ ensures that the the image of $\mathcal{F}(A)$ lies in $C_0^\infty(A)$.
This is because, for $v(x) \in \mathcal{F}(A)$, we have $(\Omega v)(x) = \tfrac12 \int_{a_1}^x dx' \, v(x')$, which is smooth and vanishes for $x \geq a_2$.
Further, it is bijective, with the inverse of $\Omega |_A$ being simply the restriction $\omega |_A$.

As for $J_A$, we saw above that the restriction of the domain of the linear operator $J$ to $C_0^\infty(A)$ maps into $C^\infty(\mathbb{R})$ (which is outside of $V$).
Therefore, instead of identifying $J$ as a linear map, we will instead consider it as a bilinear map $J : \mathcal{F}(\mathbb{R}) \times C_0^\infty(\mathbb{R}) \to \mathbb{R}$.
That is, we define $J$ by its ``matrix elements'' $v^T J u$ for $v \in \mathcal{F}(\mathbb{R})$ and $u \in C_0^\infty(\mathbb{R})$.
It is then straightforward to define the reduction to subsystem $A$ by restricting its domain, i.e., $J_A := J |_A : \mathcal{F}(A) \times C_0^\infty(A) \to \mathbb{R}$.
Further, this perspective helps to formulate the eigenvalue problem of $J_A$.
It is known from prior work that one should not expect the eigenfunctions $J_A$ to be in $C_0^\infty(A)$.
Indeed, for the interval $A = (a_1,a_2)$, its eigenfunctions are of the form $(x-a_1)^{-is} (a_2-x)^{is}$ (see, e.g., \cite{arias_entropy_2018}).
However, typically one employs these mode functions only as a basis in which to expand the field operators, and they are not required to reside in the domain of $J_A$ (e.g., in a way analogous to a rigged Hilbert space construction~\cite{madrid_role_2005}).
Defining $J_A$ in terms of its matrix elements then suggests to formulate the eigenvalue problem for $J_A$ in a weak sense,
\begin{align}
  (J_A^T v)^T u_\lambda = \lambda \, v^T u_\lambda, \quad \forall v \in V_A^\ast.
\end{align}
That is, we define it in a way such that $J_A^T$ acts only on functions in $V_A^\ast$, and it allows for $u_\lambda$ to be a distribution (instead of being constrained to $V_A$).
This is analogous to formulations of the eigenvalue problem for the position and momentum operators in quantum mechanics, with eigenfunctions as distributions rather than elements in the Hilbert space.
One can similarly define an eigenvalue problem for $J_A^T$ by
\begin{align}
  v_\lambda^T (J_A u) = \lambda \, v_\lambda^T u, \quad \forall u \in V_A.
\end{align}
Thus, we allow for the eigenfunctions $v_\lambda$ of $J_A^T$ (i.e., left eigenfunctions of $J_A$) to be distributions over $V_A$.
Note that, because our linear observables in $V_A^\ast$ were defined so that they annihilate the zero mode of the phase space elements under the pairing $v^T u$ (i.e., the vanishing integral condition of $\mathcal{F}(A)$), we also require the eigenfunctions of $J_A^T$ to have this property.

Therefore, when we write the eigenvalue problems $J_A u_\lambda = \lambda \, u_\lambda$ and $J_A^T v_\lambda = \lambda \, v_\lambda$, they should be understood in the above sense.
Note that this extends straightforwardly to restrictions to $\mathcal{F}(A \cup B)$, as well as to the partially-transposed version we consider in \cref{sec:analytical}.

\section{Analytical calculation of logarithmic negativity and core modes}
\label{sec:analytical}

Here we will provide the details of our calculation of the logarithmic negativity and eigenfunctions of the operator $J^\ammaG := (J^\Gamma)^T$.
Recall, in our setup we have two open intervals $A = (a_1,a_2)$ and $B = (b_1,b_2)$, with $a_1 < a_2 < b_1 < b_2$, which define the subsystems of two local observers.
The corresponding phase space is $(V)_{RL} = C_0^\infty(A \cup B) \oplus C_0^\infty(A \cup B)$ and co-phase space is $(V^\ast)_{RL} = \mathcal{F}(A \cup B) \oplus \mathcal{F}(A \cup B)$, where
\begin{align}
  \mathcal{F}(A \cup B) &:= \{ f \in C_0^\infty(A \cup B) : \nonumber \\
  &\qquad \quad \int_{a_1}^{a_2} dx \, f(x) = \int_{b_1}^{b_2} dx \, f(x) = 0 \}.
\end{align}
We also have the kernels
\begin{align}
  \Omega_R(x,x') = -\Omega_L(x,x') &= \frac14 \text{sgn}(x-x'), \\[1ex]
  \omega_R(x,x') = -\omega_L(x,x') &= 2 \frac{d}{dx} \delta(x-x'), \\[1ex]
  G_R(x,x') = G_L(x,x') &= \int \frac{dk}{2\pi} \frac{1}{2|k|} e^{ik(x-x')}, \\[1ex]
  g_R(x,x') = g_L(x,x') &= \int \frac{dk}{2\pi} (2|k|) e^{ik(x-x')}, \\[1ex]
  J_R(x,x') = -J_L(x,x') &= -i \int \frac{dk}{2\pi} \sgn(k) e^{ik(x-x')} \nonumber \\
  &= \frac{1}{\pi} \text{P.V.} \frac{1}{x-x'}.
\end{align}

\subsection{Partial transpose}

First, we must find an implementation of partial transposition, which acts as the identity on $A$ and applies a transposition to $B$.
In \cref{sec:transpose}, we showed that a transpose on $B$, $T$, is any involution $T^2 = 1$ such that $T \Omega_B T^T = -\Omega_B$, where $\Omega_B$ is $\Omega$ restricted to the subsystem $B$.
If we focus on only the right-moving modes, then since $\Omega_R(x,x') = \tfrac14 \sgn(x-x')$, we see that we can apply a transpose simply by inverting the interval $B=(b_1,b_2)$ with $x \mapsto b_1 + b_2 - x$.
Thus, partial transposition on the joint $AB$ system can be achieved using the following kernel
\begin{align}
  \Gamma(x,x') = \begin{cases}
    \delta(x - x'), & \text{if $x,x' \in A$} \\
    \delta(x + x' - b_1 - b_2), & \text{if $x,x' \in B$} \\
    0, & \text{otherwise}.
  \end{cases}
\end{align}
Since $\Omega_L = -\Omega_R$, we can apply the same map to the left-moving modes.
Therefore, our phase space map implementing partial transposition will be $\Gamma \oplus \Gamma$.

Note that since this acts independently on right- and left-moving modes, we have
\begin{align}
  G^\Gamma := \Gamma G_R \Gamma^T \oplus \Gamma G_L \Gamma^T,
\end{align}
and therefore,
\begin{align}
  J^\Gamma := - G^\Gamma \omega = J_R^\Gamma \oplus J_L^\Gamma,
\end{align}
with $J_L^\Gamma = -J_R^\Gamma$.
The eigenvalues of $J_R^\Gamma := - G_R^\Gamma \omega_R$ come in pairs $\pm i \tilde{\nu}$, hence $J_R^\Gamma$ and $J_L^\Gamma$ have the same spectrum.
Therefore, both sectors contribute equally to the logarithmic negativity and $J^\Gamma$ has (at least) a twofold degeneracy in all of its eigenvalues.
Further, we also see that the degenerate eigenvectors are easily related, since if $J_R^\Gamma \tilde{u} = i \tilde{\nu} \tilde{u}$, then $J_L^\Gamma \tilde{u}^\ast = i \tilde{\nu} \tilde{u}^\ast$.

Therefore, in the following, we will focus only on the right-moving fields, $J_R^\Gamma$, as we can easily obtain the corresponding eigenvalues and eigenvectors of $J_L^\Gamma$.
We will also drop the subscripts $R$ for simplicity.

\subsection{$J^\ammaG$ kernel operator}

For the purpose of finding the eigenvalues of $J^\Gamma$, we will find it easier to work with its transpose $J^\ammaG := (J^\Gamma)^T$.
These have the same spectrum, and since $J^\ammaG$ acts on $(V^\ast)_R = \mathcal{F}(A \cup B)$, this will allow us to solve directly for the smearing functions associated with the eigenspaces contributing to the logarithmic negativity.

\begin{widetext}
First we write the partial transpose of the covariance matrix, $G^\Gamma := \Gamma G \Gamma^T$,
\begin{align}
G^\Gamma(x,x') = \begin{cases}
  \int \frac{dk}{2\pi} \frac{1}{2|k|} e^{ik(x-x')}, & \text{if $x,x' \in A$ or $x,x' \in B$} \\
  \int \frac{dk}{2\pi} \frac{1}{2|k|} e^{ik(x+x'-b_1-b_2)}, & \text{if $x \in A$, $x' \in B$ or $x \in B$, $x' \in A$}.
\end{cases}
\end{align}
Next we compute an expression for $J^\ammaG := (J^\Gamma)^T = \omega G^\Gamma$,
\begin{align}
J^\ammaG(x,x') = 2 \frac{d}{dx} G^\Gamma(x,x') 
&= \begin{cases}
  i \int \frac{dk}{2\pi} \sgn(k) e^{ik(x-x')}, & \text{if $x,x' \in A$ or $x,x' \in B$} \\
  i \int \frac{dk}{2\pi} \sgn(k) e^{ik(x+x'-b_1-b_2)}, & \text{if $x \in A$, $x' \in B$ or $x \in B$, $x' \in A$}
\end{cases}\nonumber \\
&= \begin{cases}
  -\frac{1}{\pi} \text{P.V.} \frac{1}{x-x'}, & \text{if $x,x' \in A$ or $x,x' \in B$} \\
  -\frac{1}{\pi} \text{P.V.} \frac{1}{x+x'-b_1-b_2}, & \text{if $x \in A$, $x' \in B$ or $x \in B$, $x' \in A$}.
\end{cases}
\end{align}
We rewrite the action of $J^\ammaG$ in a form which will be convenient for computing its eigenfunctions.
For $x \in A$,
\begin{align}
(J^\ammaG f)(x) &= -\frac{1}{\pi} \fint_{a_1}^{a_2} \frac{dx'}{x-x'} f(x') - \frac{1}{\pi} \fint_{b_1}^{b_2} \frac{dx'}{x+x'-b_1-b_2} f(x') \nonumber\\
&= -\frac{1}{\pi} \fint_{a_1}^{a_2} \frac{dx'}{x-x'} f(x') - \frac{1}{\pi} \fint_{b_1}^{b_2} \frac{dx'}{x-x'} f(-x'+b_1+b_2), \label{eq:JammaG_A}
\end{align}
where $\fint$ denotes the principal value of the integral.
And for $x \in B$,
\begin{align}
(J^\ammaG f)(x) &= -\frac{1}{\pi} \fint_{a_1}^{a_2} \frac{dx'}{x+x'-b_1-b_2} f(x') - \frac{1}{\pi} \fint_{b_1}^{b_2} \frac{dx'}{x-x'} f(x') \nonumber\\
&= -\frac{1}{\pi} \fint_{a_1}^{a_2} \frac{dx'}{x+x'-b_1-b_2} f(x') - \frac{1}{\pi} \fint_{b_1}^{b_2} \frac{dx'}{x+x'-b_1-b_2} f(-x'+b_1+b_2) \nonumber\\
(J^\ammaG f)(-x+b_1+b_2) &= +\frac{1}{\pi} \fint_{a_1}^{a_2} \frac{dx'}{x-x'} f(x') + \frac{1}{\pi} \fint_{b_1}^{b_2} \frac{dx'}{x-x'} f(-x'+b_1+b_2) \label{eq:JammaG_B}.
\end{align}
\end{widetext}
Now consider the expressions \eqref{eq:JammaG_A} and \eqref{eq:JammaG_B} for an eigenfunction $(J^\ammaG f)(x) = i \tilde{\nu} f(x)$.
Let us write $f = f_A + f_B$, where $f_A$ and $f_B$ have support in $A$ and $B$, respectively.
We will also define $\tilde{f}_B(x) := -f_B(-x+b_1+b_2)$.
Then we see from \eqref{eq:JammaG_A} and \eqref{eq:JammaG_B},
\begin{align}
i \tilde{\nu} f_A(x) &= -\frac{1}{\pi} \fint_{a_1}^{a_2} \frac{dx'}{x-x'} f_A(x') + \frac{1}{\pi} \fint_{b_1}^{b_2} \frac{dx'}{x-x'} \tilde{f}_B(x'), \\
i \tilde{\nu} \tilde{f}_B(x) &= -\frac{1}{\pi} \fint_{a_1}^{a_2} \frac{dx'}{x-x'} f_A(x') + \frac{1}{\pi} \fint_{b_1}^{b_2} \frac{dx'}{x-x'} \tilde{f}_B(x').
\end{align}
Or, if we write $\tilde{f} = f_A + \tilde{f}_B$, we can combine these for $x \in A \cup B$,
\begin{align}
i \nu \tilde{f}(x) = -\frac{1}{\pi} \fint_{a_1}^{a_2} \frac{dx'}{x-x'} \tilde{f}(x') + \frac{1}{\pi} \fint_{b_1}^{b_2} \frac{dx'}{x-x'} \tilde{f}(x'). \label{eq:JammaG_eigval_eqn}
\end{align}
Therefore, we see that a function $\tilde{f} = f_A + \tilde{f}_B$ satisfying \eqref{eq:JammaG_eigval_eqn} corresponds to an eigenfunction $f = f_A + f_B$ of $J^\ammaG$, where $f_B(x) = -\tilde{f}_B(-x+b_1+b_2)$.

\subsection{Boundary value problem in $\mathbb{C}$}

Here we introduce a modification of the method of~\cite{arias_entropy_2018} to reformulate the eigenvalue problem of $J^\ammaG$ as a boundary value problem in the complex plane.
The key difference is the second condition below, which flips the boundary condition in the interval $B$ to account for the partial transposition.

Let us consider the intervals $A \cup B$ as segments of the real axis in $\mathbb{C}$.
For some $0 \neq \lambda \in \mathbb{C}$, consider the problem of finding a function $S(z)$ on $\mathbb{C}$ which is cut along $A \cup B$ with boundary values $S^\pm(x) := S(x + i0^\pm)$, where $x \in A \cup B$, and which satisfies:
\begin{enumerate}[i)]
  \item $S(z)$ analytic on $\mathbb{C} \setminus \overline{A \cup B}$ (where $\overline{A \cup B}$ denotes the closure of $A \cup B$), \label{cond:BVP_1}
  
  \item $S^+(x) = \begin{cases}
    \lambda S^-(x), & x \in A, \\ \lambda^{-1} S^-(x), & x \in B,
  \end{cases}$ \label{cond:BVP_2}
  
  \item $\lim_{|z| \to \infty} |z^2 S(z)| < \infty$, \label{cond:BVP_3}

  \item $\lim_{z \to p} | (z-p) S(z) | = 0$, where $p$ is any of the boundary points $\{ a_1, a_2, b_1, b_2 \}$. \label{cond:BVP_4}
\end{enumerate}
We will demonstrate that solutions to this problem are in one-to-one correspondence with formal eigenfunctions of $J^\ammaG$.
Among these, those which satisfy the vanishing integral condition of $\mathcal{F}(A \cup B)$ will comprise the solutions to our eigenvalue problem.

Suppose we have a function $S(z)$ satisfying conditions \eqref{cond:BVP_1}--\eqref{cond:BVP_4}.
Let $z \in \mathbb{C} \setminus \overline{A \cup B}$, then the following contour integral vanishes when taken along a closed counterclockwise path encircling $z$ and $\overline{A \cup B}$,
\begin{align}
\oint \frac{dz'}{z-z'} S(z') = 0. \label{eq:contour_intgl}
\end{align}
To see this, deform the contour to a large circle of radius $R$ centered at the origin, then
\begin{align}
\left| \oint \frac{dz'}{z-z'} S(z') \right| &\leq 2 \pi R \max_{|z'| = R} \frac{1}{|z-z'|} |S(z')| \nonumber\\
&= \frac{2\pi}{R} \max_{|z'| = R} \frac{1}{|z-z'|} |z'^2 S(z')| \nonumber\\
&\to 0 \quad \text{as $R \to \infty$, using \eqref{cond:BVP_3}.}
\end{align}
Note that this integral also vanishes under weaker conditions than \eqref{cond:BVP_3}, but we will see below that this stronger decay is enforced by the vanishing integral conditions for the eigenfunctions.

Now we take the vanishing integral \eqref{eq:contour_intgl} and decompose it into contributions from the point $z$, the upper and lower sides of the cuts $A \cup B$, as well as small semi-circles around the endpoints $\{ a_1, a_2, b_1, b_2 \}$.
The latter vanish due to condition \eqref{cond:BVP_4}.
Hence we obtain
\begin{align}
S(z) &= \frac{1}{2 \pi i} \int_{A \cup B} \frac{dx'}{z-x'} \left[ S^-(x') - S^+(x') \right].
\label{eq:analytic_extension}
\end{align}
Now we take limits as $z$ approaches the upper and lower sides of the cuts, $z = x + i0^\pm$ with $x \in A \cup B$.
Using the Sokhotski-Plemelj theorem,
\begin{align}
\frac{1}{x-x'+i0^\pm} = \mp \pi i \delta(x-x') + \text{P.V.} \frac{1}{x-x'},
\end{align}
we get
\begin{align}
&S^\pm(x) = \mp \frac12 \left[ S^-(x) - S^+(x) \right] \nonumber \\
&\qquad \qquad + \frac{1}{2 \pi i} \fint_{A \cup B} \frac{dx'}{x-x'} \left[ S^-(x') - S^+(x') \right] \\[1ex]
&S^+(x) + S^-(x) = \frac{1}{\pi i} \fint_{A \cup B} \frac{dx'}{x-x'} \left[ S^-(x') - S^+(x') \right].
\end{align}
Then we make use of condition \eqref{cond:BVP_2}.
Let us write $S^\pm_A(x)$ and $S^\pm_B(x)$ to denote the functions $S^\pm(x)$ with support restricted to the intervals $A$ and $B$, respectively.
\begin{widetext}
For $x \in A$,
\begin{align}
(1 + \lambda^{-1}) S^+_A(x) &= \frac{1}{\pi i} \fint_A \frac{dx'}{x-x'} (\lambda^{-1} - 1) S^+_A(x') + \frac{1}{\pi i} \fint_B \frac{dx'}{x-x'} (1 - \lambda^{-1}) S^-_B(x') \\
i \left( \frac{1+\lambda^{-1}}{1-\lambda^{-1}} \right) S_A^+(x) &= - \frac{1}{\pi} \fint_A \frac{dx'}{x-x'} S^+_A(x') + \frac{1}{\pi} \fint_B \frac{dx'}{x-x'} S^-_B(x'),
\end{align}
and for $x \in B$,
\begin{align}
(\lambda^{-1} + 1) S^-_B(x) &= \frac{1}{\pi i} \fint_A \frac{dx'}{x-x'} (\lambda^{-1} - 1) S^+_A(x') + \frac{1}{\pi i} \fint_B \frac{dx'}{x-x'} (1 - \lambda^{-1}) S^-_B(x') \\
i \left( \frac{1+\lambda^{-1}}{1-\lambda^{-1}} \right) S^-_B(x) &= - \frac{1}{\pi} \fint_A \frac{dx'}{x-x'} S^+_A(x') + \frac{1}{\pi} \fint_B \frac{dx'}{x-x'} S^-_B(x').
\end{align}
\end{widetext}
Therefore, we see that the boundary values $S^+_A(x) + S^-_B(x)$ solve \eqref{eq:JammaG_eigval_eqn} for $\tilde{f}(x)$ with eigenvalue $i\tilde{\nu} = i(1+\lambda^{-1})/(1-\lambda^{-1})$.
The corresponding eigenfunction is then $S^+_A(x) - S^-_B(-x+b_1+b_2)$.

Since we want the eigenfunctions of $J^\ammaG$ to serve as a basis for $\mathcal{F}(A \cup B)$, they should also satisfy the vanishing integral conditions in the definition of $\mathcal{F}(A \cup B)$, i.e.,
\begin{align}
\int_{a_1}^{a_2} dx \, S^+_A(x) &= 0, \quad \text{and} \\
\int_{b_1}^{b_2} dx \, [- S^-_B(-x+b_1+b_2)] &= 0 \iff \int_{b_1}^{b_2} dx \, S^-_B(x) = 0.
\end{align}
We mentioned above that this motivates the decay requirement in \eqref{cond:BVP_3}; we will now show this.
From \eqref{eq:analytic_extension}, we see that $S(z)$ at any point in the complex plane is determined by its boundary values $S^\pm(x)$.
Consider \eqref{eq:analytic_extension} for large $z$ (i.e., far from $A \cup B$),
\begin{align}
S(z) &= \frac{1}{2 \pi i} \int_{A \cup B} \frac{dx'}{z-x'} \left[ S^-(x') - S^+(x') \right] \\[1ex]
&= \frac{(\lambda^{-1}-1)}{2 \pi i} \int_{a_1}^{a_2} \frac{dx'}{z-x'} S^+_A(x') \nonumber \\
&\qquad + \frac{(1-\lambda^{-1})}{2 \pi i} \int_{b_1}^{b_2} \frac{dx'}{z-x'} S^-_B(x') \\[1ex]
&= \frac{(\lambda^{-1}-1)}{2 \pi i} \int_{a_1}^{a_2} dx' \, \frac1z \left[ 1 - \frac{x'}{z} + \cdots \right] S^+_A(x') \nonumber \\
&\qquad + \frac{(1-\lambda^{-1})}{2 \pi i} \int_{b_1}^{b_2} dx' \, \frac1z \left[ 1 - \frac{x'}{z} + \cdots \right] S^-_B(x').
\end{align}
We then see that the vanishing integral conditions on $S^+_A$ and $S^-_B$ imply that $S(z) \sim z^{-2}$ for large $|z|$.
However, the converse is not true; $S(z) \sim z^{-2}$ merely implies $\int_{a_1}^{a_2} dx \, S^+_A(x) = \int_{b_1}^{b_2} dx \, S^-_B(x)$.
Nevertheless, we will find it useful to apply \eqref{cond:BVP_3} as an intermediate step in constructing $S(z)$, and thereafter we will impose the stronger conditions $\int_{a_1}^{a_2} dx \, S^+_A(x) = \int_{b_1}^{b_2} dx \, S^-_B(x) = 0$.

To summarize, if we find a function $S(z)$ in the complex plane satisfying conditions \eqref{cond:BVP_1}--\eqref{cond:BVP_4} for some $0 \neq \lambda \in \mathbb{C}$, and further impose $\int_{a_1}^{a_2} dx \, S^+_A(x) = \int_{b_1}^{b_2} dx \, S^-_B(x) = 0$, then we get an eigenfunction $S^+_A(x) - S^-_B(-x+b_1+b_2)$ of $J^\ammaG$ with eigenvalue $i(1+\lambda^{-1})/(1-\lambda^{-1})$.
Conversely, an eigenfunction of $J^\ammaG$ defines a function in the complex plane, via \eqref{eq:analytic_extension}, which satisfies \eqref{cond:BVP_1}--\eqref{cond:BVP_4}.\footnote{In principle, one should relax the requirement \eqref{cond:BVP_4} to $\lim_{z \to p} | (z-p) S(z) | < \infty$, in order to allow for simple poles $(x-p)^{-1}$. For instance, the eigenfunctions of $J^T$ exhibit such behavior at the boundary points \cite{arias_entropy_2018}. In that case, one must interpret the action of $J^T$ in terms of a regularized integral. However, we will find below that the eigenfunctions of $J^\ammaG$ with $|\tilde{\nu}| < 1$ only diverge proportional to $(x-p)^{-1/2}$ near the boundaries, and relaxing condition \eqref{cond:BVP_4} as such does not produce any additional eigenfunctions. This allows us to use the stronger condition \eqref{cond:BVP_4}, which simplifies the previous analysis. It is possible that one may need to consider relaxing this constraint for the eigenfunctions of $J^\ammaG$ with $|\tilde{\nu}| \geq 1$.}

A general value $0 \neq \lambda \in \mathbb{C}$ will give us a formal eigenfunction of $J^\ammaG$, but not all of these values will be relevant for our problem.
For instance, we should have $\tilde{\nu} = (1+\lambda^{-1})/(1-\lambda^{-1}) \in \mathbb{R}$, which implies that $\lambda \in \mathbb{R}$.
Further, we will focus on finding eigenfunctions with $|\tilde{\nu}| < 1$, since these correspond to modes which will contribute to the negativity.
We can rewrite $\lambda = (\tilde{\nu}+1)/(\tilde{\nu}-1)$, hence $\tilde{\nu} \in (-1,1)$ corresponds to $\lambda \in (-\infty,0)$.
Therefore, in the following, we will only consider $\lambda \in (-\infty,0)$.

\subsection{Solution to boundary value problem}

We will proceed to solve the boundary value problem described in the previous section for $\lambda < 0$.
This follows a similar procedure to \cite{arias_entropy_2018}, with appropriate modifications which account for the partial transposition.

First, we will find a particular solution to conditions \eqref{cond:BVP_1} and \eqref{cond:BVP_2} (leaving \eqref{cond:BVP_3} and \eqref{cond:BVP_4} aside, for now).
We begin with the ansatz,
\begin{align}
(z-a_1)^{\alpha_1} (z-a_2)^{\alpha_2} (z-b_1)^{\beta_1} (z-b_2)^{\beta_2}. \label{eq:BVP_ansatz}
\end{align}
Consider the factor $(z-a_1)^{\alpha_1} = e^{\alpha_1 \ln(z-a_1)}$.
It has a branch point at $z=a_1$, and we choose the branch cut along the negative real axis.
Now if we take limits approaching the upper and lower sides of the branch cut, with $x = \Re(z) < a_1$, we have
\begin{align}
&(x + i0^\pm - a_1)^{\alpha_1} = e^{\pm \pi i \alpha_1} |x - a_1|^{\alpha_1}, \nonumber \\
&(x + i0^+ - a_1)^{\alpha_1} = e^{2 \pi i \alpha_1} (x + i0^- - a_1)^{\alpha_1}.
\end{align}
By choosing the exponents in \eqref{eq:BVP_ansatz} appropriately, we can use this behavior to find a solution to \eqref{cond:BVP_1} and \eqref{cond:BVP_2}.
Since we want the function to be continuous as we cross the real axis when $a_2 < x < b_1$, we can cancel the factor $e^{2 \pi i (\beta_1 + \beta_2)}$ by choosing $\beta_2 = -\beta_1$.
Note that this choice is not unique, but for now we seek only some particular solution.
Similarly, \eqref{eq:BVP_ansatz} will be continuous across $x < a_1$ if we choose $\alpha_2 = -\alpha_1$.
With these choices, \eqref{eq:BVP_ansatz} will satisfy condition \eqref{cond:BVP_1}.
For \eqref{cond:BVP_2}, we want to have $\lambda = e^{2 \pi i \alpha_2} = e^{-2 \pi i \beta_2}$, or $\beta_2 = -\alpha_2$.
Thus, \eqref{eq:BVP_ansatz} solves \eqref{cond:BVP_1} and \eqref{cond:BVP_2} if we pick $\alpha := \alpha_1 = -\alpha_2 = -\beta_1 = \beta_2$, with $\alpha$ chosen so that $\lambda = e^{-2 \pi i \alpha}$.

If we further impose that $\lambda \in (-\infty,0)$, then $\lambda = e^{-2 \pi i \alpha} = e^{-2 \pi i \Re(\alpha)} e^{2 \pi \Im(\alpha)}$ gives $\Re(\alpha) \in \tfrac12 + \mathbb{Z}$ and $\Im(\alpha) \in \mathbb{R}$.
Again, we will simply pick one solution for now, so set $\alpha = \tfrac12 + i s$ with $s \in \mathbb{R}$.
Hence, we have that for any $\lambda = - e^{2 \pi s} \in (-\infty,0)$ with $s \in \mathbb{R}$, the following function satisfies conditions \eqref{cond:BVP_1} and \eqref{cond:BVP_2},
\begin{align}
&(z-a_1)^{\tfrac12 + is} (z-a_2)^{-\tfrac12 - is} (z-b_1)^{-\tfrac12 - is} (z-b_2)^{\tfrac12 + is} \nonumber \\
&\qquad \qquad \qquad \quad =: e^{\left(\tfrac12 + is\right) w(z)},
\end{align}
where
\begin{align}
w(z) := \ln \left( \frac{z-a_1}{z-a_2} \frac{z-b_2}{z-b_1} \right).
\end{align}
  
Now, let $S(z)$ be any solution to the full boundary value problem \eqref{cond:BVP_1}--\eqref{cond:BVP_4} with some fixed $\lambda = - e^{2 \pi s} \in (-\infty,0)$.
Using the function we found above, consider the function
\begin{align}
M(z) := S(z) e^{-\left(\tfrac12 + is\right) w(z)},
\end{align}
where notice that we flipped the sign in the exponent.
We will now proceed to show that $M(z)$ is uniquely determined (up to a constant factor), which in turn fixes $S(z)$ uniquely.

Note that $M(z)$ is analytic on $\mathbb{C} \setminus \overline{A \cup B}$.
Further, the multiplicative factors from $S(z)$ and $e^{-(1/2 + is) w(z)}$ will cancel as we cross the cut $A \cup B$ (excluding the endpoints).
Thus, $M(z)$ is continuous as we take limits from either side of $A \cup B$.
A consequence of Morera's theorem implies that $M(z)$ can be uniquely extended to an analytic function on $\mathbb{C} \setminus \{a_1,a_2,b_1,b_2\}$, if we define $M(z)$ on $A \cup B$ by these limits (see, e.g., Section 27 of \cite{dennery_mathematics_1996}).

The function $M(z)$ is thus analytic with isolated singularities at $\{a_1,a_2,b_1,b_2\}$.
Now we will show that $M(z)$ is meromorphic, i.e., that these are not essential singularities.
The Laurent expansion of $M(z)$ around $z=b_1$ is
\begin{align}
M(z) = \sum_{n=0}^\infty c_n (z-b_1)^n + \sum_{n=1}^\infty \frac{d_n}{(z-b_1)^n},
\end{align}
with
\begin{align}
d_n = \frac{1}{2 \pi i} \int_{\gamma} dz \, (z-b_1)^{n-1} M(z),
\end{align}
where $\gamma$ is a small circle of radius $r$ centered at $z=b_1$.
Then,
\begin{widetext}
\begin{align}
|d_n| &\leq \frac{1}{2\pi} 2\pi r \max_{|z-b_1| = r} r^{n-1} |M(z)| \qquad \text{(Darboux inequality)} \nonumber\\
&= r^{n-1} \max_{|z-b_1| = r} |r S(z)| \left| e^{-\left(\tfrac12 + is\right) w(z)} \right| \nonumber\\
&= r^{n-1} \max_{|z-b_1| = r} |r S(z)| \left| \sqrt{\frac{|z-a_2||z-b_1|}{|z-a_1||z-b_2|}} e^{s \text{Arg}[(z-a_1)(z-b_2)/(z-a_2)(z-b_1)]} \right|
\leq C(r) \, r^{n-\tfrac12},
\end{align}
\end{widetext}
where $C(r) \to 0$ as $r \to 0$.
In the last step we used the fact that $|r S(z)|$ vanishes as $r \to 0$, due to condition \eqref{cond:BVP_4}.
Note that $| e^{-(1/2 + is) w(z)} |$ contributes a factor of $r^{1/2}$, and the remaining factors from this are bounded by constants for small $r$, where we also note that $\text{Arg}[...] \in (-\pi,\pi]$.
By taking the limit $r \to 0$, we see that $d_n = 0$ for $n \geq \frac12$.
This implies that $z=b_1$ is a removable singularity of $M(z)$.
A similar argument can be made to show that $z=a_2$ is a removable singularity.
For $a_1$ and $b_2$, the square root from $| e^{-(1/2 + is) w(z)} |$ contributes a factor of $r^{-1/2}$, which gives $|d_n| \leq C(r) \, r^{n-3/2}$.
This vanishes in the limit $r \to 0$ for $n \geq \frac32$.
Hence, $M(z)$ does not have an essential singularity, but may have a simple pole at these points.

Our function $M(z)$ is thus meromorphic with (possibly) simple poles at $a_1$ and $b_2$.
Using the Mittag-Leffler expansion, one can conclude that $M(z)$ must be of the form,
\begin{align}
M(z) = \frac{p(z)}{(z-a_1)(z-b_2)},
\end{align}
for some entire function $p(z)$.
Now we constrain the function $p(z)$ by examining its behavior for large $|z|$.
First, we write
\begin{align}
|p(z)| &= \left| (z-a_1) (z-b_2) S(z) e^{-\left(\tfrac12 + is\right) w(z)} \right| \\
&= \left| 1 - \frac{a_1}{z} \right| \left| 1 - \frac{b_2}{z} \right| \left| z^2 S(z) \right| \nonumber \\
&\quad \times \left| \sqrt{\frac{|z-a_2||z-b_1|}{|z-a_1||z-b_2|}} e^{s \, \text{Arg}\left[\frac{(z-a_1)(z-b_2)}{(z-a_2)(z-b_1)}\right]} \right|.
\end{align}
Notice that all of these factors are bounded as $|z| \to \infty$, including $|z^2 S(z)|$ due to condition \eqref{cond:BVP_3}.
Therefore, $p(z)$ must be a constant, by Liouville's theorem.

We have therefore shown that the unique solution (up to a constant factor) to the boundary value problem \eqref{cond:BVP_1}--\eqref{cond:BVP_4}, with $\lambda = -e^{2 \pi s}$ and $s \in \mathbb{R}$, is
\begin{align}
S(z) &= C (z-a_1)^{-\tfrac12+is} (z-a_2)^{-\tfrac12-is} \nonumber \\
&\qquad \quad \times (z-b_1)^{-\tfrac12-is} (z-b_2)^{-\tfrac12+is},
\end{align}
where $C \in \mathbb{C}$ is a normalization constant.
\cref{fig:BVP_solution_complex_plane} shows a plot of the magnitude and phase of the function $S(z)$ in the complex plane.

\begin{figure*}
\centering
\includegraphics[width=0.9\linewidth]{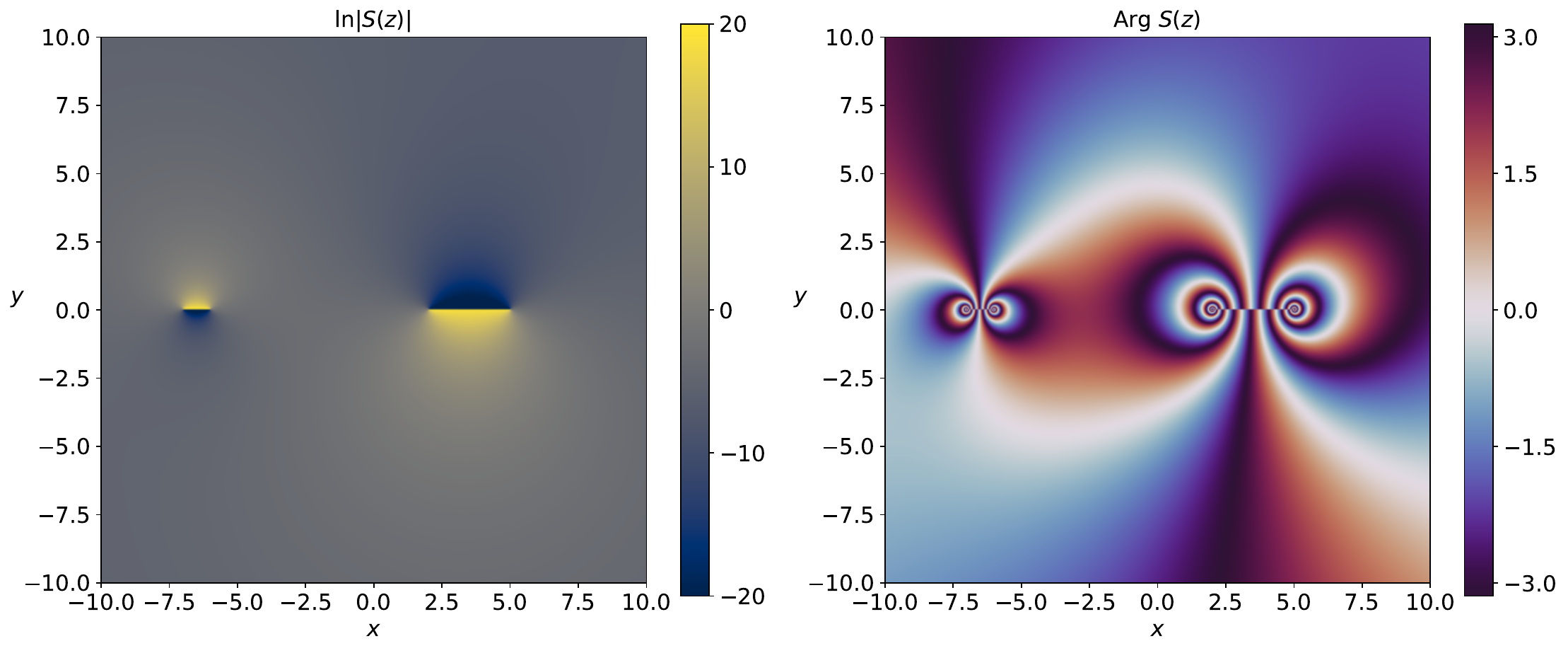}
\caption{Visualization of the solution $S(z)$ in the complex plane $z = x + iy$. The parameters are $A = (-7,-6)$, $B = (2,5)$, and $s \approx 6.837$ (corresponding to the smallest eigenvalue $\tilde{\nu}$, as explained below). Note that we plot the magnitude as $\ln |S(z)|$.}
\label{fig:BVP_solution_complex_plane}
\end{figure*}

Now that we have the solution to the boundary value problem, we can find the associated boundary values by taking the limits $z \to x + i0^+$ for $x \in A$,
\begin{align}
S^+_A(x) &= C' (x-a_1)^{-\tfrac12+is} (a_2-x)^{-\tfrac12-is} \nonumber \\
&\qquad \quad \times (b_1-x)^{-\tfrac12-is} (b_2-x)^{-\tfrac12+is} \nonumber\\[1ex]
&= \frac{C' e^{is \omega(x)}}{\sqrt{-(x-a_1)(x-a_2)(x-b_1)(x-b_2)}},
\end{align}
and $z \to x + i0^-$ for $x \in B$,
\begin{align}
S^-_B(x) &= C' (x-a_1)^{-\tfrac12+is} (x-a_2)^{-\tfrac12-is} \nonumber \\[1ex]
&\qquad \quad \times (x-b_1)^{-\tfrac12-is} (b_2-x)^{-\tfrac12+is} \nonumber\\
&= \frac{C' e^{is \omega(x)}}{\sqrt{-(x-a_1)(x-a_2)(x-b_1)(x-b_2)}},
\end{align}
where $C' := i e^{\pi s} C$ and, for $x \in \mathbb{R}$, we define
\begin{align}
\omega(x) := \ln \left( - \frac{x-a_1}{x-a_2} \frac{x-b_2}{x-b_1} \right).
\end{align}

Therefore, this corresponds to a (formal) eigenfunction $S^+_A(x) - S^-_B(-x+b_1+b_2)$ with eigenvalue $i\tilde{\nu}$, where
\begin{align}
\tilde{\nu} = \frac{1+\lambda^{-1}}{1-\lambda^{-1}} = \tanh(\pi s).
\end{align}
In the next section, we will examine the implications of the requirement that the integrals of these eigenfunctions must vanish over each of the intervals $A$ and $B$.

\subsection{Discrete spectrum and logarithmic negativity}

We have obtained the unique solution to the boundary value problem for any given choice of $\lambda = -e^{2 \pi s} \in (-\infty,0)$.
However, we also require $\int_{a_1}^{a_2} dx \, S^+_A(x) = \int_{b_1}^{b_2} dx \, S^-_B(x) = 0$ for the corresponding eigenfunctions to be valid solutions to our problem.
Here, we will show that these conditions lead to a discrete spectrum of $J^\ammaG$ for $|\tilde{\nu}| < 1$.

We first evaluate the integral
\begin{widetext}
\begin{align}
&\int_{a_1}^{a_2} dx \, S^+_A(x)
= C' \int_{a_1}^{a_2} dx \, (x-a_1)^{-\tfrac12+is} (a_2-x)^{-\tfrac12-is} (b_1-x)^{-\tfrac12-is} (b_2-x)^{-\tfrac12+is} \nonumber\\
&= C' (b_1-a_1)^{-\tfrac12-is} (b_2-a_1)^{-\tfrac12+is} \int_0^1 dt \, t^{-\tfrac12+is} (1-t)^{-\tfrac12-is} \left[ 1 - \left( \frac{a_2-a_1}{b_1-a_1} \right) t \right]^{-\tfrac12-is} \left[ 1 - \left( \frac{a_2-a_1}{b_2-a_1} \right) t \right]^{-\tfrac12+is}.
\end{align}
Now we perform the following change of variables, $t \mapsto (1-t)/\left[ 1 - \left(\frac{a_2-a_1}{b_2-a_1}\right) t \right]$, to obtain
\begin{align}
\int_{a_1}^{a_2} dx \, S^+_A(x) = C' (b_1-a_2)^{-\tfrac12-is} (b_2-a_1)^{-\tfrac12+is} \int_0^1 dt \, t^{-\tfrac12-is} (1-t)^{-\tfrac12+is} \left[ 1 - \left( \frac{\eta}{\eta-1} \right) t \right]^{-\tfrac12-is},
\end{align}
where $\eta$ is the cross ratio,
\begin{align}
\eta := \frac{(a_2-a_1)(b_2-b_1)}{(b_1-a_1)(b_2-a_2)} = \frac{\ell_A \ell_B}{(\ell_A + d)(\ell_B + d)}, \label{eq:cross_ratio}
\end{align}
and where $\ell_A = a_2 - a_1$ and $\ell_B = b_2 - b_1$ are the lengths of the two intervals, and $d = b_1 - a_2$ is their separation.
The expression above is an integral representation of a hypergometric function,
\begin{align}
\int_{a_1}^{a_2} dx \, S^+_A(x) = C' (b_1-a_2)^{-\tfrac12-is} (b_2-a_1)^{-\tfrac12+is} B\left(\tfrac12-is,\tfrac12+is\right) {}_2 F_1 \left( \tfrac12+is, \tfrac12-is; 1; \frac{\eta}{\eta-1} \right),
\end{align}
where $B$ is the beta function, which in this case is $B\left(\tfrac12-is,\tfrac12+is\right) = \pi / \cosh(\pi s)$.
For these parameter values in ${}_2 F_1$, it can be identified with a Legendre function (see 8.820.1 of \cite{gradshteyn_table_2007}),
\begin{align}
P_{-is-\tfrac12} \left( \frac{1+\eta}{1-\eta} \right) = {}_2 F_1 \left( \tfrac12+is, \tfrac12-is; 1; \frac{\eta}{\eta-1} \right).
\end{align}
We then obtain our final expression,
\begin{align}
\int_{a_1}^{a_2} dx \, S^+_A(x) = C' \frac{\pi}{\cosh(\pi s)} (b_1-a_2)^{-\tfrac12-is} (b_2-a_1)^{-\tfrac12+is} P_{-is-\tfrac12} \left( \frac{1+\eta}{1-\eta} \right).
\end{align}
\end{widetext}
One can follow similar steps for the integral in $B$ to find the same expression, i.e., $\int_{b_1}^{b_2} dx \, S^-_B(x) = \int_{a_1}^{a_2} dx \, S^+_A(x)$.
Recall that this is expected from the $S(z) \sim z^{-2}$ decay as $|z| \to \infty$.

The condition $\int_{a_1}^{a_2} dx \, S^+_A(x) = \int_{b_1}^{b_2} dx \, S^-_B(x) = 0$ is then equivalent to
\begin{align}
P_{-is-\tfrac12} \left( \frac{1+\eta}{1-\eta} \right) = 0. \label{eq:conical_zeros}
\end{align}
This will only vanish for particular values of $s$.
Thus, the roots of the Legendre function (in $s \in \mathbb{R}$, for fixed $\eta$) determine the allowed values of $s$, hence the spectrum of $J^\ammaG$.

Note that this condition only depends on the geometry through the cross ratio $\eta \in (0,1)$ defined in \eqref{eq:cross_ratio}.
The expressions below involving $\eta$ will be general, but we will often find it instructive to consider the special case where the sizes of the two intervals are equal $\ell = a_2 - a_1 = b_2 - b_1$ with separation $d = b_1 - a_2$.
In this case,
\begin{align}
\eta = \frac{1}{(1+d/\ell)^2},
\end{align}
which is a function only of the ratio $d/\ell$.
Two regimes of interest will be $d/\ell \to \infty$, which corresponds to $\eta \to 0$, and $d/\ell \to 0$, corresponding to $\eta \to 1$.

The Legendre function in \eqref{eq:conical_zeros} with the particular order $-is - \tfrac12$ is also known as a conical (or Mehler) function \cite{zhurina_tables_1966}.
We will use known properties of this function to infer general properties of the spectrum of $J^\ammaG$, and then proceed to find analytical approximations of the zeros in the two regimes $\eta \to 0$ and $\eta \to 1$.

First, the argument $\tfrac{1+\eta}{1-\eta}$ lies in the range $(1,\infty)$, and we will find it convenient to reparametrize it by
\begin{align}
\cosh \rho := \frac{1+\eta}{1-\eta}, \quad \text{with $\rho \in (0,\infty)$}.
\end{align}
On this domain, $P_{-is-1/2}(\cosh\rho)$ is real-valued \cite{zhurina_tables_1966}.
Further, we have that $P_{-is-1/2}(\cosh\rho) = P_{is-1/2}(\cosh\rho)$ \cite{zhurina_tables_1966}, which means that the zeros come in pairs $\pm s$.
Thus, the eigenvalues of $J^\ammaG$ come in pairs $\pm i \tilde{\nu} = \pm i \tanh(\pi s)$, as expected.
Because of this symmetry, we can restrict our attention to $s \geq 0$.
Importantly, a general Legendre function $P_\mu$ is an entire function of $\mu$ \cite{zhurina_tables_1966}.
This implies that the zeros of $P_{-is-1/2}$ (as a function of $s$) must be isolated, i.e., they form a discrete set $\{ s_n \}$, and hence the spectrum of $J^\ammaG$ (in the range $|\tilde{\nu}| < 1$) is discrete.

\cref{fig:conical_functions} shows a plot of $P_{-is-1/2}(\tfrac{1+\eta}{1-\eta})$ as a function of $s$ for different values of $\eta$.
From this plot, we see that the value of the first (smallest) zero decreases as $\eta$ increases (and similar for the second zero, etc.).
The first zero corresponds to the eigenfunction of $J^\ammaG$ which contributes the most to the logarithmic negativity, since the contribution of each eigenvalue is $E_{\mathcal{N},n} := -\log_2 \tilde{\nu}_n = -\log_2(\tanh (\pi s_n))$, which increases for smaller values of $s_n$.
Therefore, the observation that the first zero decreases with increasing $\eta$ is consistent with the expectation that the negativity should increase as the intervals get closer (e.g., if we consider two intervals of length $\ell$ and decreasing separation $d$).
\begin{figure}[h]
\centering
\includegraphics[width=\linewidth]{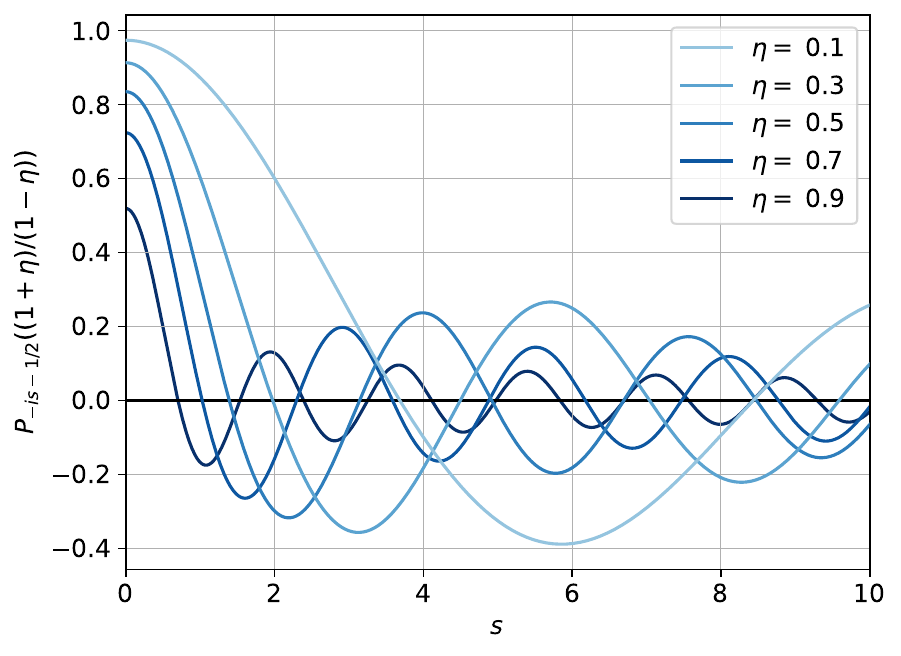}
\caption{Plot of $P_{-is-1/2}(\tfrac{1+\eta}{1-\eta})$ as a function of $s$ for different values of $\eta$. Computed using the \texttt{legenp} function from the Python \texttt{mpmath} library \cite{mpmath}.}
\label{fig:conical_functions}
\end{figure}

\subsubsection{Large separation regime}

We will now find analytical approximations of the zeros of $P_{-is-1/2}(\cosh\rho)$ in the regime where the intervals are widely separated compared to their sizes, i.e., $\eta \to 0$ or $\rho \to 0$.
Based on the observations from \cref{fig:conical_functions}, we seek to approximate $P_{-is-1/2}(\cosh\rho)$ when $\rho$ is small and $s$ is large.
We will use the leading order term in a series expansion in terms of Bessel functions \cite{zhurina_tables_1966},
\begin{align}
P_{-is-1/2}(\cosh\rho) = \sqrt{\frac{\rho}{\sinh\rho}} J_0(s\rho) + \mathcal{O}(s^{-3/2}).
\end{align}
The full series is convergent for $\rho < 2\pi\sqrt{2+\sqrt{5}}$.
Note that the convergence criterion is equivalent to $\eta \lessapprox 0.99999032$ or, when the sizes of the intervals are equal, $d/\ell \gtrapprox 4.84 \times 10^{-6}$.
Beyond this, the expansion can still be used as an asymptotic series as $s \to \infty$.

We will approximate the (non-negative) zeros of $P_{-is-1/2}(\cosh\rho)$ using the zeros of the first term in the series.
If we denote the zeros of $J_0$ by $j_{0,n}$ with $n \in \mathbb{N}$, this gives
\begin{align}
s_n \approx \frac{j_{0,n}}{\rho} = \frac{j_{0,n}}{\text{arccosh}\left( \frac{1+\eta}{1-\eta} \right)}, \quad n \in \mathbb{N}.
\label{eq:conical_zeros__large_separation}
\end{align}
To assess the accuracy of this approximation, in \cref{fig:conical_zeros__large_separation}, we plot the relative error between the first ten of these approximated zeros with the zeros of $P_{-is-1/2}$ computed numerically using the \texttt{root\_scalar} method from \texttt{scipy.optimize} in Python \cite{scipy} (with the approximations \eqref{eq:conical_zeros__large_separation} as initial guesses).

\begin{figure*}
\centering
\begin{subfigure}[t]{0.47\textwidth}
\centering
\includegraphics[width=\linewidth]{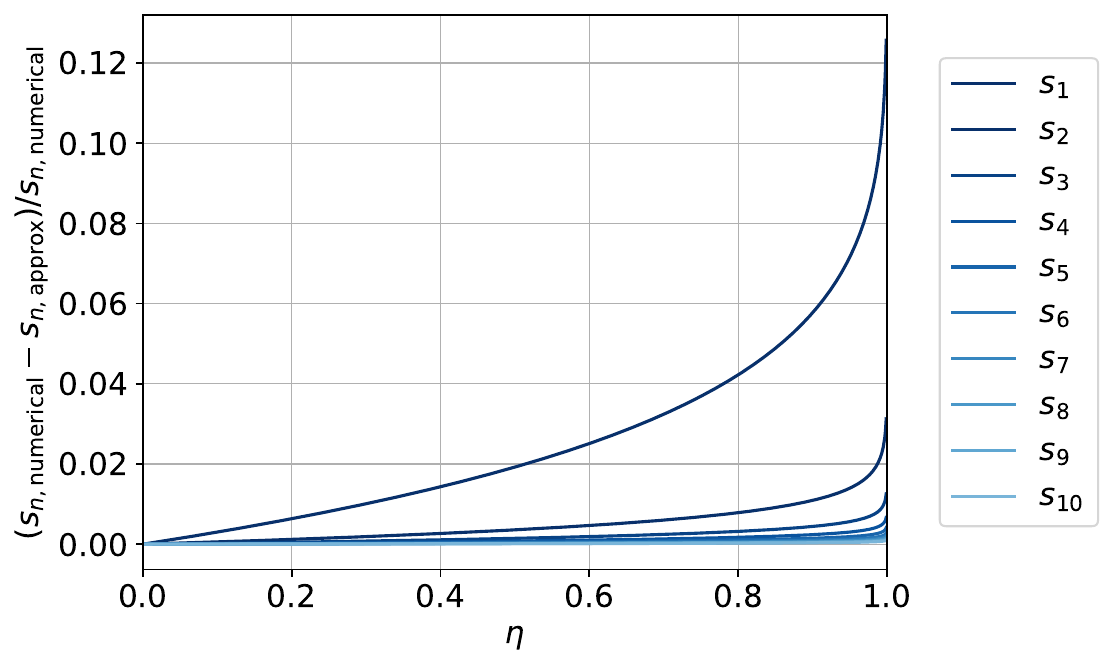}
\caption{}
\end{subfigure}
\hfill
\begin{subfigure}[t]{0.47\textwidth}
\centering
\includegraphics[width=\linewidth]{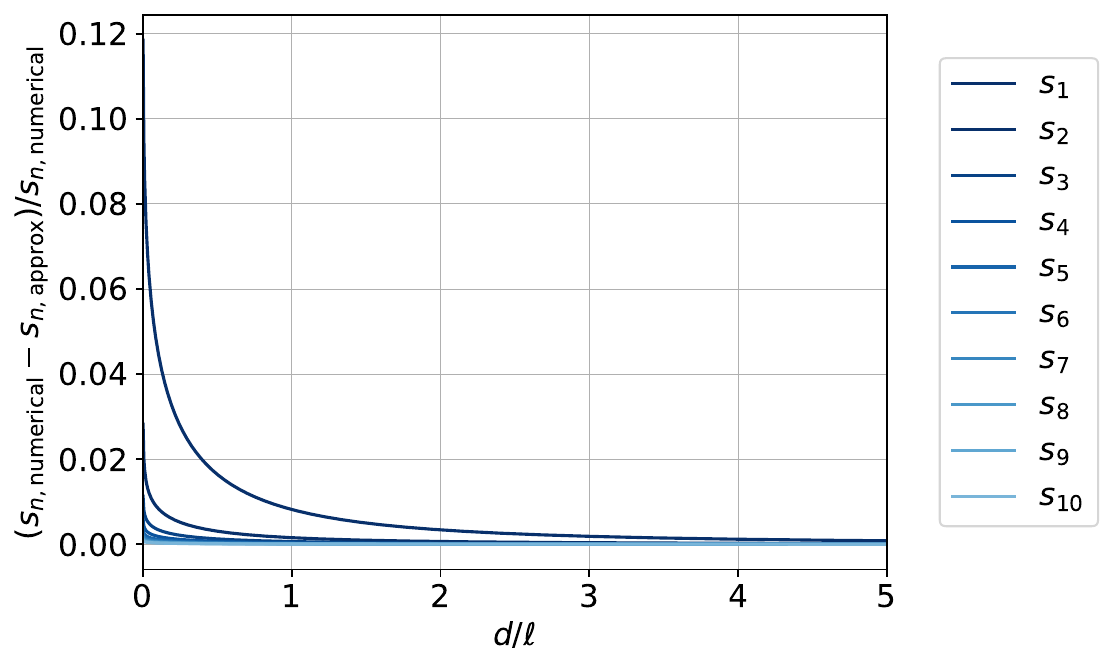}
\caption{}
\end{subfigure}
\caption{Relative error in the analytical approximation of the zeros $s_{n,\text{approx}}$ in the large separation regime using \eqref{eq:conical_zeros__large_separation}, compared to the numerical zeros $s_{n,\text{numerical}}$ found using the \texttt{root\_scalar} method from \texttt{scipy.optimize} in Python \cite{scipy}. We plot the relative error as a function of (a) $\eta$ and (b) $d/\ell$.}
\label{fig:conical_zeros__large_separation}
\end{figure*}

Since these zeros determine the spectrum of $J^\ammaG$ by $\pm i \tilde{\nu}_n = \pm i \tanh(\pi s_n)$, we can calculate the logarithmic negativity associated with each of these modes, $E_{\mathcal{N},n} := -\log_2\tilde{\nu}_n = -\log_2(\tanh(\pi s_n))$.
We plot $E_{\mathcal{N},n}$ for the first ten modes in \cref{fig:ENn__large_separation}, and compare the numerically determined zeros with our approximation.
\begin{figure*}
\centering
\begin{subfigure}[t]{0.47\textwidth}
\centering
\includegraphics[width=\linewidth]{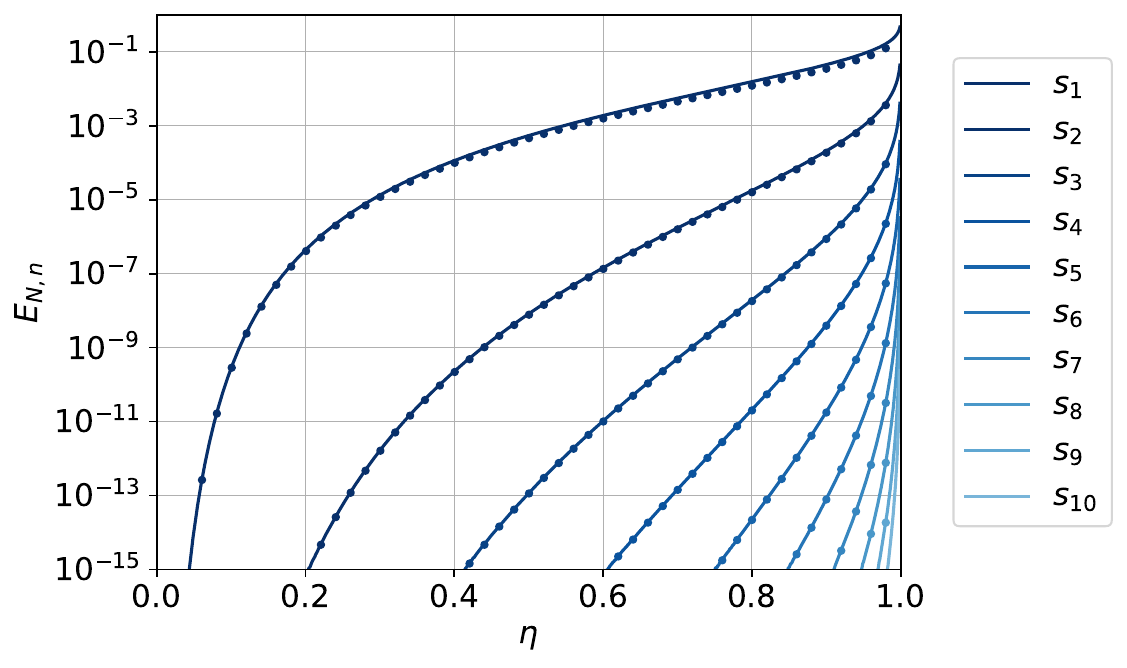}
\caption{}
\end{subfigure}
\hfill
\begin{subfigure}[t]{0.47\textwidth}
\centering
\includegraphics[width=\linewidth]{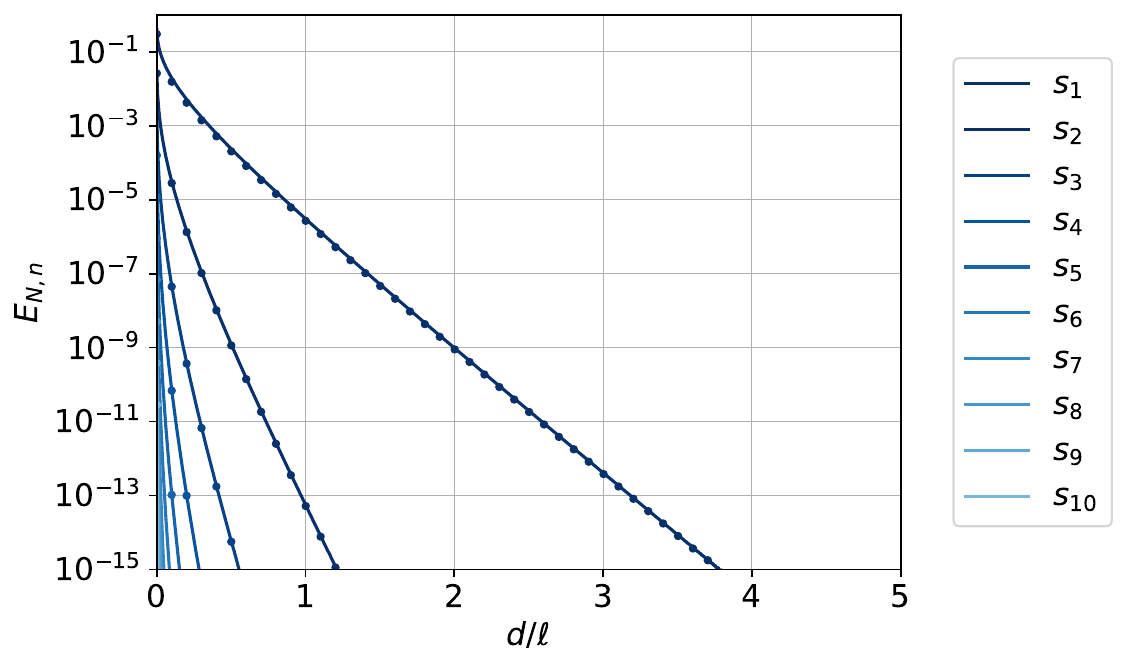}
\caption{}
\end{subfigure}
\caption{Logarithmic negativity associated with the smallest ten eigenvalues of $J^\ammaG$. The solid lines correspond to the zeros approximated in the large separation regime using \eqref{eq:conical_zeros__large_separation}, and the dots `.' correspond to the zeros computed numerically (as above). We show these quantities both as a function of (a) $\eta$ and (b) $d/\ell$.}
\label{fig:ENn__large_separation}
\end{figure*}
From the figure, we see that the logarithmic negativity of each of the modes appears to decay exponentially in $d/\ell$, with the decay rate increasing for larger $s_n$.
We can determine these decay rates analytically by examining the asymptotic behavior of \eqref{eq:conical_zeros__large_separation} as $\eta \to 0$.
We find that $s_n \sim \tfrac{j_{0,n}}{2\sqrt{\eta}}$, which leads to
\begin{align}
E_{\mathcal{N},n} \sim 2 e^{-\frac{\pi j_{0,n}}{\sqrt{\eta}}}.
\end{align}
When the lengths of the two intervals are equal, this is
\begin{align}
E_{\mathcal{N},n} \sim 2 e^{-\pi j_{0,n} \left( 1 + \frac{d}{\ell} \right)}.
\end{align}
Therefore, the logarithmic negativity associated with the $n^{\text{th}}$ mode decays exponentially in $d/\ell$ at a rate $\pi j_{0,n}$.
The total logarithmic negativity is thus dominated by the contribution of the first mode, which decays at a rate $\pi j_{0,1} \approx 7.555$, in agreement with the finding of~\cite{arias_entanglement_2026}.

\subsubsection{Small separation regime}

Now we will proceed to approximate the zeros of $P_{-is-1/2}(\cosh\rho)$ when the two intervals are close compared to their sizes, i.e., $\eta \to 1$ or $\rho \to \infty$.

For large values of $\rho$, the zeros of the conical functions can be approximated by \cite{hobson_theory_1931,zhurina_tables_1966},
\begin{align}
  \rho s_n \approx n \pi + \arctan \left( \frac{\Im [B(\tfrac12,\tfrac12+is_n)]}{\Re [B(\tfrac12,\tfrac12+is_n)]} \right), \quad n \in \mathbb{N},
\label{eq:conical_zeros__large_rho}
\end{align}
where $B$ is the beta function.
Note that the second term is simply the complex argument of $B(\tfrac12,\tfrac12+is_n)$, which is bounded between $[-\pi,\pi)$.
Thus, we see from \cref{eq:conical_zeros__large_rho} that $s_n = \mathcal{O}(\rho^{-1})$ for large $\rho$.
Using $B(\tfrac12,\tfrac12+is_n) \approx \pi - 2 \pi i s_n \ln2$ for small $s_n$, we have $\text{Arg} \, B(\tfrac12,\tfrac12+is_n) \approx - (2 \ln 2) s_n = \mathcal{O}(s_n) = \mathcal{O}(\rho^{-1})$.
This term then contributes a $\mathcal{O}(\rho^{-2})$ correction to the approximation of $s_n$, hence to leading order,
\begin{align}
  s_n \approx \frac{n \pi}{\rho} = \frac{n \pi}{\text{arccosh}\left( \frac{1+\eta}{1-\eta} \right)}, \quad n \in \mathbb{N}.
  \label{eq:conical_zeros__small_separation}
\end{align}

In \cref{fig:conical_zeros__small_separation}, we again plot the relative error between these approximations for the first ten zeros, compared with those computed numerically.
We see that this approximation will only be useful in capturing the asymptotic behavor as $\eta \to 1$.
\begin{figure}[h]
\centering
  \includegraphics[width=\linewidth]{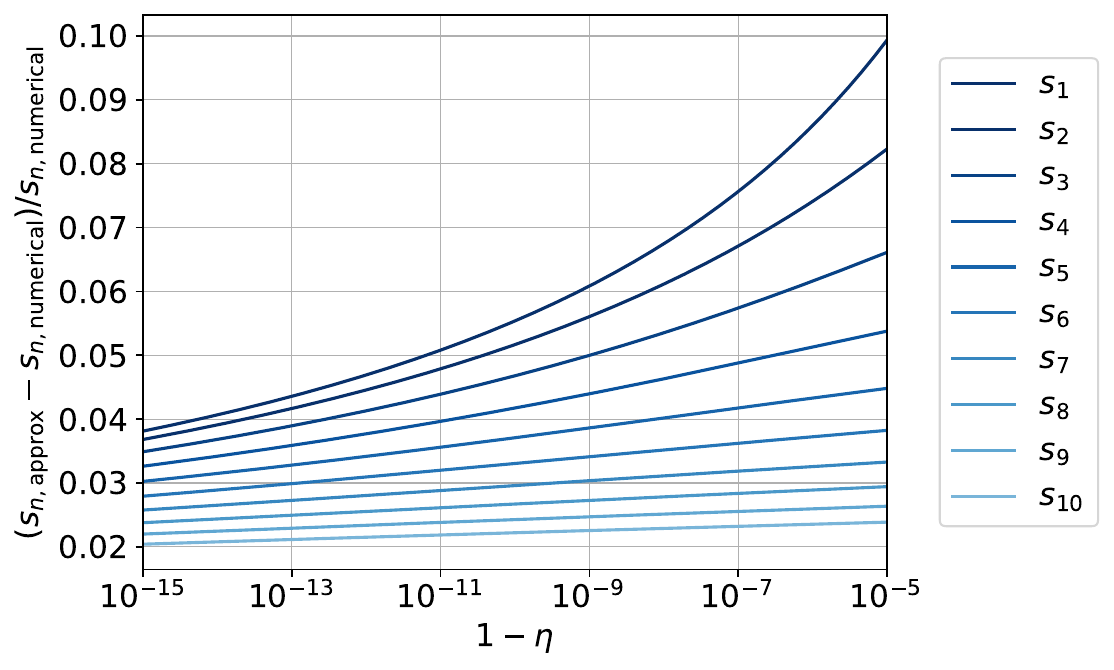}
  \caption{Relative error in the analytical approximation of the zeros $s_{n,\text{approx}}$ in the small separation regime using \eqref{eq:conical_zeros__small_separation}, compared to the numerical zeros $s_{n,\text{numerical}}$ found using the \texttt{root\_scalar} method from \texttt{scipy.optimize} in Python \cite{scipy}. We plot the relative error as a function of $1 - \eta$ (the plot against $d/\ell$ is similar, since $1-\eta \approx 2 (d/\ell)$ in this regime).}
  \label{fig:conical_zeros__small_separation}
\end{figure}
\cref{fig:ENn__small_separation} shows the corresponding logarithmic negativity of the first ten modes in this regime.
\begin{figure}[h]
\centering
  \includegraphics[width=\linewidth]{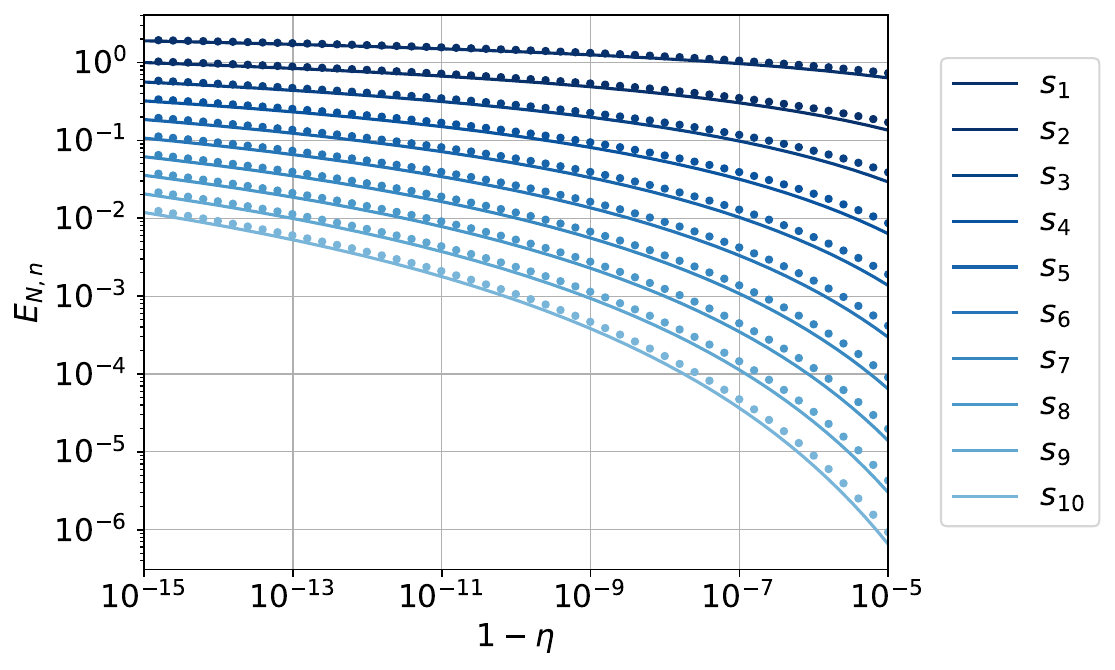}
  \caption{Logarithmic negativity associated with the smallest ten eigenvalues of $J^\ammaG$. The solid lines correspond to the zeros approximated in the small separation regime using \eqref{eq:conical_zeros__small_separation}, and the dots `.' correspond to the zeros computed numerically (as above). We show these quantities as a function of $1 - \eta$ (the plot against $d/\ell$ is similar, since $1-\eta \approx 2 (d/\ell)$ in this regime).}
  \label{fig:ENn__small_separation}
\end{figure}

Using the approximation \eqref{eq:conical_zeros__small_separation}, we can determine the asymptotic behavior of the logarithmic negativity of each of these modes when $\eta \to 1$.
We find $\rho = \text{arccosh}\left( \frac{1+\eta}{1-\eta} \right) = -\ln(1-\eta) + \ln4 + o(1)$, which yields
\begin{align}
  E_{\mathcal{N},n} \sim \log_2 ( - \ln ( 1 -\eta ) ) - \log_2(n \pi^2).
\end{align}
Or, when the lengths of the two intervals are equal,
\begin{align}
  E_{\mathcal{N},n} \sim \log_2 \left( - \ln \left( \frac{d}{\ell} \right) \right) - \log_2(n \pi^2).
\end{align}
Note that the logarithmic negativity of each mode diverges in the limit $\eta \to 1$ or $d/\ell \to 0$.

\subsubsection{Universal CFT leading order divergence}

We can also reproduce the universal 1+1D conformal field theory prediction for the leading order divergence of the total logarithmic negativity as the two intervals approach one another \cite{calabrese_entanglement_2012,calabrese_entanglement_2013,arias_entanglement_2026}.
The total logarithmic negativity is
\begin{align}
  E_\mathcal{N} = - \sum_{n=1}^\infty \log_2 ( \tanh( \pi s_n ) ).
  \label{eq:EN_sum__small_separation}
\end{align}
The sum over $n$ can be approximated by an integral in this regime, since the spacing between the zeros decreases as $\rho$ increases: $\Delta s = s_{n+1} - s_n = \frac{\pi}{\rho} + \mathcal{O}(\rho^{-2})$.
Let us write $W(s) := \text{Arg} \, B(\tfrac12,\tfrac12+is)$, then the spectral density can be written
\begin{align}
  1 = \Delta n = \frac{\Delta s}{\pi} \left( \rho - \frac{1}{\Delta s} \Delta W(s_n) \right).
\end{align}
Inserting this into \cref{eq:EN_sum__small_separation} and taking the limit $\Delta s \to 0$, we have
\begin{align}
  E_\mathcal{N} &\approx - \frac{\rho}{\pi} \frac{1}{\ln2} \int_{s_1}^\infty ds \, \ln ( \tanh(\pi s) ) \nonumber \\
  &\qquad + \frac{1}{\pi} \frac{1}{\ln2} \int_{s_1}^\infty ds \left[ \frac{d}{ds} W(s) \right] \ln ( \tanh(\pi s) ).
  \label{eq:EN_intgl__small_separation}
\end{align}
Note that $\frac{d}{ds} W(s)$ is analytic at $s=0$, so if we take the integrand of the second term and integrate over $[0,s_1]$, we find $\int_0^{s_1} ds \left[ \frac{d}{ds} W(s) \right] \ln ( \tanh(\pi s) ) = \mathcal{O}(s_1 \ln s_1)$.
This vanishes as $s_1 \to 0$ (i.e., $\rho \to \infty$), hence extending the second integral in \cref{eq:EN_intgl__small_separation} from $[s_1,\infty)$ to $[0,\infty)$ only differs by terms vanishing in this limit.
The second term in \cref{eq:EN_intgl__small_separation} is then independent of $\rho$ and only contributes a constant term to $E_\mathcal{N}$.
In contrast, the first integral in \cref{eq:EN_intgl__small_separation} is multiplied by $\rho$, hence we can find the divergent behavior of $E_\mathcal{N}$ by isolating the $\mathcal{O}(1)$ and $\mathcal{O}(s_1 \ln s_1)$ terms of this integral.
We then compute
\begin{align}
  &\int_{s_1}^\infty ds \, \ln ( \tanh(\pi s) ) \nonumber \\
  &= \int_{s_1}^\infty ds \, \left[ \ln(1 - e^{-2 \pi s}) - \ln(1 + e^{-2 \pi s}) \right]\nonumber \\
  &= \frac{1}{2\pi} \int_0^{e^{-2 \pi s_1}} \frac{dt}{t} \left[ \ln(1-t) - \ln(1+t) \right] \nonumber\\
  &= -\frac{1}{2\pi} \text{Li}_2(e^{-2 \pi s_1}) + \frac{1}{2\pi} \text{Li}_2(-e^{-2 \pi s_1}),
\end{align}
where $\text{Li}_2$ is the dilogarithm function.
We can expand the dilogarithm around $s_1=0$ by splitting the above integrals as $\int_0^{e^{-2 \pi s_1}} dt = \int_0^1 dt - \int_{e^{-2 \pi s_1}}^1 dt$.
In the second term, we then take $\frac{1}{t} =: \frac{1}{1-u} = \sum_{m=0}^\infty u^m$ and integrate termwise.
We find,
\begin{align}
  \text{Li}_2(e^{-2 \pi s_1}) &= \frac{\pi^2}{6} + 2 \pi s_1 [ \ln(2 \pi s_1) - 1 ] + \mathcal{O}(s_1^2 \ln s_1), \nonumber\\
  \text{Li}_2(-e^{-2 \pi s_1}) &= -\frac{\pi^2}{12} + (\ln2) 2 \pi s_1 + \mathcal{O}(s_1^2).
\end{align}
Putting these together, we have
\begin{align}
  E_\mathcal{N} &= \frac{1}{\ln2} \frac{\rho}{8} + \frac{\rho s_1}{\pi} \frac{1}{\ln2} [ \ln(\pi s_1) - 1 ] + \mathcal{O}(1) \nonumber\\
  &= -\frac18 \log_2(1-\eta) - \log_2(-\ln(1-\eta)) + \mathcal{O}(1).
\end{align}

The leading logarithmic divergence $-\tfrac18 \log_2(1-\eta)$ agrees with the universal 1+1D conformal field theory prediction in \cite{calabrese_entanglement_2012,calabrese_entanglement_2013,arias_entanglement_2026} (after multiplying our result by a factor of $2$ to account for the contributions from the left-moving sector).
There is a discrepancy between the coefficient of the subleading double-logarithmic term from that in \cite{calabrese_entanglement_2012,calabrese_entanglement_2013,arias_entanglement_2026}.
Subleading divergences of this kind in entanglement measures are known to arise from the infrared divergence of the (1+1)-dimensional massless field \cite{casini_entanglement_2009,mallayya_zero_2014,pye_locality_2015,yazdi_zero_2017,jain_log_2021}, suggesting that this discrepancy may be due to differences in infrared regularization.
We leave such an investigation as future work.

\subsection{Mode functions}

The eigenfunctions that we obtained above for $J^\ammaG$ are given by
\begin{widetext}
\begin{align}
  \tilde{v}_n(x) = \begin{cases}
    C_n (x-a_1)^{-\frac12-is_n} (a_2-x)^{-\frac12+is_n} (b_1-x)^{-\frac12+is_n} (b_2-x)^{-\frac12-is_n}, & \text{if $x \in (a_1,a_2)$}, \\
    - C_n (\tilde{x}-a_1)^{-\frac12-is_n} (\tilde{x}-a_2)^{-\frac12+is_n} (\tilde{x}-b_1)^{-\frac12+is_n} (b_2-\tilde{x})^{-\frac12-is_n}, & \text{if $x \in (b_1,b_2)$, where $\tilde{x} := b_1 + b_2 - x$},
  \end{cases}\label{eq:vntilde}
\end{align}
where $C_n$ is a normalization constant.
This function corresponds to the eigenvalue $-i \tilde{\nu}_n = - i \tanh(\pi s_n)$, where $s_n > 0$ is an $s$-root of the conical function $P_{-is-\frac12}((1+\eta)/(1-\eta))$.
The function $\tilde{v}_n^\ast(x)$ corresponds to eigenvalue $i \tilde{\nu}_n$.

The associated dual set of mode functions (i.e., eigenfunctions of $J^\Gamma$) can be obtained using
\begin{align}
  \tilde{u}_n(x) &= i (\Omega \tilde{v}_n^\ast)(x) = \frac{i}{4} \int_{A \cup B} dx' \, \sgn(x-x') \, \tilde{v}_n^\ast(x') \nonumber\\[1ex]
  &= \begin{cases}
    \frac{i}{2} C_n^\ast \int_{a_1}^x dx' \, (x'-a_1)^{-\frac12+is_n} (a_2-x')^{-\frac12-is_n} (b_1-x')^{-\frac12-is_n} (b_2-x')^{-\frac12+is_n}, & \text{if $x \in A$}, \\
    -\frac{i}{2} C_n^\ast \int_{b_1+b_2-x}^{b_2} dx' \, (x'-a_1)^{-\frac12+is_n} (x'-a_2)^{-\frac12-is_n} (x'-b_1)^{-\frac12-is_n} (b_2-x')^{-\frac12+is_n}, & \text{if $x \in B$}.
  \end{cases}
\end{align}
For the part in $A$, we find (after changing variables $x' = (x-a_1) t + a_1$, transforming $t \mapsto (1-t) / \left[ 1 - \left( \frac{x-a_1}{a_2-a_1} \right) t \right]$, and using 3.211 in \cite{gradshteyn_table_2007})
\begin{align}
  \tilde{u}_{nA}(x) = \frac{i}{2} (\tfrac12+is)^{-1} \frac{(a_2-x)(x-a_1)}{(a_2-a_1)} \tilde{v}_{nA}^\ast(x) F_1 \left( 1, \tfrac12 + is, \tfrac12 - is, \tfrac32 + is ; \frac{(b_1-a_2)(x-a_1)}{(a_2-a_1)(b_1-x)}, \frac{(b_2-a_2)(x-a_1)}{(a_2-a_1)(b_2-x)} \right),
\end{align}
where $F_1$ is the Appell hypergeometric function of two variables, and $\tilde{v}_{nA}^\ast(x)$ denotes the part of $\tilde{v}_n^\ast(x)$ supported in $A$.
Similarly, for $B$ we find
\begin{align}
  \tilde{u}_{nB}(x) = \frac{i}{2} (\tfrac12+is)^{-1} \frac{(b_2-\tilde{x})(\tilde{x}-b_1)}{(b_2-b_1)} \tilde{v}_{nB}^\ast(x) F_1 \left( 1, \tfrac12 + is, \tfrac12 - is, \tfrac32 + is ; \frac{(b_1-a_2)(b_2-\tilde{x})}{(b_2-b_1)(\tilde{x}-a_2)}, \frac{(b_1-a_1)(b_2-\tilde{x})}{(b_2-b_1)(\tilde{x}-a_1)} \right),
\end{align}
\end{widetext}
where $\tilde{x} = b_1 + b_2 - x$.

The magnitude of the constant $C_n \in \mathbb{C}$ is fixed by the normalization condition
\begin{align}
  \int_{A \cup B} dx \, \tilde{u}_n(x) \tilde{v}_{n'}(x) = \delta_{nn'}.
\end{align}
The complex phase of $C_n$ is arbitrary, and simply corresponds to a single-mode rotation.

\subsection{Core modes}
\label{sec:smearings}

In \cref{sec:optimal}, we explained how one can use the eigenfunctions of $J^\Gamma$ (symplectic eigenfunctions of $G^\Gamma$) to construct pairs of modes $(u_{nA},u_{nA}^\ast)$ in $A$ and $(u_{nB},u_{nB}^\ast)$ in $B$, where the logarithmic negativity of each pair corresponds to that of one of the symplectic eigenvalues of $G^\Gamma$.
This decomposition requires local symplectic orthogonality of the eigenfunctions of $J^\Gamma$ (corresponding to eigenvalues with $|\tilde{\nu}_n| < 1$) after restriction to each of the intervals, i.e., $\tilde{u}_{nA}^{\ast T} \omega \tilde{u}_{n'A} = i \alpha_n \delta_{nn'}$ and $\tilde{u}_{nB}^{\ast T} \omega \tilde{u}_{n'B} = i \beta_n \delta_{nn'}$ for some $\alpha_n, \beta_n \neq 0$.
We will now demonstrate this for our case.

Instead of working with the mode functions $\tilde{u}_n$, we can equivalently establish this condition from the covectors $\tilde{v}_n$.
Let us split these into parts supported on either $A$ or $B$ as $\tilde{v}_n = \tilde{v}_{nA} + \tilde{v}_{nB}$.
The above local symplectic orthogonality condition is then equivalent to
\begin{align}
  \tilde{v}_{nA}^{\ast T} \Omega \tilde{v}_{n'A} = i \alpha_n \delta_{nn'}, \quad & \quad \tilde{v}_{nB}^{\ast T} \Omega \tilde{v}_{n'B} = i \beta_n \delta_{nn'}, \nonumber\\
  \tilde{v}_{nA}^{T} \Omega \tilde{v}_{n'A} = 0, \qquad \; & \; \qquad \tilde{v}_{nB}^{T} \Omega \tilde{v}_{n'B} = 0.
\end{align}
Below we will show that $\tilde{v}_{nA}^{\ast T} \Omega \tilde{v}_{n'A} = \tilde{v}_{nB}^{\ast T} \Omega \tilde{v}_{n'B}$ and $\tilde{v}_{nA}^T \Omega \tilde{v}_{n'A} = \tilde{v}_{nB}^T \Omega \tilde{v}_{n'B}$.
Combined with the global symplectic orthonormality, $\tilde{v}_{nA}^{\ast T} \Omega \tilde{v}_{n'A} + \tilde{v}_{nB}^{\ast T} \Omega \tilde{v}_{n'B} = i\delta_{nn'}$ and $\tilde{v}_{nA}^T \Omega \tilde{v}_{n'A} + \tilde{v}_{nB}^T \Omega \tilde{v}_{n'B} = 0$, this is sufficient for local symplectic orthogonality with $\alpha_n = \beta_n = \tfrac12$.

\begin{figure*}
\centering
\begin{subfigure}[t]{\linewidth}
\centering
\includegraphics[width=\linewidth]{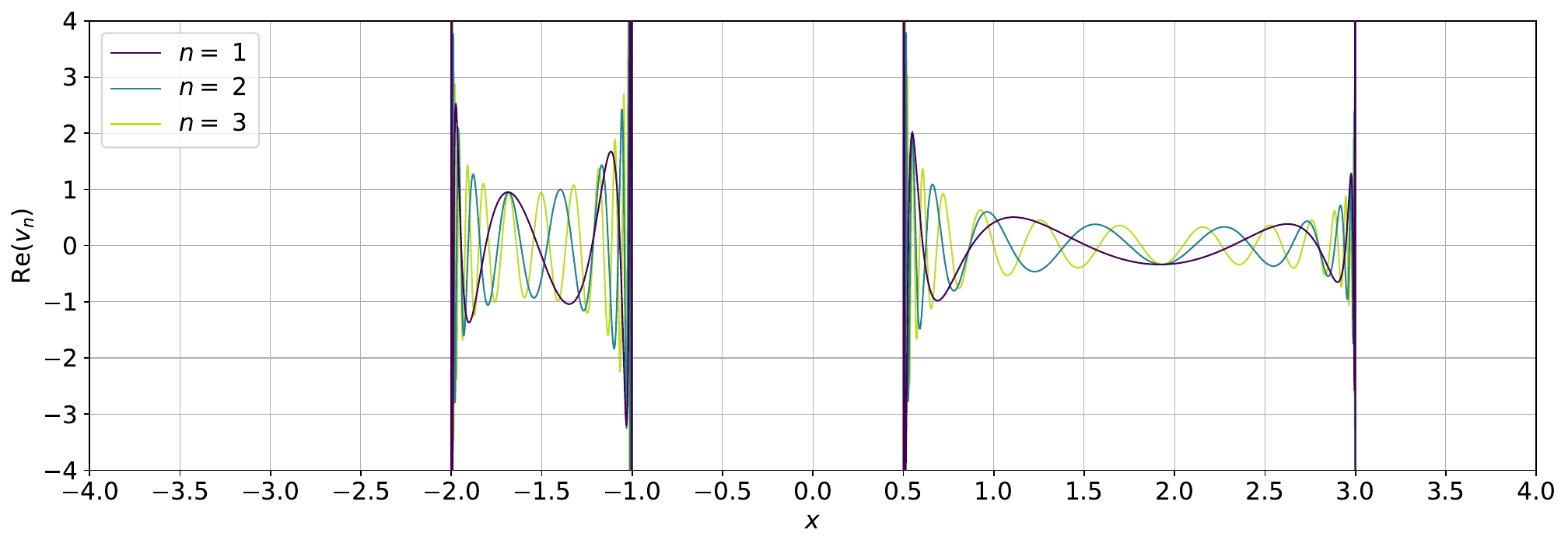}
\end{subfigure}
\begin{subfigure}[t]{\linewidth}
\centering
\includegraphics[width=\linewidth]{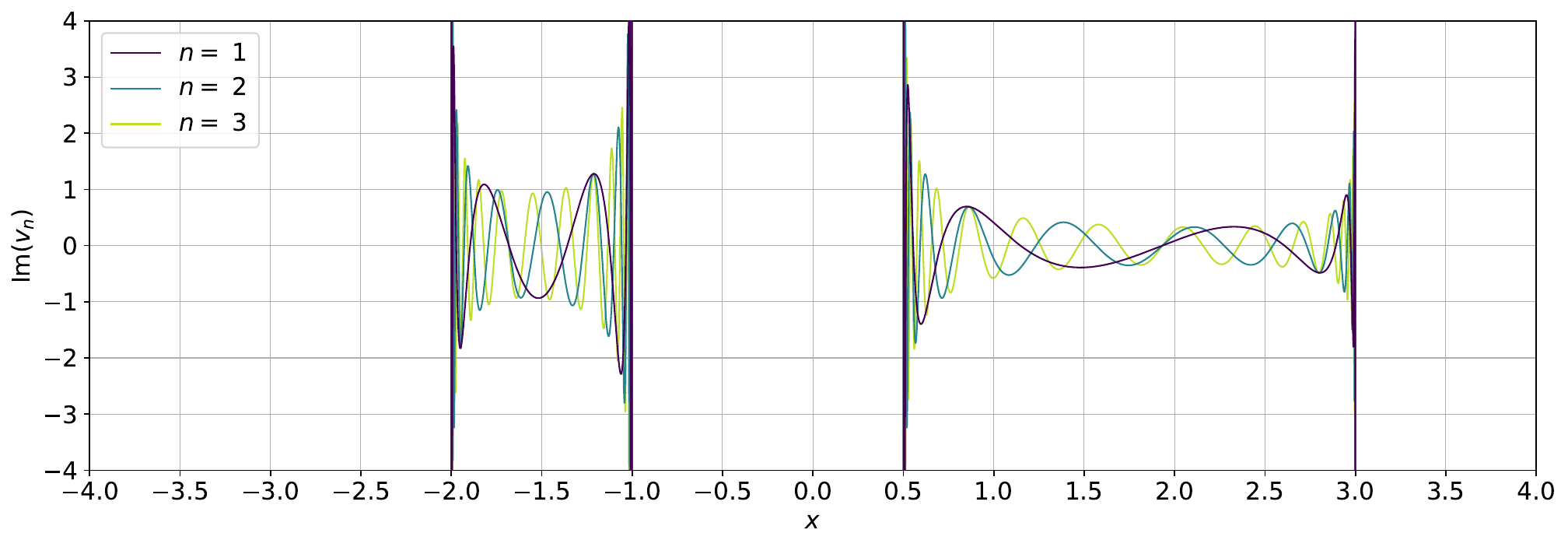}
\end{subfigure}
\caption{Co-phase space functions defining the three mode pairs with the largest contributions to the logarithmic negativity for $A = (-2,-1)$ and $B = (0.5,3)$. The two plots show the real and imaginary parts of $v_n$.}
\label{fig:mode_profiles_vn}
\end{figure*}

\begin{figure*}
\centering
\begin{subfigure}[t]{\linewidth}
\centering
\includegraphics[width=\linewidth]{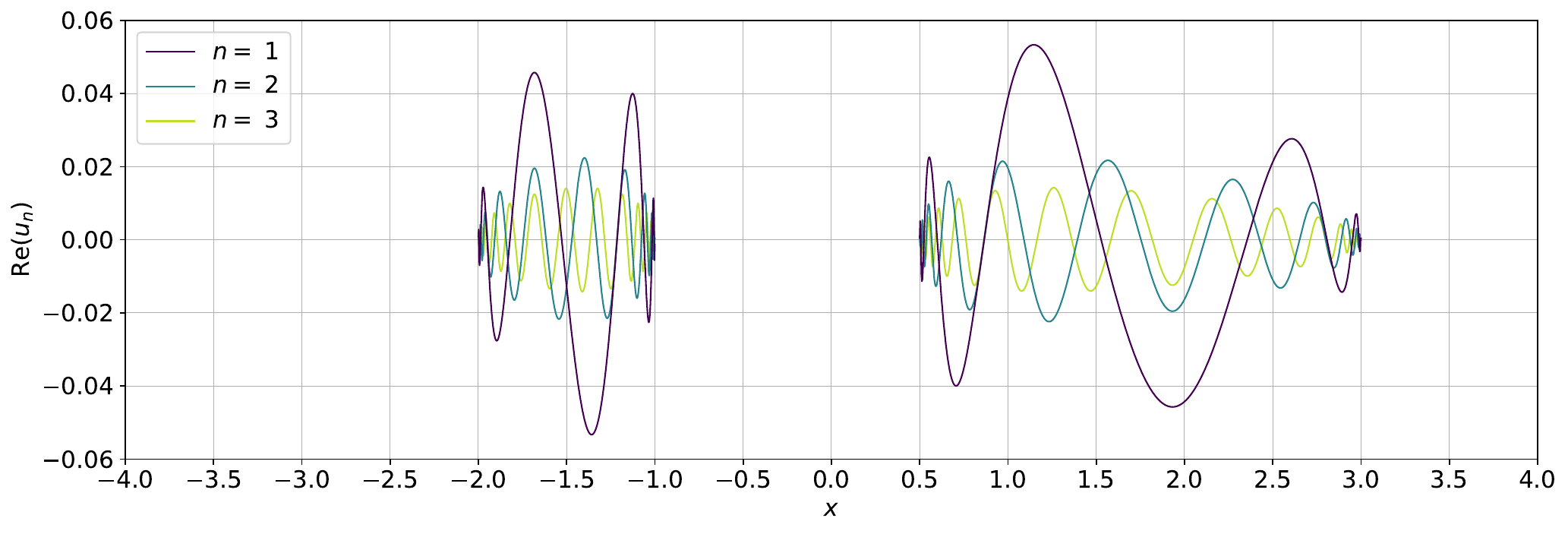}
\end{subfigure}
\begin{subfigure}[t]{\linewidth}
\centering
\includegraphics[width=\linewidth]{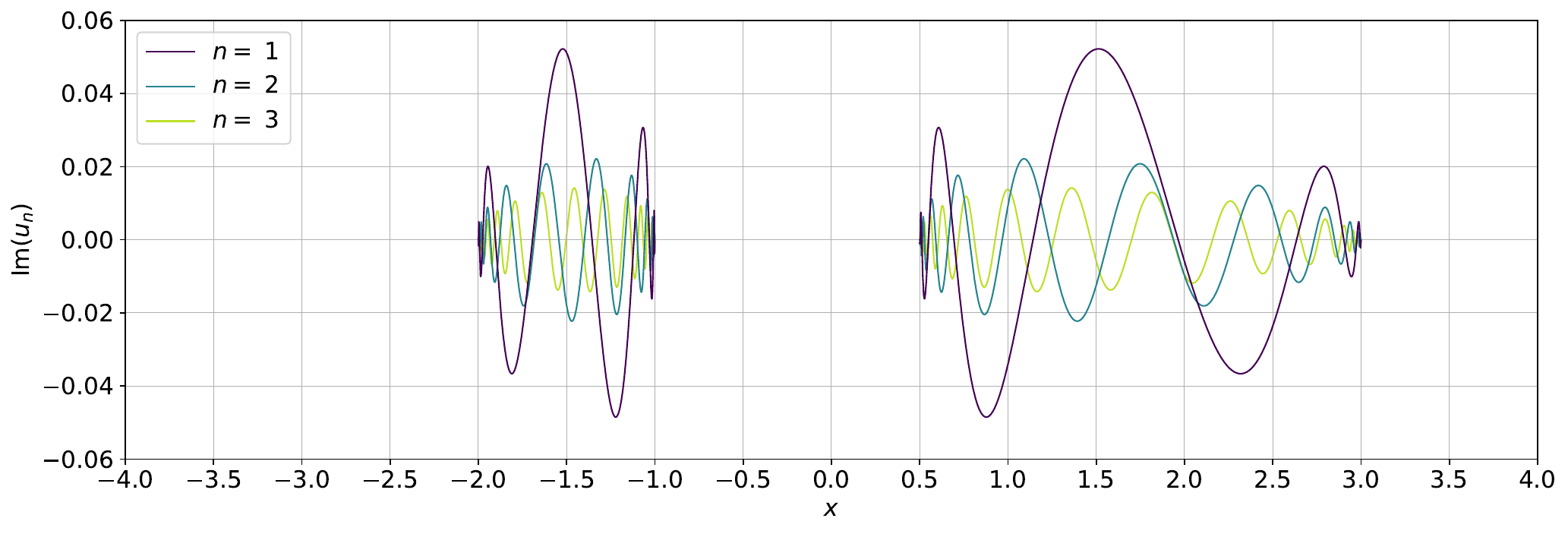}
\end{subfigure}
\caption{Mode functions defining the three mode pairs with the largest contributions to the logarithmic negativity for $A = (-2,-1)$ and $B = (0.5,3)$. The two plots show the real and imaginary parts of $u_n$. The Appell $F_1$ functions were computed using the \texttt{hyper2d} function from the Python \texttt{mpmath} library \cite{mpmath}.}
\label{fig:mode_profiles_un}
\end{figure*}

In \cref{sec:optimal}, we also showed how one can construct the desired pairs of modes in $A$ and $B$ from $\tilde{u}_{nA}$ and $\tilde{u}_{nB}$.
Here we will directly construct the corresponding pairs of smearing functions using (for $\alpha_n, \beta_n > 0$),
\begin{align}
  v_{nA} = \frac{1}{\sqrt{\alpha_n}} \tilde{v}_{nA} \qquad \text{and} \qquad v_{nB} = \frac{1}{\sqrt{\beta_n}} T^T \tilde{v}_{nB}^\ast.
\end{align}
One can then recover the corresponding mode functions using $\Omega$.

Recall, the eigenfunctions of $J^\ammaG$ that we obtained are given in \cref{eq:vntilde}.
To simplify notation, let us write $s_n$ as $s$, $v_n(x)$ as $v_s(x)$, and $\tilde{v}_s^\ast(x) = \tilde{v}_{-s}(x)$.
We can then use $\tilde{v}_s^T \Omega \tilde{v}_{s'}$ with $s \in \mathbb{R}$ to denote both $\tilde{v}_s^T \Omega \tilde{v}_{s'}$ and $\tilde{v}_s^{\ast T} \Omega \tilde{v}_{s'}$ with $s > 0$.
Starting with the interval $A$, we have
\begin{align}
  \tilde{v}_{sA}^T \Omega \tilde{v}_{s'A} = \frac14 \int_{a_1}^{a_2} dx \int_{a_1}^{a_2} dx' \, \sgn(x-x') \tilde{v}_s(x) \tilde{v}_{s'}(x').
\end{align}
After plugging in the expression for $\tilde{v}_s(x)$ in $A$, changing variables to $x = (a_2-a_1) t + a_1$, and applying the transformation $t \mapsto (1-t) / \left[ 1 - \left( \frac{a_2-a_1}{b_2-a_1} \right) t \right]$ (and the same for $x'$ and $t'$), we find
\begin{widetext}
\begin{align}
  \tilde{v}_{sA}^T \Omega \tilde{v}_{s'A} = & -\frac14 C_s C_{s'} (b_1-a_2)^{-1+is+is'} (b_2-a_1)^{-1-is-is'} \int_0^1 dt \int_0^1 dt' \, \sgn(t-t') \nonumber \\
  &\qquad \times t^{-\frac12+is} (1-t)^{-\frac12-is} \left[ 1 - \left( \frac{\eta}{1-\eta} \right) t \right]^{-\frac12+is} (t')^{-\frac12+is'} (1-t')^{-\frac12-is'} \left[ 1 - \left( \frac{\eta}{1-\eta} \right) t' \right]^{-\frac12+is'}.
  \label{eq:lso_integral}
\end{align}
\end{widetext}
For the interval $B$ we apply a similar procedure.
First we undo the transpose operation $x \to b_1 + b_2 - x$ and $x' \to b_1 + b_2 - x'$.
Then, similar to above, we write $x = (b_2-b_1) t + b_1$ and apply the transformation $t \mapsto t / \left[ 1 + \left( \frac{b_2-b_1}{b_1-a_1} \right) (1-t) \right]$ (and similar for $x'$ and $t'$).
We find that this yields the same expression as the integral for $\tilde{v}_{sA}^T \Omega \tilde{v}_{s'A}$, thus establishing local symplectic orthogonality with $\alpha_n = \beta_n = \frac12$.

Therefore, we find that the pairs of modes in $A$ and $B$ which have the same PT symplectic eigenvalues contributing to the negativity as $G^\Gamma$ are defined by the co-phase space functions
\begin{align}
  v_{nA}(x) &= \sqrt{2} C_n (x-a_1)^{-\frac12-is_n} (a_2-x)^{-\frac12+is_n} \nonumber \\
  &\qquad \quad \times (b_1-x)^{-\frac12+is_n} (b_2-x)^{-\frac12-is_n} \\[1ex]
  &= \frac{ \sqrt{2} C_n e^{-i s_n \omega(x)} }{ \sqrt{-(x-a_1)(x-a_2)(x-b_1)(x-b_2)} }, \label{eq:core_mode_A}
\end{align}
for $x \in A = (a_1,a_2)$, and
\begin{align}
  v_{nB}(x) &= -\sqrt{2} C_n^\ast (x-a_1)^{-\frac12+is_n} (x-a_2)^{-\frac12-is_n} \nonumber \\
  &\qquad \quad \times (x-b_1)^{-\frac12-is_n} (b_2-x)^{-\frac12+is_n} \\[1ex]
  &= \frac{ -\sqrt{2} C_n^\ast e^{i s_n \omega(x)} }{ \sqrt{-(x-a_1)(x-a_2)(x-b_1)(x-b_2)} }, \label{eq:core_mode_B}
\end{align}
for $x \in B = (b_1,b_2)$.
Recall $\omega(x) := \ln \left( - \frac{x-a_1}{x-a_2} \frac{x-b_2}{x-b_1} \right)$.
The corresponding mode functions are therefore $u_{nA}(x) = \sqrt{2} \tilde{u}_{nA}(x)$ and $u_{nB}(x) = \sqrt{2} \tilde{u}_{nB}^\ast(b_1 + b_2 - x)$.
We plot the real and imaginary parts of $v_n(x)$ and $u_n(x)$ in~\cref{fig:mode_profiles_vn,fig:mode_profiles_un}, respectively.

We notice from \cref{fig:mode_profiles_vn,fig:mode_profiles_un} that the mode pairs with smaller logarithmic negativity (recall $E_{\mathcal{N},1} > E_{\mathcal{N},2} > E_{\mathcal{N},3}$) exhibit higher frequency oscillations within the intervals.
This can also be seen in~\cref{eq:core_mode_A,eq:core_mode_B}, since $s_n$ increases with $n$.
Recall that the value of $s_n$ also increases as the separation between the two intervals increases (or, more generally, as $\eta$ decreases).
Thus, increasing the separation between the intervals also has the effect of increasing the frequencies of the oscillations of these functions (as well as decreasing the logarithmic negativity, since $E_{\mathcal{N},n} = -\log_2(\tanh(\pi s_n))$).
This effect was observed numerically in~\cite{klco_entanglement_2021}, where it was called a UV-IR connection, since it indicates that the long-range entanglement of the field is carried by high-frequency degrees of freedom within the intervals.
Here we have established this connection explicitly through our analytical formulas for the modes.

Recall from \cref{sec:gaussian} that the annihilation and creation operators associated with a set of normalized mode functions are given by $\hat{a}_n = v_{na} \hat{\xi}^a$ and $\hat{a}_n^\dagger = v_{na}^\ast \hat{\xi}^a$.
Those of our local mode pairs are therefore
\begin{align}
  \hat{a}_{nA} &= \int_{a_1}^{a_2} dx \, v_{nA}(x) \, \hat{\phi}_R(x), \\
  \hat{a}_{nB} &= \int_{b_1}^{b_2} dx \, v_{nB}(x) \, \hat{\phi}_R(x).
\end{align}
The corresponding quadratures in $A$ (similarly for $B$) are
\begin{align}
  \hat{Q}_{nA} &= \int_{a_1}^{a_2} dx \, \sqrt{2} \, \Re(v_{nA}(x)) \, \hat{\phi}_R(x), \\
  \hat{P}_{nA} &= \int_{a_1}^{a_2} dx \, \sqrt{2} \, \Im(v_{nA}(x)) \, \hat{\phi}_R(x).
\end{align}
Hence, $\sqrt{2} \, \Re(v_{nA})$ and $\sqrt{2} \, \Im(v_{nA})$ are the smearing functions of $\hat{\phi}_R(x)$ which can be used to couple to these modes.
These can also be written as smearing functions of $\hat{\phi}(x)$ and $\hat{\pi}(x)$ using the transformations from \cref{sec:qft_phase_space}. 
Note that, unlike the finite-dimensional case of \cref{sec:optimal}, these are distributions over $V_A$.
Therefore, in practice one would need to construct (e.g., smooth) approximations to these functions in order to approach the optimal negativity bounds in \cref{sec:optimal}.

\section{Numerical validation}
\label{sec:numerical}

Here we verify our analytical results for the logarithmic negativity in a numerical model.
We employ the following Hamiltonian as a discrete approximation to the Klein-Gordon field,
\begin{align}
  H=& \frac{\Delta x}2 \times \nonumber\\
  & \sum_{n=0}^{N-1} \left[  \pi(x_n)^2 
  + \left( \frac{\phi(x_n) - \phi(x_{n-1})}{\Delta x} \right)^2 
  \vphantom{\left( \frac{\phi(x_n) - \phi(x_{n-1})}{\Delta x} \right)^2 } 
  + \mu^2 \phi(x_n)^2 \right],
\end{align}
with $[ \phi(x_n), \pi(x_{n'}) ] =  \tfrac{i}{\Delta x}  \delta_{nn'}$ and periodic boundary conditions.
We also temporarily introduce a mass $\mu$ as an infrared regulator.
Let us rewrite the system in terms of the variables $q_n := \phi(x_n)$ and $p_n := \Delta x \, \pi(x_n)$, so that $[ q_n, p_{n'} ] = i\delta_{nn'}$ and
\begin{align}
  H = \frac{1}{2 \Delta x} \sum_{n=0}^{N-1} \left[ p_n^2 + (2 + \mu^2 \Delta x^2) q_n^2 - 2 q_n q_{n-1} \right].
\end{align}
We can diagonalize the Hamiltonian using the following expansions,
\begin{align}
  q_n &= \frac{1}{\sqrt{N}} \sum_{m=0}^{N-1} \frac{1}{\sqrt{2\omega_m}} ( a_m + a_{N-m}^\dagger ) e^{\frac{2 \pi i}{N} m n}, \\
  p_n &= \frac{1}{\sqrt{N}} \sum_{m=0}^{N-1} (-i) \sqrt{\frac{\omega_m}{2}} ( a_m - a_{N-m}^\dagger ) e^{\frac{2 \pi i}{N} m n},
\end{align}
where $[ a_m, a_{m'}^\dagger ] = i \delta_{mm'}$.
Then the Hamiltonian takes the form
\begin{align}
  H = \frac{1}{\Delta x} \sum_{m=0}^{N-1} \omega_m ( a_m^\dagger a_m + \tfrac12 ),
\end{align}
with
\begin{align}
  \omega_m = \sqrt{4 \sin^2 \left( \frac{\pi m}{N} \right) + \mu^2 \Delta x^2}.
\end{align}
The vacuum correlation functions are
\begin{align}
  G^{(q)}_{nn'} &:= \bra{0} \{ q_n, q_{n'} \} \ket{0} = \frac1N \sum_{m=0}^{N-1} \frac{1}{\omega_m} e^{\frac{2 \pi i}{N} m (n-n')}, \\
  G^{(p)}_{nn'} &:= \bra{0} \{ p_n, p_{n'} \} \ket{0} = \frac1N \sum_{m=0}^{N-1} \omega_m e^{\frac{2 \pi i}{N} m (n-n')}.
\end{align}

Now suppose we restrict to a local subsystem $A$ comprised of the lattice sites $n \in [0, N_A]$ (note that we can use $n=0$ as the first site without loss of generality because of translation invariance of the full system).
Recall that in the continuum, the subspace of phase space associated with a local subsystem is given by $C_0^\infty(A) \oplus \mathcal{F}(A)$, and the corresponding space of observables is $\mathcal{F}(A) \oplus C_0^\infty(A)$.
We saw that the vanishing integral condition of $\mathcal{F}(A)$ was essential in determining the eigenvalues of $J^\ammaG$.
Likewise, we will need to impose an analogous condition in the discrete model in order to emulate the continuum model.
Recall that the vanishing integral condition for $\mathcal{F}(A)$ arose because we consider the spatial derivatives $\phi'(x)$ to be observable, rather than $\phi(x)$.
Let us then rewrite the $q_n$'s in subsystem $A$ using the transformation $\vec{q'} = M \vec{q}$, defined by
\begin{align}
  M_{nn'} = \begin{cases}
    \delta_{0n'}, & \text{if $n = 0$}, \\
    \delta_{nn'} - \delta_{n-1,n'}, & \text{if $n \in [1,N_A]$},
  \end{cases}
\end{align}
so that $q_0' = q_0$ and $q_n' = q_n - q_{n-1}$ for $n \geq 1$.
The new $q'$ quadratures then correspond to discrete derivatives, except for the first site $q_0'$.
We can make this a symplectic transformation if we also apply $\vec{p'} = M^{-T} \vec{p}$, which is given by
\begin{align}
  (M^{-T})_{nn'} = \sum_{\ell=n}^{N_A} \delta_{\ell n'},
\end{align}
or $p_n' = \sum_{\ell=n}^{N_A} p_\ell$.
Now, notice that if we trace out the first site $(q_0',p_0')$, then for the $q'$ quadratures we are left with only derivatives, while the removal of $p_0' = \sum_{n=0}^{N_A} p_n = \sum_{n=0}^{N_A} \Delta x \, \pi(x_n)$ can be viewed as a discrete version of the vanishing integral condition in the continuum.
Further, for linear observables of the form $f^T q$ we have $f' = M^{-T} f$, so that $f^T q = f'^T q' = f_0' q_0' + f_1' q_1' + \cdots$.
We then see that removing $q_0'$ also removes $f_0' = \sum_{n=0}^{N_A} f_n$, which is analogous to the vanishing integral condition for the smearing functions of $\phi(x)$ in the continuum.

Therefore, we construct the covariance matrix for the local subsystem $A$ by first restricting $G^{(q)}$ and $G^{(p)}$ to the sites of $A$, transforming to $(q',p')$ variables to get
\begin{align}
  G \equiv \begin{bmatrix}
    M G^{(q)} M^T & 0 \\
    0 & M^{-T} G^{(p)} M^{-1}
  \end{bmatrix},
\end{align}
and then removing the rows and columns associated with $(q_0',p_0')$.
For two intervals $A$ and $B$, we apply the same transformation to each subsystem, i.e., $M_A \oplus M_B \oplus M_A^{-T} \oplus M_B^{-T}$, and remove the first primed site of each interval.
Note that removing these sites also increases the distance and decreases the sizes of the intervals by one site.
Also, for $n,n' \geq 1$, we have
\begin{align}
  (M G^{(q)} M^T)_{nn'} = \frac1N \sum_{m=0}^{N-1} \frac{1}{\omega_m} 4 \sin^2 \left( \frac{\pi m}{N} \right) e^{\frac{2 \pi i}{N} m (n-n')}.
\end{align}
The $m = 0$ drops out of the sum, hence after applying this transformation and removing $(q_0',p_0')$, we can safely take the massless limit $\mu \to 0$.

After constructing the covariance matrix for two intervals $A$ and $B$ as above, we can compute the logarithmic negativity.
Since $(q',p')$ form a canonical basis, we can apply the partial transpose by negating the $p'$ variables in $B$.
Further, $\omega = \Omega^{-1}$ takes its canonical form, so we can easily construct the operator $J^\Gamma = - G^\Gamma \omega$ and numerically compute its eigenvalues.

\begin{figure}[h]
  \centering
  \includegraphics[width=\linewidth]{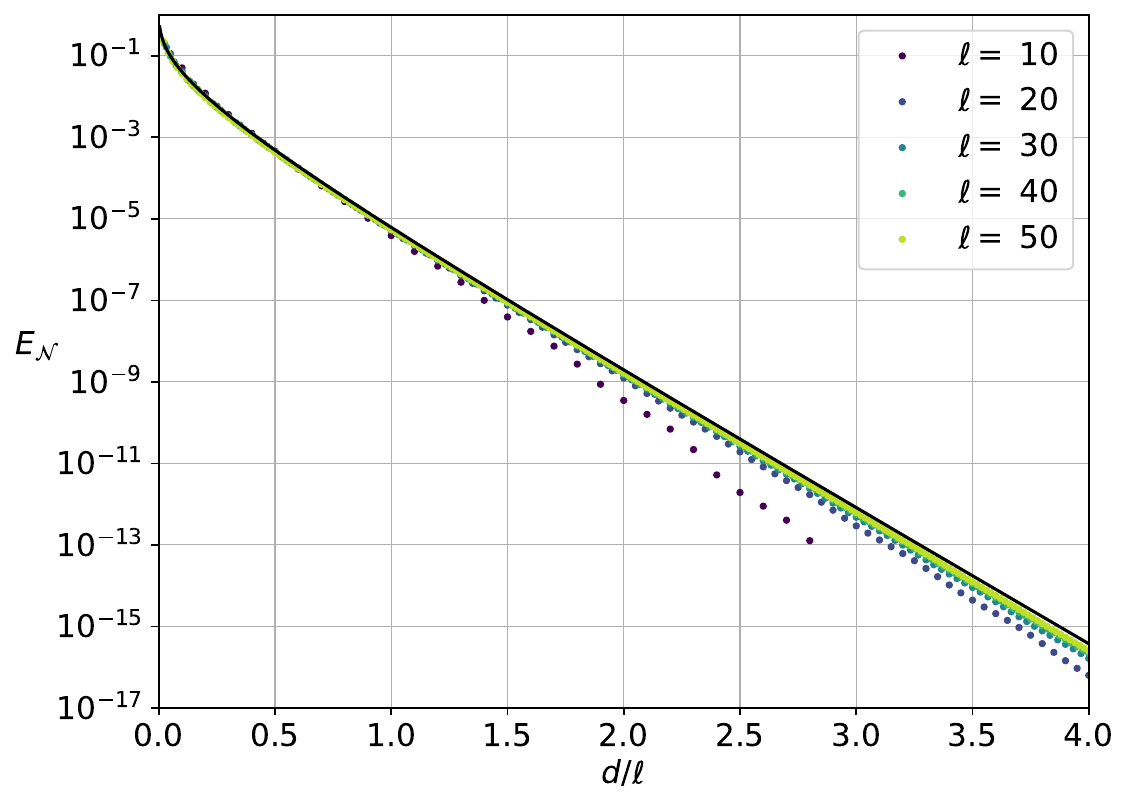}
  \caption{Numerical values of the logarithmic negativity between intervals $A$ and $B$ as a function of their separation. The numerical results are shown with dots. The parameters for this simulation were $N = 20000$, $\ell = N_A = N_B \in \{ 10, 20, 30, 40, 50 \}$, and the separation $d$ ranged from $1$ to $4 \ell$. Our analytical result is shown with a solid black line, which we approximate with $2 E_{\mathcal{N},1}$.}
  \label{fig:numerical_negativity}
\end{figure}
\cref{fig:numerical_negativity} shows the results for the total logarithmic negativity as a function of the separation between the intervals $A$ and $B$.
We see good agreement with our analytical result, which we approximate with the contribution from the mode pair with the largest logarithmic negativity, $E_{\mathcal{N},1}$ (counted twice due to the equal contributions from the right- and left-moving fields).
The numerical simulation was performed with interval sizes $\ell \in \{ 10, 20, 30, 40, 50 \}$ and $d$ ranging from $1$ to $4 \ell$ (for each $\ell$).
Larger values of $\ell$ are a better approximation of the continuum limit of the model, which we see as a better match to our analytical result as $\ell$ increases.
We also note that the numerical results deviate from the analytical result as the separation increases, and this begins to occur at smaller separations for smaller interval sizes $\ell$.
This is consistent with the fact that the mode functions with smaller negativity exhibit higher frequency oscillations (the UV-IR connection), which are less reliably captured by the discretized model for large separations or smaller values of $\ell$.

\begin{figure}[h]
  \centering
  \includegraphics[width=\linewidth]{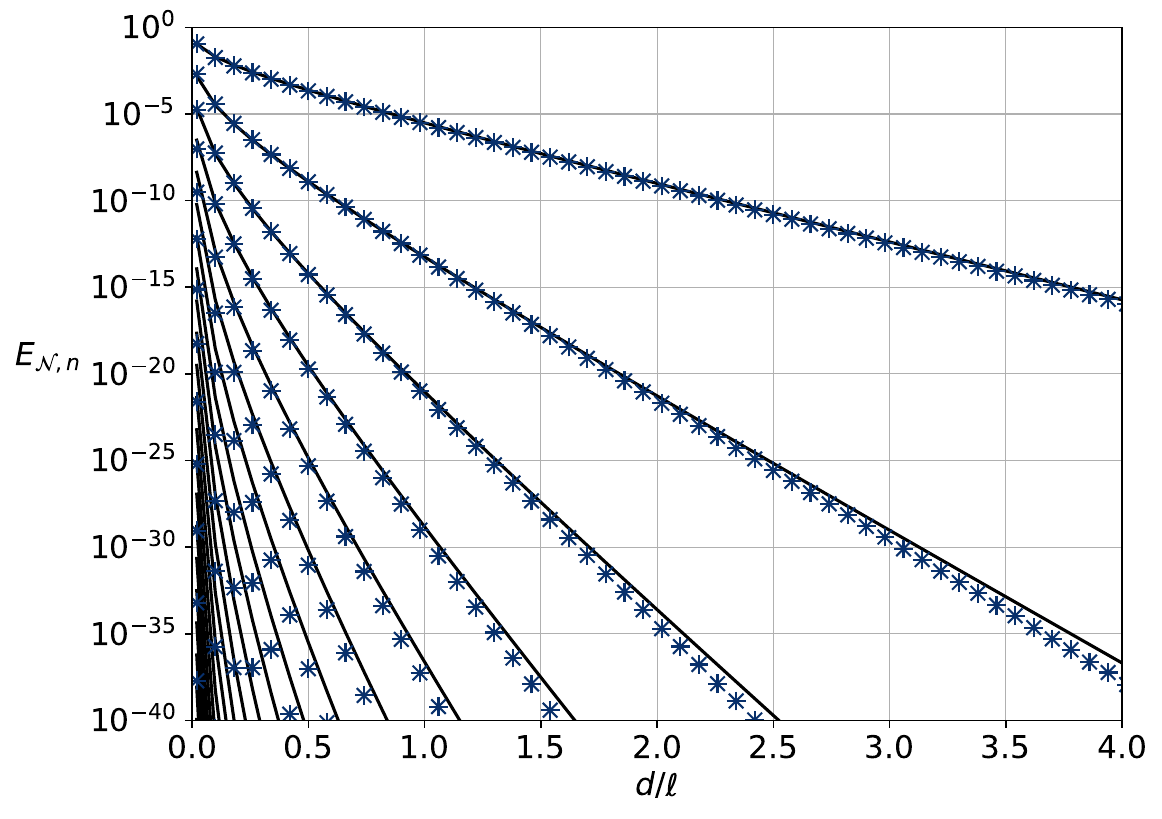}
  \caption{Numerical values of the contribution to logarithmic negativity of each eigenvalue of $J^\Gamma$. The numerical results are shown with `$\times$' or `$+$', and our analytical results with solid black lines. The parameters for this simulation were $N = 20000$, $\ell = N_A = N_B = 50$, and the separation $d$ ranged from $1$ to $4 \ell$.}
  \label{fig:numerical_negativity_permode}
\end{figure}
In~\cref{fig:numerical_negativity_permode}, we separate the contributions to the logarithmic negativity from each of the eigenvalues of $J^\Gamma$ for $\ell = 50$, and compare this to the contributions for our analytical mode pairs, $E_{\mathcal{N},n}$.
We see an accurate matching of each of our analytical modes with the numerical modes, including a twofold degeneracy of the numerical eigenvalues analogous to the equal contributions of the right- and left-moving sectors in the continuum.
We also see a deviation between the analytical and numerical results as the separation increases, which occurs at smaller separation for larger eigenvalues of $J^\Gamma$.
This is consistent with the observation that the mode pairs with smaller negativity exhibit higher frequency oscillations.

\section{Conclusion}
\label{sec:conclusion}

In this paper, we have provided a complete analytical characterization of the bipartite entanglement negativity between two disjoint intervals in a (1+1)-dimensional massless scalar field using Gaussian state methods.
We provided a calculation of the logarithmic negativity, constructed a decomposition of the two interval subsystem into pairs of modes carrying the negativity, as well as gave explicit analytical forms for the corresponding mode functions and associated smearing functions.
These smearing functions directly provide optimal detection profiles for extracting negativity from the field.
From these analytical formulas, we were also able to concretely establish the UV-IR connection from \cite{klco_entanglement_2021}.

The use of a coordinate-free framework for the phase space formalism allowed for the calculation to be performed in the right-left mover representation, which simplified our calculation.
This required us to develop a basis-independent definition of partial transposition in this framework, which we believe could be useful more generally.

Our work opens up a path for many further analytical studies on the structure of entanglement in quantum field theory.
A similar reformulation of the eigendecomposition of $J$ (the complex linear structure induced by the state on phase space) into a boundary value problem has been developed in higher dimensions \cite{arias_anisotropic_2017}, thus it would be natural to modify this approach in a similar manner as in this paper for the diagonalization of $J^\Gamma$ and $J^\ammaG$ (the partial transpose of $J$ and $J^T$, respectively) in higher dimensions.
Further, the framework of \cite{hackl_bosonic_2021} has also been developed for fermionic fields.
One could therefore perform a similar calculation for a (1+1)-dimensional massless fermionic field, which would provide another avenue for computing the negativity Hamiltonian of this system \cite{murciano_negativity_2022} using \cref{eq:negativity_hamiltonian}.
Since the expressions for the negativity Hamiltonian are much simpler for the fermionic field, comparisons could be done more explicitly.
Note that for fermions there are inequivalent definitions of partial transposition~\cite{shapourian_twisted_2019}, and thus one would need to carefully consider how the complex boundary value problem should be modified to implement either of these definitions.

We have analytically characterized the structure of the entanglement which is accessible to local observers (i.e., the negativity core).
In~\cite{klco_entanglement_2023,gao_partial_2024,gao_detecting_2025,gao_finite_2026}, it was shown that there is a rich structure of correlations between the core and halo (i.e., the remaining modes of the $AB$ system) as well as inaccessible entanglement in the halo.
It would therefore be of interest to also compute the halo modes to provide an analytical study of this structure for the (1+1)-dimensional massless scalar field in the continuum.
In principle, our approach can be used to calculate these modes, which correspond to the eigenfunctions of $J^\ammaG$ with $|\tilde{\nu}| \geq 1$.
One could also compare this decomposition to other mode decompositions, such as~\cite{wolfNotSoNormalModeDecomposition2008}.

It would also be interesting to further study the extraction of the entanglement from the negativity core modes and, for example, to address the energy cost of entanglement extraction~\cite{hackl_minimal_2019}.
Given the distributional and highly oscillatory nature of the smearing functions defining the core modes, it would be useful to examine the sensitivity of the extraction to approximations of these profiles.

For our calculation, we regulate the infrared divergence of the (1+1)-dimensional massless scalar field by assuming that the observers only couple to derivatives of the field, which are the quantities which contribute to the energy of the system.
It would be of interest to study the impact of alternative regularizations.
In particular, we would like to investigate the source of the discrepancy in the coefficient of the subleading double-logarithmic divergence from that of~\cite{{calabrese_entanglement_2012,calabrese_entanglement_2013,arias_entanglement_2026}} for the case of two adjacent intervals, which we expect arises from differing regularizations.
One approach to analyzing the effect of alternate regularization schemes would be to study different numerical models.
For example, in the work of~\cite{marcovitch_critical_2009}, studying the logarithmic negativity for a harmonic chain,
a small mass parameter is introduced to regulate the infrared (as opposed to our vanishing integral condition).
In reproducing their results, we found that in this system the spectrum of $J^\Gamma$ does not exhibit a twofold degeneracy in its eigenvalues, suggesting a persistent breaking of chiral symmetry even in the massless limit.
In contrast, our numerical model does exhibit this degeneracy.
(Recall that this is expected for the (1+1)-dimensional massless field in the continuum limit, since the right- and left-moving modes are decoupled.)
We plan to provide an in-depth comparison between different numerical models in follow-up work, studying how these different infrared regularizations of the (1+1)-dimensional massless scalar field impact the spectrum of $J$ and $J^\Gamma$, and particularly the effect on the degeneracy of the eigenvalues in the continuum limit.

\begin{acknowledgments}
JP and RHJ gratefully acknowledge support by the Wenner-Gren Foundations and the Wallenberg Initiative on Networks and Quantum Information (WINQ).
Nordita is supported in part by NordForsk.
\end{acknowledgments}

\bibliography{main.bib}

\end{document}